\theoremstyle{theorem}
\newtheorem{lemma}{Lemma}
\newtheorem{theorem}{Theorem}
\newtheorem{remark}{Remark}
\newtheorem{corollary}{Corollary}
\newtheorem{example}{Example}
\tikzset{
  node distance=6em,
  >=stealth',
  shadow/.style		= {opacity=.25, black, shadow xshift=0.08, shadow yshift=-0.08},
  shadowt/.style		= {opacity=.25, black, shadow xshift=0.05, shadow yshift=-0.05},
  parshadow/.style	= {opacity=.25, black, shadow xshift=0.04, shadow yshift=-0.04},
  plainnode/.style 	= {draw, ultra thick, fill=gray!10},
  pl0/.style			= {circle, minimum size=9mm, plainnode, drop shadow = {shadow}},
  pl1/.style			= {rounded corners = 1, minimum size=9mm, plainnode, drop shadow = {shadow}},
  pl0s/.style			= {circle, minimum size=9mm, plainnode, drop shadow = {shadowt}, scale=.5},
  pl1s/.style			= {rounded corners = 1, minimum size=9mm, plainnode, drop shadow = {shadowt}, scale=.5},
  parpl0/.style		= {ellipse , text centered, minimum size=9mm,inner sep=-1pt, plainnode, drop shadow = {parshadow}},
  parpl1/.style		= {rectangle, minimum size=0mm, inner sep = 4pt,minimum size=9mm, text centered, plainnode, drop shadow = {parshadow}},
  ac/.style			= {double},
  w0/.style			= {fill=blue!30},
  w1/.style			= {fill=red!30},
  r/.style			= {bend left=18},
  l/.style			= {bend right=18},
  every picture/.style=thick
}
\newcommand{\myquot}[1]{``#1''}
\renewcommand{\epsilon}{\varepsilon}
\newcommand{\bigo}[0]{\mathcal{O}}
\newcommand{\card}[1]{|#1|}
\newcommand{\size}[1]{|#1|}
\newcommand{\set}[1]{\{ #1 \}}
\newcommand{\nats}[0]{\mathbb{N}}
\newcommand{\initstate}[0]{I}
\newcommand{\aut}{\mathfrak{A}}
\newcommand{\arena}{\ensuremath{\mathcal{A}}}
\newcommand{\game}{\ensuremath{\mathcal{G}}}
\newcommand{\win}{\ensuremath{\mathrm{Win}}}
\newcommand{\weight}[0]{w}
\newcommand{\val}{\text{val}}
\newcommand{\ranking}{r}
\newcommand{\rankings}{\mathcal{R}}
\newcommand{\natsclosed}{\overline{\nats}}
\newcommand{\lift}{\ell}
\newcommand{\col}{c}
\newcommand{\mem}{\mathcal{M}}
\newcommand{\init}{\text{init}}
\newcommand{\update}{\mathrm{upd}}
\newcommand{\nxt}{\mathrm{Nxt}}
\newcommand{\ext}{\mathrm{ext}}
\newcommand{\complete}{\mathrm{cmplt}}
\newcommand{\limitlift}{\ell_L}
\newcommand*{\email}[1]{\href{mailto:#1}{\nolinkurl{#1}} } 
\title{Optimal Strategies in Weighted Limit Games\thanks{We would like to thank Alexander Weinert for fruitful discussions leading to this work and the reviewers for their detailed feedback, which considerably improved the paper.}}
\author{
\textbf{Aniello Murano}\\
Università degli Studi di Napoli “Federico II”, Naples, Italy\\
\email{murano@na.infn.it}\bigskip\\
\textbf{Sasha Rubin}\\
The University of Sydney, Sydney, Australia\\
\email{sasha.rubin@sydney.edu.au}\bigskip\\
\textbf{Martin Zimmermann}\\
University of Liverpool, Liverpool, UK\\
\email{martin.zimmermann@liverpool.ac.uk}
}
\date{\vspace{-4ex}}
\begin{document}
\maketitle

\begin{abstract}
We prove the existence and computability of optimal strategies in weighted limit games, zero-sum infinite-duration games with a Büchi-style winning condition requiring to produce infinitely many play prefixes that satisfy a given regular specification. Quality of plays is measured in the maximal weight of infixes between successive play prefixes that satisfy the specification.
\end{abstract}	


\section{Introduction}
\label{section-intro}
Reactive synthesis is an ambitious approach to the problem of producing correct controllers for reactive systems, e.g., systems continuously interacting with their environment over an infinite time horizon. 
Instead of an engineer coding the controller and then checking it for correctness against a formal specification, one automatically computes a correct-by-construction controller from the specification.

The basic case of the problem, formalized as Church's problem~\cite{Church63}, has been solved by the seminal Büchi-Landweber Theorem~\cite{BuechiLandweber69}. Here, the problem is recast as a game-theoretic one: 
Given a finite graph describing the interaction between the desired controller and its environment, and a winning condition representing the controller's specification, determine whether the \myquot{controller player} has a winning strategy for this game. 
If yes, Büchi and Landweber proved that she has a finite-state winning strategy, e.g., one that can be implemented by a finite automaton with output. 
Such a strategy can be seen as a controller that satisfies the specification. We refer to these lecture notes~\cite{F16} and the references therein for a contemporary overview of reactive synthesis.

Ever since the seminal work of Büchi and Landweber, their result has been extended in various directions, e.g., more expressive winning conditions, infinite state spaces, stochastic settings, settings with imperfect information, etc. 
All these are motivated by the quest to model ever more aspects of relevant application domains.

Recently, another aspect has received considerable attention: 
Oftentimes, specifications are qualitative but some controllers are more desirable than others. 
Consider, for example, a controller that has to bring a system into a desirable state. Then, it is often desired, although not formally specified, that the state is reached as quickly as possible or with the minimal amount of resource consumption.
Much effort has been put into computing controllers that satisfy such \myquot{nonfunctional} requirements. 

But not every specification is a reachability property. As another example, assume we need to generate an arbiter that controls access to some shared resource. 
A typical specification here is to require that every request to the resource is eventually granted~\cite{WallmeierHT03}.
Again, we typically prefer controllers that grant requests as quickly as possible. 
Note that this specification is not a simple reachability property that requires to reach a certain set of states, but a recurrence property that requires to infinitely often reach a state in which no request is pending. 
The optimization criterion then asks to minimize the maximal time between visits to such states. 

Formally, recurrence properties are captured by Büchi games (see, e.g.,~\cite{GraedelThomasWilke02}), i.e., games whose underlying graphs come with a set of desirable vertices that need to be visited infinitely often for the controller player to win. 
In this work, we consider a slightly different approach:
We equip the graph describing the interaction with labels on vertices (think of the set of atomic propositions holding true in this state) and the edges with nonnegative weights (capturing the cost or time it takes to make this transition). 
The winning condition is induced by a deterministic finite automaton processing finite label sequences and is satisfied by an infinite play if it has infinitely many prefixes whose label sequence is accepted by the automaton.\footnote{It is not hard to reduce this setting to the one of classical Büchi games by taking the product of the graph and the automaton.}
Now, the quality of a play is measured as the maximal weight of an infix between two successive prefixes whose label sequences are accepted by the automaton.
Finally, the quality of a strategy is obtained by maximizing over the values of the plays that are consistent with~it. 

By separating the graph modeling the interaction and the specification automaton, we obtain a fine-grained analysis of the complexity of computing controllers and the complexity of implementing controllers (measured in their number of states).
In detail, our contributions are as follows:
\begin{enumerate}
	\item\label{item} We show that every such game has an optimal strategy for the controller player. 
	To prove the strategy optimal, we also show that the player representing the environment always has an optimal strategy as well, i.e., a strategy that maximizes the weight between prefixes that have a label sequence that is accepted by the automaton.
	Both strategies are obtained by a nested fixed-point characterization that generalizes the classical algorithm for solving Büchi games (see, e.g.,~\cite{CHP08}). 
	The inner fixed point is a characterization of optimal strategies in reachability games, which we use as blackbox in the outer fixed point characterization for recurrence conditions. 
			
	\item The fixed point (and the optimal strategies) can be computed in time~$\bigo(\size{V}^3 \cdot \size{E} \cdot \size{Q}^2 \cdot \size{F}^2 ) $, where $(V,E)$ is the underlying graph and $Q$ and $F$ are the sets of states and accepting states of the automaton. Here, we use the unit-cost model for arithmetic operations.
	
	\item The size of optimal strategies is bounded by $\size{V}\cdot\size{Q}\cdot\size{F}$ which is tight up to a factor of $\size{F}$. 
	
	\item The value of an optimal strategy is bounded by~$(\size{V} \cdot \size{Q} +1)\cdot W$, if it is finite at all, where $W$ is the largest weight appearing in the graph. This upper bound is shown to be tight.
	
	\item Finally, we briefly consider the case of infinite state systems. In finite graphs, if there is any controller, then there is also one with finite value. We give a very simple infinite graph in which this is no longer the case: There is a controller, but none of finite value.

\end{enumerate}

Let us stress that the results for reachability games mentioned in Item~\ref{item}) are not novel and follow from stronger results (see, e.g.,~\cite{DBLP:conf/concur/BrihayeGHM15,DBLP:journals/mst/KhachiyanBBEGRZ08}). However, we were unable to locate a reference for all the properties we require of our blackbox. Hence, for the sake of completeness, we present the construction for reachability as well, which also serves as a gentle introduction to the machinery necessary for recurrence.


\section{Definitions}
\label{section-defs}
Let $\nats$ denote the nonnegative integers and define $\natsclosed = \nats \cup \set{\infty}$ with $n < \infty$ and $n + \infty = \infty$ for every $n \in \nats$.
Given a finite directed graph~$(V, E)$ and $v \in V$, let $vE = \set{v' \in V \mid (v,v') \in E}$ denote the set of successors of a vertex~$v$.

\paragraph*{Finite Automata}

A deterministic finite automaton (DFA) $\aut = (Q, C, q_\initstate, \delta, F)$ consists of a finite set~$Q$ of states containing the initial state~$q_\initstate \in Q$ and the accepting states~$F \subseteq Q$, a finite set~$C$ of colors which we use as input letters, and a transition function~$\delta \colon Q \times C \rightarrow Q$. Let $\delta^*(w)$ denote the unique state that is reached by processing $w \in C^*$, i.e., $\delta^*(\epsilon) = q_\initstate$ for the empty word~$\epsilon$ and $\delta^*(w_0 \cdots w_j w_{j+1}) = \delta( \delta^*(w_0 \cdots w_j) , w_{j+1})$ for a nonempty word~$w_0 \cdots w_jw_{j+1} \in C^+$. The language of $\aut$ is $L(\aut) = \set{w \in C^* \mid \delta^*(w) \in F}$. The size of $\aut$ is defined as $\size{\aut} = \size{Q}$.

\paragraph*{Infinite Games}

Let us fix a finite nonempty set~$C$ of colors. A (weighted and colored) arena~$\arena = (V, V_0, V_1, E, \weight, \col)$ consists of a finite directed graph~$(V, E)$ whose vertices are partitioned into the vertices~$V_0$ of Player~$0$ (drawn as circles) and the vertices~$V_1$ of Player~$1$ (drawn as rectangles), a weight function~$\weight \colon E \rightarrow \nats$ (drawn as edge labels), and a coloring~$\col\colon V \rightarrow C$ (drawn as vertex labels). We require every vertex to have an outgoing edge.
A game~$\game = (\arena, \win)$ consists of an arena~$\arena$ and a (qualitative) winning condition~$\win \subseteq C^\omega$. 

A play in $\game$ is an infinite path~$\rho = v_0v_1v_2 \cdots \in V^\omega$ through $(V,E)$. 
We lift the weight function to plays and play prefixes by adding up the weights of the edges of the play (prefix). Similarly, we lift the coloring to plays and play prefixes by applying it vertex-wise. A play~$\rho$ is winning for Player~$0$ in $\game$, if $\col(\rho) \in \win$; otherwise, it is winning for Player~$1$.
 
A strategy for Player~$i \in \set{0,1}$ is a map~$\sigma \colon V^*V_i \rightarrow V$ satisfying $(v_j,\sigma(v_0\cdots v_j)) \in E$ for every $v_0 \cdots v_j \in V^*V_i$. 
A strategy~$\sigma$ for Player~$i$ is positional, if we have $\sigma(wv) = \sigma(v)$ for every $w \in V^*$ and every $v \in V_i$. We denote such strategies w.l.o.g.\ as mappings from $V_i$ to $V$.

A play~$v_0 v_1 v_2 \cdots$ is consistent with a strategy~$\sigma$ for Player~$i$, if $v_{j+1} = \sigma(v_0 \cdots v_j)$ for every $j$ with $v_j \in V_i$. 
A strategy for Player~$i$ is winning from a vertex~$v$ if every play that starts in $v$ and is consistent with the strategy is winning for Player~$i$.

\paragraph*{Memory Structures and Finite-state Strategies}

A {memory structure}~$\mem = (M, \init, \update)$ for an arena $(V, V_0, V_1, E, \weight, \col)$ consists of a finite set~$M$ of memory states, an initialization function $\init\colon V \rightarrow M$, and an update function~$\update\colon M \times V \rightarrow M$.
The update function can be extended to finite play prefixes in the usual way: $\update^*(v) = \init(v)$ and $\update^*(w v) = \update(\update^*(w ), v)$ for $w \in V^*$ and $v \in V$.
A next-move function $\nxt \colon V_i \times M \rightarrow V$ for Player~$i$ has to satisfy $(v, \nxt(v, m)) \in E$ for all $v \in V_i$ and $m \in M$.
It induces a strategy~$\sigma$ for Player~$i$ with memory~$\mem$ via $\sigma(v_0\cdots v_j) = \nxt(v_j, \update^*(v_0 \cdots v_j))$.
A strategy is called {finite-state} if it can be implemented by a memory structure.
We define $\card{\mem} = \card{M}$.
Slightly abusively, we say that the size of a finite-state strategy is the size of a memory structure implementing it.

An arena $\arena = (V, V_0, V_1, E, \weight, \col)$ and a memory structure $\mem = (M, \init, \update)$ for $\arena$ induce the expanded arena\label{def-autmem} $\arena\times\mem = (V \times M, V_0 \times M, V_1 \times M, E', \weight', \col')$ where~$E'$ is defined via $((v,m), (v',m')) \in E'$ if and only if $(v,v') \in E$ and $\update(m, v' ) = m'$. Furthermore, $\weight'((v,m),(v',m')) = \weight(v,v')$ and $\col'(v,m) = \col(v)$.
Every play $\rho = v_0 v_1 v_2\cdots$ in $\arena$ has a unique extended play $\ext(\rho) = (v_0, m_0) (v_1, m_1)
(v_2, m_2) \cdots$ in $\arena \times \mem$ defined by $m_0 = \init(v_0)$ and $m_{j+1} = \update(m_j, v_{j+1})$, i.e., $m_j = \update^*(v_0 \cdots v_j)$. The extended play of a finite play prefix in $\arena$ is defined analogously. Note that a play (prefix) and its extension have the same weight and the same color sequence.

Given a positional strategy $\sigma'$ for Player~$i$ in $\arena \times \mem$, define the finite-state strategy~$\sigma$ for Player~$i$ in $\arena$ by specifying the next-move function~$\nxt_{\sigma'}$ with $\nxt(v,m) = v'$, where $v' \in V$ is the unique vertex with $\sigma'(v,m) = (v',m')$ for some $m' \in M$.

\begin{remark}
\label{remark-extendingplayspreservesconsistency}
Let $\sigma$ and $\sigma'$ be as above and let $\rho$ a play in $\arena$. Then, $\rho$ is consistent with $\sigma$ if and only if $\ext(\rho)$ is consistent with $\sigma'$.
\end{remark}

Now let $\mem = (M, \init, \update)$ be a memory structure for the arena~$\arena = (V, V_0, V_1, E, \weight,\col)$ and let $\sigma'$ be a finite-state strategy for Player~$i$ in $\arena \times \mem = (V', V_0, V_1', E',\weight', \col')$ implemented by $\mem' = (M', \init', \update')$ and $\nxt'$. We define the product of $\mem$ and $\mem'$ as
$\mem \times \mem'  = (M \times M', \init'', \update'')$ where $\init''(v) = (\init(v), \init'(v, \init(v)))$ and 
\[\update''((m,m'),v) = (\update(m,v),  \update'(m', (v,\update(m,v)))   ) ,\]
which is a memory structure for $\arena$. Further, we obtain a finite-state strategy~$\sigma$ for Player~$i$ in $\arena$ implemented by $\mem \times \mem'$ and $\nxt$, which is defined as $\nxt(v, (m,m')) = \nxt'((v,m),m')$.
 
\begin{remark}
\label{remark-extendingplayspreservesconsistency-onsteroids}
Let $\sigma$ and $\sigma'$ be as above and let $\rho$ a play in $\arena$. Then, $\rho$ is consistent with $\sigma$ if and only if $\ext(\rho)$ is consistent with $\sigma'$, where $\ext(\rho)$ is defined with respect to $\mem$.
\end{remark}

Let $\arena$ be an arena with vertex set~$V$ and coloring~$\col \colon V \rightarrow C$, and let $\aut = (Q, C, q_\initstate, \delta, F)$ be a DFA over $C$. Then, we define $\mem_\aut = (Q, \init_\aut, \update_\aut)$ with $\init_\aut(v) = \delta(q_\initstate, c(v))$ and $\update_\aut(q,v) = \delta(q,\col(v))$, which is a memory structure for $\arena$. By construction, we have $\update^*(v_0 \cdots v_j) = \delta^*(\col(v_0 \cdots v_j))$. In particular, $\col(v_0 \cdots v_j) \in L(\aut)$ if and only if $\update^*(v_0 \cdots v_j) \in F$.\label{page-autmem}

\section{Weighted Limit Games}
\label{section-limit}
Recall that $C$ is the finite set of colors used to define winning conditions. The limit of a language~$K \subseteq C^*$ of finite words is
 \[\lim(K) = \set{\alpha_0 \alpha_1 \alpha_2\cdots \in C^\omega \mid \alpha_0 \cdots \alpha_j \in K \text{ for infinitely many }j}\]
 containing all infinite words that have infinitely many prefixes in $K$.
For technical reasons, we require in the following $\epsilon\notin K$.

We call a game of the form~$\game = (\arena, \lim(K))$ a weighted limit game and define the value of a play~$\rho = v_0 v_1 v_2 \cdots $ as
\[
\val_\game(\rho) = \sup_{j \in \nats} \min_{\substack{j' > j \\ \col(v_0\cdots v_{j'}) \in K}} \weight(v_j \cdots v_{j'}),
\]
where $\min \emptyset = \infty$. 
Intuitively, we measure the quality of a winning play by the maximal weight of an infix between two consecutive prefixes whose color sequences are in $K$.  
Note that this value might be $\infty$, even for plays in $\lim(K)$ (see Example~\ref{example-limit-valueunboundedness}) and that it is necessarily $\infty$ if the play is not in $\lim(K)$. Also, let us remark that this definition depends on $K$, not only on $\lim(K)$: It is straightforward to construct languages~$K$ and $K'$ with $\lim(K) = \lim(K')$, but the value functions induced by $K$ and $K'$ differ. Hence, we always make sure that the language~$K$ inducing the value function is clear from context.

\begin{remark}
\label{remark-limit-valuegeneralizeswinning-play}
Let $\game = (\arena, \win)$ be a weighted limit game. Then, $\val_\game(\rho) < \infty$ implies $\col(\rho) \in \win$.	
\end{remark}

Note that the other direction does not hold, as shown in the next example.

\begin{example}
\label{example-limit-valueunboundedness}
For the sake of simplicity, we identify vertices and their color in this example. Hence, 
let $K = \set{v_0,v_1}^* v_1$. Then, $ \lim(K)$ is the set of words having infinitely many occurrences of $v_1$. Now, in a game~$\game$ with winning condition~$\lim(K)$ and a weight function mapping every edge to $1$, $\val_\game(\rho)$ is equal to the supremum over the length of infixes of the form~$v_1v_0^*$ in $\rho$. This may be $\infty$, even if the play~$\rho$ is in $\lim(K)$, e.g., in the play 
\[
\rho= v_0\,
v_1\,
v_0v_0\,
v_1\,
v_0v_0v_0\,
v_1\,
v_0v_0v_0v_0\,
v_1\,
v_0v_0v_0v_0v_0\,
v_1\,
\cdots
 .\]
\end{example}

Given a strategy~$\sigma$ for Player~$0$ and a vertex~$v$, define $\val_\game(\sigma,v) = \sup_{\rho} \val_\game(\rho)$ with the supremum ranging over all plays~$\rho$ that start in $v$ and are consistent with $\sigma$. 
Remark~\ref{remark-limit-valuegeneralizeswinning-play} can be lifted from plays to strategies. 

\begin{remark}
Let $\game = (\arena, \win)$ be a weighted limit game and let $\sigma$ be a strategy for Player~$0$. Then, $\val_\game(\sigma, v) < \infty$ implies that $\sigma$ is a winning strategy for Player~$0$ from $v$ in $\game$.
\end{remark}

Again, the other direction of the implication does not hold, which can be seen by constructing a one-player game where Player~$0$ produces the play from Example~\ref{example-limit-valueunboundedness}.

We say that a strategy~$\sigma$ for Player~$0$ in a weighted limit game~$\game$ is optimal, if it satisfies $\val_\game(\sigma, v) \le \val_\game(\sigma',v)$ for every strategy~$\sigma'$ for Player~$0$ and every vertex~$v$. Note that this definition is a global one, i.e., the strategy has to be better than any other strategy from \emph{every} vertex.

Further, a weighted limit game with winning condition~$\win \subseteq C^\omega$ is regular, if $\win = \lim(L(\aut))$ for some DFA~$\aut$ over $C$. Note that every such language is $\omega$-regular, (in fact it is recognized by $\aut$ when seen as Büchi automaton). In contrast, not every $\omega$-regular language is a regular limit language, e.g., the $\omega$-regular language $(a+b)^*b^\omega $ of words with finitely many $a$ is not a regular limit language. In fact, Landweber showed that the regular limit languages are exactly the languages recognized by deterministic Büchi automata~\cite{DBLP:journals/mst/Landweber69}.

Our main results on regular weighted limit games show that Player~$0$ has an optimal strategy in every such game and how to compute an optimal strategy. 

\begin{theorem}
\label{thm-limit-main}\hfill
\begin{enumerate}
	\item Player~$0$ has an optimal finite-state strategy in every regular weighted limit game.
	\item The problem \myquot{Given an arena~$\arena$ and a DFA~$\aut$, compute an optimal strategy for Player~$0$ in $(\arena, \lim(L(\aut)))$} is solvable in time~$\bigo(\size{V}^3 \cdot \size{E} \cdot \size{Q}^2 \cdot \size{F}^2 ) $, where $(V,E)$ is the graph underlying $\arena$ and $Q$ and $F$ are the sets of states and accepting states of $\aut$ (using the unit-cost model).
\end{enumerate}
\end{theorem}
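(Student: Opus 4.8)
The plan is to reduce the regular weighted limit game to a weighted recurrence game on an expanded arena and then solve the latter by a nested fixed point whose inner component is an optimal min-cost reachability solver and whose outer component generalises the attractor-based algorithm for Büchi games (as in~\cite{CHP08}).

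First I would move from $\game = (\arena, \lim(L(\aut)))$ to the expanded arena $\arena\times\mem_\aut$, where $\mem_\aut$ is the DFA-induced memory structure of page~\pageref{page-autmem}. Since a play prefix has its colour sequence in $L(\aut)$ exactly when the corresponding extended play currently occupies a vertex of $T := V\times F$, the condition $\lim(L(\aut))$ becomes the Büchi-style requirement \myquot{visit $T$ infinitely often}, and because extension preserves weights and colours the value $\val$ of every play is preserved as well. Using Remark~\ref{remark-extendingplayspreservesconsistency-onsteroids}, a finite-state strategy in $\arena\times\mem_\aut$ pulls back to a finite-state strategy in $\arena$ (its memory being the product of $\mem_\aut$ with the memory used below); consequently it suffices to produce, in $\arena\times\mem_\aut$, a strategy optimal from every vertex for the value measured as the supremum of the weights of infixes between successive visits to~$T$.

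Next I would set up the inner blackbox: in any weighted arena with target~$T$, the min-cost reachability game in which Player~$0$ minimises and Player~$1$ maximises the weight accumulated before first reaching~$T$ has a value function $\reach_T\colon V\to\natsclosed$, positional optimal strategies for \emph{both} players, finite values bounded by $(\size{V}-1)\cdot W$, and is computable by value iteration over the one-step operator in $\bigo(\size{V}\cdot\size{E})$ time; I would record exactly the properties needed downstream. For the outer part I would characterise the recurrence value $\val^*\colon V\times Q\to\natsclosed$ as a fixed point of the monotone operator $\mathcal{F}$ that sends $g$ to the value of the reachability instance to~$T$ in which reaching $t\in T$ (in at least one step) with accumulated weight~$p$ pays $\max(p,g(t))$, so that the continuation cost~$g(t)$ is folded into the target and the instance is again handled by the blackbox. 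Iterating $\mathcal{F}$ from $g_0\equiv 0$ gives a pointwise increasing sequence whose $i$-th term is the optimal value for guaranteeing $i$ successive bounded infixes to~$T$; capping finite values at the bound $(\size{V}\cdot\size{Q}+1)\cdot W$, above which the value is necessarily~$\infty$, I would show this sequence stabilises after finitely many steps at a fixed point~$\val^*$.

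The heart of the argument, and the step I expect to be the main obstacle, is proving that $\val^*$ is the true game value and extracting from it a single strategy optimal from \emph{every} vertex. For the upper bound I would compose the inner reachability-optimal moves, resetting at each visit to~$T$; the delicate point is that this must bound \emph{every} infix of the infinite play simultaneously, not just finitely many, which I would establish by showing that the composition stays in the region of finite~$\val^*$, so a new bounded infix always begins. Because the continuation cost enters through a \emph{maximum} rather than a sum, the realising strategy must also track the running weight against the finitely many relevant target-values; this is what makes the strategy finite-state rather than positional on $\arena\times\mem_\aut$, and it yields item~1. For the matching lower bound, needed both to certify optimality and to exhibit an optimal strategy for Player~$1$, I would dually compose Player~$1$'s positional optimal reachability strategies, using positional determinacy of the min-cost game at each layer to force, from any~$v$, some infix of weight at least~$\val^*(v)$. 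Finally, item~2 follows by bookkeeping: the outer iteration changes the value function on the $\size{V}\cdot\size{F}$ targets at most $\bigo(\size{V}\cdot\size{F})$ times each, for $\bigo(\size{V}^2\cdot\size{F}^2)$ reachability solves, each costing $\bigo(\size{V}\cdot\size{E}\cdot\size{Q}^2)$ on the expanded arena (which has $\size{V}\cdot\size{Q}$ vertices and $\bigo(\size{E}\cdot\size{Q})$ edges), for the stated total of $\bigo(\size{V}^3\cdot\size{E}\cdot\size{Q}^2\cdot\size{F}^2)$ under the unit-cost model.
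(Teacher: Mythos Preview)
Your overall architecture matches the paper's: take the product with $\mem_\aut$, solve via a nested fixed point whose inner component is min-cost reachability and whose outer component iterates from the all-zero ranking upward. Two places, however, gloss over real work that the paper has to do carefully, and both are genuine gaps rather than routine details.

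First, your operator~$\mathcal F$---reachability to $T$ with terminal payoff $\max(p,g(t))$---is \emph{not} handled by the standard min-cost blackbox, which minimises the accumulated weight~$p$ to a fixed target set. The $\max$ makes the objective depend on \emph{which} target is hit in a way that does not reduce to initialising target ranks with $g(t)$ (that would compute $p+g(t)$, not $\max(p,g(t))$). The paper's operator~$\limitlift$ resolves this by stratifying $T$ into level sets $F_h = \{t \in T : g(t) \le \mathfrak r_h\}$ for the distinct values $\mathfrak r_1 < \cdots < \mathfrak r_k$ of $g$ on $T$, solving one \emph{standard} reachability instance per level, and taking $\min_h \max\{\text{reach cost to }F_h,\ \mathfrak r_h\}$. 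This stratification is not optional: it is exactly how your non-standard objective reduces to the blackbox, it is why each outer step costs $\size{T}$ reachability solves rather than one, and it is what supplies the finite memory (the current level~$h$) for Player~$0$'s strategy.

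Second, your termination bound is unsupported. The claim that each of the $\size V\cdot\size F$ target values changes at most $O(\size V\cdot\size F)$ times has no justification: the values live in $[0,(\size V\cdot\size Q+1)W]\cup\{\infty\}$ and nothing you have said prevents many small increments, so at best you get a pseudo-polynomial bound in~$W$. The paper instead proves (Lemma~\ref{lemma-limit-termination}) that only $\size T + 1$ outer iterations suffice, via a settling-time argument that is entangled with the construction of Player~$1$'s optimal strategy---and that construction is considerably more delicate than ``dually composing reachability strategies'': the paper distinguishes three vertex types and argues a lexicographic decrease on $(\text{rank},\text{settling time})$ between successive visits to~$T$. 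Without that argument you have neither the polynomial running time of item~2 nor the matching lower bound that certifies Player~$0$'s strategy as optimal.
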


Before we prove this result, let us comment on one restriction of our model: We only allow nonnegative edge weights. The reason is that it is straightforward to construct a game witnessing that optimal finite-state strategies do not necessarily exist in arenas with negative weights.

\begin{example}
\label{example-negativeweights}
	Consider the game depicted in Figure~\ref{fig-negativeweights}. As Player~$0$ moves at every vertex, we can identify plays and strategies. Also, for the sake of simplicity, we identify vertex names and colors and consider $K = (v_0v_1^*v_2)^*$, i.e., the winning plays are of the form $(v_0v_1^+v_2)^\omega$.
	For every $j > 0$, Player~$0$ has a finite-state strategy to produce the play~$\rho_j = (v_0v_1^jv_2)^\omega$ with $\val_\game(\rho_j) = -j$, which is also the value of the strategy from $v_0$. Hence, she can enforce arbitrarily small values. Furthermore, straightforward pumping arguments show that every finite-state strategy has a bounded value, as it has to leave $v_1$ after a bounded number of steps. 
	
	Altogether, there is no optimal finite-state strategy. 
	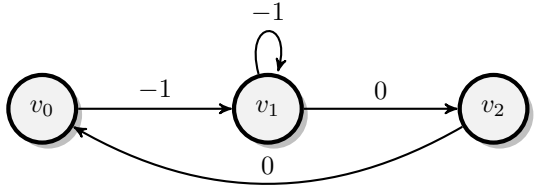
\begin{figure}
	\centering
		\begin{tikzpicture}
			\node[pl0] (s0) at (0,0) {$v_0$};
			\node[pl0] (s1) at (3,0) {$v_1$};
			\node[pl0] (s2) at (6,0) {$v_2$};
			
			\path[->]
			(s0) edge node[above] {$-1$} (s1)
			(s1) edge[loop above] node[above] {$-1$} ()
			(s1) edge node[above] {$0$} (s2)
			(s2) edge[bend left] node[above] {$0$} (s0)
			;
		\end{tikzpicture}
		\caption{The arena for Example~\ref{example-negativeweights}.}
	\label{fig-negativeweights}
	\end{figure}
\end{example}

To prove Theorem~\ref{thm-limit-main}, we first consider the simpler setting of weighted reachability games, i.e., games where a prefix in $K$ has to be reached at least once. This problem is a special case of more general problems that have been considered before (see, e.g.,~\cite{DBLP:conf/concur/BrihayeGHM15,DBLP:journals/mst/KhachiyanBBEGRZ08}). However, these works do not prove all the results we require here. Hence, we discuss in Subsection~\ref{subsection-reachalgo}
a fixed point algorithm computing optimal strategies in reachability games. Then, we use this algorithm as a black box to build another fixed point algorithm computing optimal strategies in weighted limit games (Subsection~\ref{subsection-limitalgo}).


\subsection{Computing Optimal Strategies in Weighted Reachability Games}
\label{subsection-reachalgo}
Given a DFA~$\aut$ over $C$ with $\epsilon \notin L(\aut)$, define for a play~$\rho = v_0v_1v_2 \cdots$
\[
\val_\game^R(\rho) = \min_{j \in \nats} \set{\weight(v_0 \cdots v_j) \mid \col(v_0 \cdots v_j) \in L(\aut)},
\]
where $\min \emptyset = \infty$. So, $\val_\game^R(\rho)$ is the weight of the shortest nonempty prefix of $\rho$ whose label sequence is accepted by $\aut$. This also minimizes the accumulated weight, as we only consider nonnegative weights on edges. This definition for plays is lifted to strategies~$\sigma$ for Player~$0$ as for limit games: $\val_\game^R(\sigma, v) = \sup_\rho \val_\game^R(\rho)$ where $\rho$ ranges over all plays starting in the vertex~$v$ that are consistent with $\sigma$. Similarly, optimality of strategies is defined as for limit games.

In the remainder of this section, we show how to compute optimal strategies with respect to $\val_\game^R$, given an arena~$\arena$ and a DFA~$\aut$. 
First, let $\arena \times \mem_{\aut} = (V, V_0, V_1, E, \weight, c)$ be the product of $\arena$ and the memory structure induced by $\aut$ (see Page~\pageref{def-autmem}). Furthermore, let $F$ be the set of vertices of the form~$(v,q)$ where $q$ is an accepting state of $\aut$, i.e., $F$ is a set of vertices of the product arena, \emph{not} the set of accepting states of $\aut$.
	However, reaching a state in $F$ from a vertex of the form $(v, \init(v))$ signifies that the label sequence induced by the play is accepted by $\aut$ (see Page~\pageref{page-autmem}).

A ranking for $\arena \times \mem_{\aut} $ is a mapping $\ranking \colon V \rightarrow \natsclosed$. 
Let $\rankings$ denote the set of all rankings.
We order rankings by defining $r \sqsubseteq r'$ if $r(v) \geq r'(v)$ for all $v \in V$, i.e., $r'$ is \myquot{better} than $r$ if $r'$ assigns ranks that are
pointwise no larger than those of $r$. Hence, the least (and thus the worst) ranking is the one mapping every vertex to $\infty$. Furthermore, there are no infinite strictly ascending chains of rankings, as the ranks only decrease in such a chain, but are always nonnegative.

Next, we define the map~$\lift \colon \rankings \rightarrow \rankings$ via 
\[
\lift(\ranking)(v) = \begin{cases}
	0 &\text{ if $v \in F$,}\\
	\min\set{ \ranking(v), \min_{v' \in vE} \weight(v,v') +\ranking(v') } &\text{ if $v \in V_0 \setminus F$,}\\
	\min\set{ \ranking(v), \max_{v' \in vE} \weight(v,v') +\ranking(v') } &\text{ if $v \in V_1 \setminus F$.}
\end{cases}
\]

We will use $\lift$ to compute the value of an optimal strategy: At vertices in $F$, Player~$0$ has already achieved her goal, i.e., they are assigned a rank of $0$. Now, if it is Player~$0$'s turn at a vertex $v \notin F$, then she has to move to a successor. As she aims to minimize the accumulated weight, she prefers a successor~$v'$ that minimizes the sum of the weight~$\weight(v,v')$ of the edge leading to $v'$ and the rank of $v'$. The reasoning for Player~$1$ is dual: he tries to maximize the accumulated weight. Finally, for technical reasons, we ensure that $\lift$ does never increase a rank via taking the minimum with the old rank of $v$ (which ensures that $\lift$ is monotone).

\begin{remark}
\label{remark-reachability-rankingmonotonicity}
We have $\ranking \sqsubseteq \lift(\ranking)$ for every ranking~$\ranking$.
\end{remark}

Let $\ranking_0$ be the least element of $\rankings$, i.e., the ranking mapping every vertex to $\infty$, and let $\ranking_{j+1} = \lift(\ranking_j)$ for every $j$. Then, we define $r^* = r_n$ for the minimal $n$ with $\ranking_n = \ranking_{n+1}$. Note that such a (least) fixed point~$\ranking^n$ exists due to Remark~\ref{remark-reachability-rankingmonotonicity} and as $\sqsubseteq$ has no infinite strictly ascending chain. From $r^*$ one can derive an optimal strategy for Player~$0$ and the values of such a strategy.

\begin{example}
\label{example-ranking}
Consider the arena depicted in Figure~\ref{figure-rankingexample}, where we mark vertices in $F$ by doubly-lined vertices. We illustrate the computation of the rankings~$\ranking_j$ below the arena, which reaches a fixed point after four applications of $\lift$, i.e., $\ranking_4 = \ranking_5$. Note that the rank of vertex~$v_4$ is updated twice. 

Let us sketch how to extract a strategy for Player~$0$ from the fixed point~$\ranking_4$. Consider, e.g., the vertex~$v_2 \in V_0$. It has rank~$4$ and an edge of weight~$4$ leading to a vertex of rank~$4-4 = 0$, which is the optimal move. In general, every vertex~$v$ of Player~$0$ with finite rank~$\ranking(v)$ has an edge to a successor~$v'$ such that $\ranking(v') = \ranking(v) - \weight(v,v') $. 
Dually, consider the vertex~$v_1 \in V_1$: It has rank~$5$ and every edge leaving $v_1$ goes to a vertex~$v'$ of rank at most $5-\weight(v,v')$. Again, this property is satisfied for every vertex with finite rank.

Hence, using these two properties inductively shows that Player~$0$ has a strategy so that every move from a vertex that is not in $F$ decreases the rank by the weight of the edge taken. Thus, as ranks are nonnegative, a visit to $F$ is guaranteed unless from some point onwards only edges of weight~$0$ are used. However, we will rule this out by ensuring that the target of the edge of weight~$0$ has reached its final rank before the source of the edge, e.g., the successors~$v_2$ and $v_3$ of vertex~$v_3$ have rank~$4$ and the corresponding edge has weight~$0$. However, $v_2$ has reached its final rank one step before $v_3$ has. Ultimately, we show that either the rank or this so-called settling time strictly decreases along every edge taken from a vertex that is not in $F$. As there is no infinite descending chain in this product order, $F$ has to be reached eventually. Using dual arguments, one can define a strategy for Player~$1$ and then show these strategies to be optimal.

In the example, Player~$0$ moves from $v_4$ to $v_3$, from where she moves to $v_2$ and then to $v_0$. This strategy is optimal from every vertex and realizes the value~$\ranking_4(v)$ from every vertex~$v$. For example, the unique play consistent with this strategy starting in $v_4$ has value~$11$. 

It is instructive to compare the computation of the rankings to the attractor computation (see, e.g.,~\cite{GraedelThomasWilke02}): a straightforward induction shows that the $j$-th level of the attractor computation is equal to $\set{v \mid \ranking_{j+1}(v) \neq \infty}$. However, the attractor yields a strategy that minimizes the number of moves necessary to reach $F$ while the rankings minimize the accumulated weight. This difference is witnessed by vertex~$v_4$: the attractor strategy  takes the direct edge to $v_0$ of weight~$99$ while the rankings induce the strategy described above, which realizes a smaller value by taking a longer path through the arena.

\begin{figure}
\centering
\begin{tikzpicture}

\node[pl0,accepting] (v0) {$v_0 $};
\node[pl1,left of = v0] (v1) {$v_1$};
\node[pl0,left of = v1] (v2) {$v_2$};
\node[pl0,left of = v2] (v3) {$v_3$};
\node[pl0,left of = v3] (v4) {$v_4$};
\node[pl1,left of = v4] (v5) {$v_5$};
\node[pl0,left of = v5] (v6) {$v_6$};

\path[->]
(v0) edge[loop right] node[right] {$3$} (v0)
(v1) edge node[above] {$1$} (v0)
(v2) edge[bend left] node[above] {$0$} (v1)
(v1) edge[bend left] node[below] {$1$} (v2)
(v2.north) edge[bend left] node[above] {$4$} (v0.north)
(v3) edge node[above] {$0$} (v2)
(v3) edge[loop above] node {$0$} ()
(v4) edge node[above] {$7$} (v3)
(v5) edge[bend left] node[below] {$2$} (v6)
(v6) edge[bend left] node[above] {$3$} (v5)
(v4) edge[bend right] node[above] {$99$} (v0)
(v5) edge node[above] {$0$} (v4)
(v6) edge[loop left] node[left] {$0$} (v6)
(v3) edge[bend right] node[above] {$5$} (v5)
;
\tikzset{node distance = 2cm};

\node[below of = v0] (0-0) {$\infty$};
\node[below of = v1] (1-0) {$\infty$};
\node[below of = v2] (2-0) {$\infty$};
\node[below of = v3] (3-0) {$\infty$};
\node[below of = v4] (4-0) {$\infty$};
\node[below of = v5] (5-0) {$\infty$};
\node[below of = v6] (6-0) {$\infty$};

\tikzset{node distance = .65cm};

\node[below of = 0-0] (0-1) {$0$};
\node[below of = 0-1] (0-2) {$0$};
\node[below of = 0-2] (0-3) {$0$};
\node[below of = 0-3] (0-4) {$0$};
\node[below of = 0-4] (0-5) {$0$};

\node[below of = 1-0] (1-1) {$\infty$};
\node[below of = 1-1] (1-2) {$\infty$};
\node[below of = 1-2] (1-3) {$5$};
\node[below of = 1-3] (1-4) {$5$};
\node[below of = 1-4] (1-5) {$5$};

\node[below of = 2-0] (2-1) {$\infty$};
\node[below of = 2-1] (2-2) {$4$};
\node[below of = 2-2] (2-3) {$4$};
\node[below of = 2-3] (2-4) {$4$};
\node[below of = 2-4] (2-5) {$4$};

\node[below of = 3-0] (3-1) {$\infty$};
\node[below of = 3-1] (3-2) {$\infty$};
\node[below of = 3-2] (3-3) {$4$};
\node[below of = 3-3] (3-4) {$4$};
\node[below of = 3-4] (3-5) {$4$};

\node[below of = 4-0] (4-1) {$\infty$};
\node[below of = 4-1] (4-2) {$99$};
\node[below of = 4-2] (4-3) {$99$};
\node[below of = 4-3] (4-4) {$11$};
\node[below of = 4-4] (4-5) {$11$};

\node[below of = 5-0] (5-1) {$\infty$};
\node[below of = 5-1] (5-2) {$\infty$};
\node[below of = 5-2] (5-3) {$\infty$};
\node[below of = 5-3] (5-4) {$\infty$};
\node[below of = 5-4] (5-5) {$\infty$};

\node[below of = 6-0] (6-1) {$\infty$};
\node[below of = 6-1] (6-2) {$\infty$};
\node[below of = 6-2] (6-3) {$\infty$};
\node[below of = 6-3] (6-4) {$\infty$};
\node[below of = 6-4] (6-5) {$\infty$};

\tikzset{node distance = 1cm};

\node[left of = 6-0] {$\ranking_0$:};
\node[left of = 6-1] {$\ranking_1$:};
\node[left of = 6-2] {$\ranking_2$:};
\node[left of = 6-3] {$\ranking_3$:};
\node[left of = 6-4] {$\ranking_4$:};
\node[left of = 6-5] {$\ranking_5$:};

\end{tikzpicture}
\caption{The arena for Example~\ref{example-ranking} and the evolution of the corresponding rankings.}
\label{figure-rankingexample}	
\end{figure}
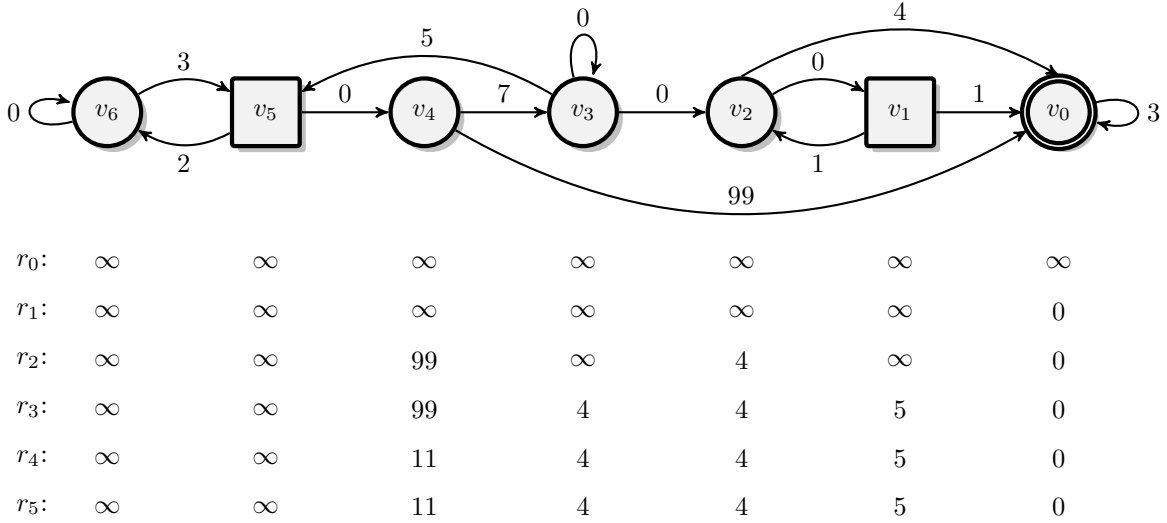

\end{example}

We sketch how to obtain an optimal strategy~$\sigma$ for Player~$0$ from the fixed point~$\ranking^*$, and how $r^*$ and $\sigma$ can be computed in polynomial time. 
To this end, we need to introduce some additional notation.
Consider the sequence~$\ranking_0, \ranking_1, \ldots, \ranking_n = \ranking^*$ as above. Due to Remark~\ref{remark-reachability-rankingmonotonicity}, we have $r_j(v) \ge r_{j+1}(v)$ for every $j$ and every $v$.
The settling time of a vertex~$v$ is defined as $ t_s(v) = \min \set{j \mid \ranking_j(v) = \ranking^*(v)}$, i.e., as the first time $v$ is assigned its final rank~$\ranking^*(v)$.
The construction of an optimal strategy is based on the following results about ranks and settling times, which formalize the intuition given in Example~\ref{example-ranking}.

\begin{lemma}
\label{lemma-reachability-rankingproperties}
Let $v \in V$.
\begin{enumerate}

	\item\label{lemma-reachability-rankingproperties-infty}
	$\ranking^*(v) = \infty$ if and only if $t_s(v) = 0$.
	
	\item\label{lemma-reachability-rankingproperties-F} $v \in F$ implies $\ranking^*(v) = 0$ and $t_s(v) = 1$.
		
	\item\label{lemma-reachability-rankingproperties-playerzero}
	If $v \in V_0 \setminus F$ then $\ranking^*(v) \le \weight(v,v') + \ranking^*(v')$ for all successors~$v' \in vE$. Furthermore, there is some successor~$\overline{v} \in vE$ with $\ranking^*(v) = \weight(v,\overline{v}) + \ranking^*(\overline{v})$. Finally, if $ \ranking^*(v) < \infty $, then $\overline{v}$ can be chosen such that it additionally satisfies $t_s(v) = t_s(\overline{v}) + 1 $. 

	\item\label{lemma-reachability-rankingproperties-playerone}
	If $v \in V_1 \setminus F$ then $\ranking^*(v) \ge \weight(v,v') + \ranking^*(v')$ for all successors~$v' \in vE$.
	 Furthermore, there is some successor~$\overline{v} \in vE$ with $\ranking^*(v) = \weight(v,\overline{v}) + \ranking^*(\overline{v})$. 

	\item\label{lemma-reachability-rankingproperties-playeroneequality}
	If $v \in V_1 \setminus F$ and $\overline{v} \in vE$ with $\ranking^*(v) = \ranking^*(\overline{v}) < \infty$, then $t_s(v) > t_s(\overline{v})$.
	
\end{enumerate}
\end{lemma}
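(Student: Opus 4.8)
The plan is to derive all five statements from two basic facts. First, by Remark~\ref{remark-reachability-rankingmonotonicity} we have $\ranking_j \sqsubseteq \lift(\ranking_j) = \ranking_{j+1}$, i.e., the sequence $(\ranking_j(v))_{j \ge 0}$ is non-increasing for every $v$. Second, $\ranking^*$ is a fixed point, so $\ranking^*(v) = \lift(\ranking^*)(v)$ at every vertex. I use throughout that every vertex has a successor and that $vE$ is finite, so the minima and maxima over $vE$ are attained.

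The first two statements are immediate. Since $\ranking_0(v) = \infty$ and the sequence is non-increasing, $t_s(v) = 0$ holds precisely when $\ranking^*(v) = \ranking_0(v) = \infty$, which is Item~\ref{lemma-reachability-rankingproperties-infty}. For Item~\ref{lemma-reachability-rankingproperties-F}, if $v \in F$ then $\lift(\ranking)(v) = 0$ for every ranking~$\ranking$, so $\ranking_1(v) = 0 = \ranking^*(v)$ while $\ranking_0(v) = \infty$, giving $t_s(v) = 1$.

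For the remaining items I treat the two players by dual arguments. The inequalities and the witnessing successor $\overline v$ come from combining the fixed-point equation with monotonicity. For Item~\ref{lemma-reachability-rankingproperties-playerzero} the equation reads $\ranking^*(v) = \min\set{\ranking^*(v), \min_{v'}\weight(v,v')+\ranking^*(v')}$, which directly yields $\ranking^*(v) \le \weight(v,v')+\ranking^*(v')$ for all $v'$; the witness is found by unfolding the step $t = t_s(v)$ at which $v$ settles, where $\ranking_t(v) < \ranking_{t-1}(v)$ forces the minimizing successor $\overline v$ to satisfy $\ranking^*(v) = \weight(v,\overline v)+\ranking_{t-1}(\overline v)$, and a sandwiching argument (using $\ranking_{t-1}(\overline v) \ge \ranking^*(\overline v)$ and the inequality just proved) collapses this to $\ranking^*(v) = \weight(v,\overline v) + \ranking^*(\overline v)$ together with $\ranking_{t-1}(\overline v) = \ranking^*(\overline v)$. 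For Item~\ref{lemma-reachability-rankingproperties-playerone} the inequality $\ranking^*(v) \ge \weight(v,v')+\ranking^*(v')$ is the delicate one, as it does \emph{not} follow from the fixed-point equation alone (which only gives $\ranking^*(v) \le \max_{v'}\weight(v,v')+\ranking^*(v')$). I therefore prove by induction on $j$ that $\ranking_j(v) \ge \weight(v,v')+\ranking^*(v')$ for every $v' \in vE$, using $\ranking_j(v') \ge \ranking^*(v')$ in the step; taking $j = n$ gives the claim, and together with the fixed-point inequality it forces $\ranking^*(v) = \max_{v'}\weight(v,v')+\ranking^*(v')$, whence the witness.

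The settling-time statements are the crux, and where I expect the main difficulty. For the last sentence of Item~\ref{lemma-reachability-rankingproperties-playerzero}, continuing the unfolding above gives $t_s(\overline v) \le t-1$; to show equality I assume $t_s(\overline v) \le t-2$ and derive $\ranking_{t-1}(v) \le \weight(v,\overline v)+\ranking_{t-2}(\overline v) = \ranking^*(v)$, contradicting $\ranking_{t-1}(v) > \ranking^*(v)$. (One first notes $t \ge 2$, since $\ranking_1(v) = \infty$ for every $v \in V_0 \setminus F$, so $\ranking_{t-2}$ is defined whenever $\ranking^*(v) < \infty$.) For Item~\ref{lemma-reachability-rankingproperties-playeroneequality}, given $\overline v \in vE$ with $\ranking^*(v) = \ranking^*(\overline v) < \infty$, Item~\ref{lemma-reachability-rankingproperties-playerone} first forces $\weight(v,\overline v) = 0$ (nonnegativity of weights is essential here). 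Writing $t = t_s(v)$ and $s = t_s(\overline v)$, I assume $t \le s$ for contradiction: then $\ranking_{t-1}(\overline v) \ge \ranking_{s-1}(\overline v) > \ranking^*(\overline v) = \ranking^*(v)$, and since also $\ranking_{t-1}(v) > \ranking^*(v)$, both arguments of the minimum defining $\ranking_t(v) = \lift(\ranking_{t-1})(v)$ strictly exceed $\ranking^*(v)$ (using $\weight(v,\overline v)=0$ for the max-term), contradicting $\ranking_t(v) = \ranking^*(v)$. Hence $t_s(v) > t_s(\overline v)$. The main obstacle throughout is the careful book-keeping of strict versus weak inequalities around the settling time and correctly identifying which summand of the $\lift$-minimum is active; the nonnegativity of weights enters decisively in Item~\ref{lemma-reachability-rankingproperties-playeroneequality}.
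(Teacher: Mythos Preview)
Your proposal is correct and follows the same line as the paper's proof: both derive the inequalities from the fixed-point equation~$\lift(\ranking^*)=\ranking^*$ and obtain the witnesses and settling-time statements by unfolding the iteration at step~$t_s(v)$, using that $\ranking_{t_s(v)}(v)<\ranking_{t_s(v)-1}(v)$ forces the outer~$\min$ in $\lift$ to be attained by the successor term. Two minor remarks: your construction of the witness~$\overline v$ in Item~\ref{lemma-reachability-rankingproperties-playerzero} tacitly assumes $t_s(v)\ge 1$, i.e., $\ranking^*(v)<\infty$ (the case $\ranking^*(v)=\infty$ is trivial since then every successor has rank~$\infty$, but should be mentioned), and for the inequality $\ranking^*(v)\ge \weight(v,v')+\ranking^*(v')$ in Item~\ref{lemma-reachability-rankingproperties-playerone} the paper avoids your induction on~$j$ by unfolding once at step~$t_s(v)$ to get $\ranking^*(v)=\max_{v'}\weight(v,v')+\ranking_{t_s(v)-1}(v')\ge \weight(v,v')+\ranking^*(v')$ directly.
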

We call successors~$\overline{v}$ as in Items~\ref{lemma-reachability-rankingproperties-playerzero} and \ref{lemma-reachability-rankingproperties-playerone} optimal. 
If Player~$0$ uses an optimal successor, then the rank decreases by the weight of the edge. If this weight is $0$, i.e., the rank stays constant, then the settling time decreases. Similarly, along all edges available to Player~$1$, the  rank decreases at least by the weight of the edge. Again, if that value is $0$, i.e., the rank stays constant, then the settling time decreases.

Using these properties, we define a strategy for Player~$0$ in $\arena$. To this end, we first define a positional strategy~$\sigma'$ for her on $\arena \times \mem_{\aut}$ as follows: at a vertex~$v \in V_0 \setminus F$ move to some optimal successor of $v$. From every vertex~$v \in F \cap V_0$ move to an arbitrary successor. Now, let $\sigma$ be the unique finite-state strategy in $\arena$ implemented by $\mem_{\aut}$ and $\nxt_{\sigma'}$, the next-move function induced by $\sigma'$. 

\begin{lemma}
\label{lemma-reach-sigmaoptimal}
$\sigma$ as defined above is an optimal strategy for Player~$0$ in $\game$.
\end{lemma}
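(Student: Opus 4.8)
The plan is to work throughout in the product arena $\arena \times \mem_\aut$, where, by Remark~\ref{remark-extendingplayspreservesconsistency} together with the weight- and color-preservation of $\ext$, the value $\val_\game^R(\rho)$ of a play equals the weight accumulated along $\ext(\rho)$ up to its first visit to $F$ (and $\infty$ if $F$ is never visited); here we use that weights are nonnegative, so the first visit to $F$ already minimises the accumulated weight. Writing $u_0 = (v,\init_\aut(v))$ for the start of $\ext(\rho)$, I would prove the two inequalities $\val_\game^R(\sigma,v) \le \ranking^*(u_0)$ and $\val_\game^R(\sigma'',v) \ge \ranking^*(u_0)$ for every Player~$0$ strategy $\sigma''$; together they give optimality (and in fact $\val_\game^R(\sigma,v)=\ranking^*(u_0)$). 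The central device is the potential $\Phi_j = \weight(u_0 \cdots u_j) + \ranking^*(u_j)$ along a product play $u_0 u_1 u_2 \cdots$, combined with the lexicographic measure $(\ranking^*(u_j), t_s(u_j))$.

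For the upper bound, consider a play consistent with $\sigma'$ from $u_0$ with $\ranking^*(u_0) < \infty$. At a Player~$0$ vertex $\sigma'$ picks an optimal successor, so by Lemma~\ref{lemma-reachability-rankingproperties}(\ref{lemma-reachability-rankingproperties-playerzero}) the step keeps $\Phi_j$ constant; at a Player~$1$ vertex, Lemma~\ref{lemma-reachability-rankingproperties}(\ref{lemma-reachability-rankingproperties-playerone}) gives $\ranking^*(u_j) \ge \weight(u_j,u_{j+1}) + \ranking^*(u_{j+1})$ for the chosen successor, so $\Phi_{j+1} \le \Phi_j$. Hence $\Phi_j$ is non-increasing, whence $\ranking^*(u_j) \le \Phi_j \le \Phi_0 = \ranking^*(u_0) < \infty$ and all ranks stay finite. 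The crucial point is that the play must actually reach $F$: along each step out of a non-$F$ vertex, either the edge has positive weight and the rank strictly drops, or the edge has weight~$0$, the rank is preserved, and the settling time strictly decreases (Items~\ref{lemma-reachability-rankingproperties-playerzero} and~\ref{lemma-reachability-rankingproperties-playeroneequality}); thus $(\ranking^*(u_j),t_s(u_j))$ strictly decreases in the well-founded lexicographic order and $F$ is reached after finitely many steps, say first at position~$k$. Since $\ranking^*(u_k)=0$ by Item~\ref{lemma-reachability-rankingproperties-F}, the accumulated weight is $\weight(u_0 \cdots u_k) = \Phi_k \le \Phi_0 = \ranking^*(u_0)$, giving $\val_\game^R(\sigma,v) \le \ranking^*(u_0)$ (the case $\ranking^*(u_0)=\infty$ being trivial).

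For the lower bound, I would fix the dual positional strategy for Player~$1$ on the product that at each $v \in V_1 \setminus F$ moves to an optimal successor given by Item~\ref{lemma-reachability-rankingproperties-playerone}, and let $\tau'$ be the induced finite-state strategy in $\arena$. Against any Player~$0$ strategy $\sigma''$, the unique consistent play keeps $\Phi_j$ constant at Player~$1$ vertices and, crucially, \emph{non-decreasing} at Player~$0$ vertices, since the inequality of Item~\ref{lemma-reachability-rankingproperties-playerzero} holds for \emph{every} successor, so Player~$0$ cannot avoid $\ranking^*(u_{j+1}) \ge \ranking^*(u_j) - \weight(u_j,u_{j+1})$. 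Therefore $\Phi_j \ge \Phi_0 = \ranking^*(u_0)$ throughout; if $F$ is reached at position~$k$ then $\weight(u_0 \cdots u_k) = \Phi_k \ge \ranking^*(u_0)$, and if $\ranking^*(u_0)=\infty$ then $F$ is never reached (a finite prefix cannot carry infinite weight), so this play has value at least $\ranking^*(u_0)$. As it is consistent with $\sigma''$, we obtain $\val_\game^R(\sigma'',v) \ge \ranking^*(u_0)$, and combining the two bounds proves $\sigma$ optimal. I expect the main obstacle to be precisely the termination argument of the upper bound: the potential alone only shows $\Phi_j$ does not increase, and one genuinely needs the settling-time tie-break together with well-foundedness to certify that $F$ is reached, which is where Items~\ref{lemma-reachability-rankingproperties-playerzero} and~\ref{lemma-reachability-rankingproperties-playeroneequality} of Lemma~\ref{lemma-reachability-rankingproperties} carry the weight.
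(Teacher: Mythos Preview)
Your proposal is correct and follows essentially the same approach as the paper: an upper bound via the inequality~$\weight(u_0\cdots u_j)+\ranking^*(u_j)\le \ranking^*(u_0)$ along $\sigma'$-plays together with the lexicographic $(\ranking^*,t_s)$ descent to force a visit to $F$, and a matching lower bound via the dual Player~$1$ strategy that picks optimal successors. Your potential~$\Phi_j$ is just a compact name for the quantity the paper tracks inductively in the proofs of Lemmata~\ref{lemma-reachability-valsigma} and~\ref{lemma-reachability-valtau}; the only cosmetic point is that your case split ``positive weight $\Rightarrow$ rank drops / weight~$0$ $\Rightarrow$ rank constant and $t_s$ drops'' omits the harmless case of a weight-$0$ Player~$1$ edge along which the rank strictly drops, but the lexicographic conclusion is unaffected.
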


This result is proven in two steps. First, one shows $\val_\game^R(\sigma, v) \le \ranking^*(v, \init(v))$ for every vertex~$v$ of $\arena$, applying the properties posited in Lemma~\ref{lemma-reachability-rankingproperties} inductively. 
Secondly, analogously to the construction of $\sigma$, one constructs a strategy $\tau$ for Player~$1$ satisfying $\val_\game^R(\rho) \ge \ranking^*(v, \init(v))$ for every vertex~$v$ of $\arena$ and every play~$\rho$ starting in $v$ and consistent with $\tau$, which is again proven by applying Lemma~\ref{lemma-reachability-rankingproperties} inductively.

Furthermore, by bounding the settling times of vertices one can show that the fixed point $\ranking^*$ is reached after a linear number of applications of $\lift$.

\begin{lemma}
\label{lemma-reachability-termination}
We have $\ranking^* = \ranking_{\size{\arena} \cdot \size{\aut}+1}$.
\end{lemma}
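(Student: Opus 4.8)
The plan is to bound the number of iterations by the largest settling time and then to show that no vertex settles later than step~$\size{\arena} \cdot \size{\aut}$. First I would record that the iteration stabilises exactly at step~$\max_{v} t_s(v)$: since the sequence~$\ranking_0, \ranking_1, \ldots$ is pointwise non-increasing (Remark~\ref{remark-reachability-rankingmonotonicity}) and stationary from~$\ranking^*$ onwards, we have $\ranking_j = \ranking^*$ if and only if $j \ge t_s(v)$ for every~$v$. Hence it suffices to prove $t_s(v) \le \size{\arena}\cdot\size{\aut} = \size{V}$ for every vertex~$v$ of the product arena~$\arena \times \mem_\aut$, where $\size{V} = \size{\arena}\cdot\size{\aut}$.

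The engine of the argument is a generalisation of the settling time. For a threshold~$\theta \in \natsclosed$ let $\chi(v,\theta) = \min\set{i \mid \ranking_i(v) \le \theta}$ be the first step at which the rank of~$v$ drops to~$\theta$ or below; note that $\chi(v, \ranking^*(v)) = t_s(v)$ and that $\chi(v,\cdot)$ is non-increasing in~$\theta$. A short induction using Remark~\ref{remark-reachability-rankingmonotonicity} shows that for $v \notin F$ and $j \ge 1$ the value~$\ranking_j(v)$ equals $\min_{v' \in vE}[\weight(v,v') + \ranking_{j-1}(v')]$ if $v \in V_0$, and the analogous maximum if $v \in V_1$; that is, the old-value term in the definition of~$\lift$ is never binding off~$F$.

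The key step is a one-step recursion for~$\chi$: if $v \notin F$, $\theta < \infty$, and $\chi(v,\theta) = i$ (necessarily $i \ge 2$, since $\ranking_0(v) = \ranking_1(v) = \infty$ off~$F$), then there is a successor~$v'$ with $\chi(v', \theta - \weight(v,v')) = i-1$. I would prove this by unfolding the formulas for~$\ranking_i(v)$ and~$\ranking_{i-1}(v)$ at the first crossing. For $v \in V_0$ the minimising successor~$v'$ of~$\ranking_i(v)$ satisfies $\ranking_{i-1}(v') \le \theta - \weight(v,v')$, and since $\ranking_{i-1}(v) > \theta$ it also satisfies $\ranking_{i-2}(v') > \theta - \weight(v,v')$; for $v \in V_1$ the inequality $\ranking_{i-1}(v') \le \theta - \weight(v,v')$ holds for all successors since the maximum defining~$\ranking_i(v)$ is~$\le\theta$, while some successor has $\ranking_{i-2}(v') > \theta - \weight(v,v')$ since the maximum defining~$\ranking_{i-1}(v)$ exceeds~$\theta$. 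Iterating from~$(v, \ranking^*(v))$ produces a walk $v = u_0, u_1, \ldots, u_{t_s(v)-1}$ along which the crossing time decreases by exactly one at each step, the thresholds $\theta_0 = \ranking^*(v) \ge \theta_1 \ge \cdots$ are non-increasing, and the last vertex lies in~$F$ (where the crossing time is~$1$).

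The main obstacle is to argue that this walk never repeats a vertex, and this is exactly where the threshold refinement pays off: the plain settling time can drop by more than one at a Player~$1$ vertex, because a late-settling but non-optimal successor may delay~$v$, so a chain built from settling times alone need neither have length~$t_s(v)$ nor consist of distinct vertices. With crossing times, distinctness is immediate from monotonicity in~$\theta$: if $u_j = u_{j'}$ with $j < j'$, then $\theta_{j'} \le \theta_j$ forces $\chi(u_j, \theta_{j'}) \ge \chi(u_j, \theta_j)$, contradicting that these crossing times are $t_s(v)-j' < t_s(v)-j$. Hence the $t_s(v)$ vertices of the walk are pairwise distinct, so $t_s(v) \le \size{V} = \size{\arena}\cdot\size{\aut}$. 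Combined with the first paragraph this yields $\ranking^* = \ranking_{\size{\arena}\cdot\size{\aut}}$, and in particular the claimed $\ranking^* = \ranking_{\size{\arena}\cdot\size{\aut}+1}$.
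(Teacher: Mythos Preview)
Your proof is correct and follows a genuinely different route from the paper's. The paper bounds settling times via the optimal Player~$0$ strategy~$\sigma'$ already constructed: it defines $d(v)$ as the length of the longest $\sigma'$-consistent play prefix from~$v$ avoiding~$F$, uses the descending-chain argument from the proof of Lemma~\ref{lemma-reachability-valsigma} (the lexicographic order on $(\ranking^*, t_s)$) to bound~$d(v) \le \size{V}$, and then shows $t_s(v) \le d(v)+1$ by induction on~$d(v)$, invoking Lemma~\ref{lemma-reachability-rankingproperties}.\ref{lemma-reachability-rankingproperties-playerzero} at Player~$0$ vertices and Remark~\ref{remark-reachability-settlingtimes}.\ref{remark-reachability-settlingtimes-successors} at Player~$1$ vertices. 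Your argument instead refines the settling time to a threshold-dependent crossing time~$\chi(v,\theta)$ and builds a walk along which this quantity decreases by exactly one per step; distinctness of the walk's vertices then follows cleanly from the monotonicity of~$\chi$ in~$\theta$. Your approach is more self-contained---it analyses the iteration directly without invoking the optimal strategy or Lemma~\ref{lemma-reachability-rankingproperties}---and in fact yields the marginally sharper bound~$t_s(v) \le \size{V}$. The paper's approach, by contrast, reuses machinery already in place and is therefore shorter in context. Both are sound; your crossing-time device is a neat way to handle Player~$0$ and Player~$1$ vertices uniformly, whereas the paper treats the two cases with separate tools.
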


A simple corollary of the previous lemma yields an upper bound on $\val_\game^R$, which follows from the fact that each application of $\lift$ increases the ranks by no more than the maximal weight of an edge.

\begin{corollary}
\label{coro-reachability-upperboundvalues}
If $\val_\game^R(v) < \infty$ then $\val_\game^R(v) \le \size{\arena} \cdot \size{\aut} \cdot W$, where $W$ is the largest weight in $\arena$. 
\end{corollary}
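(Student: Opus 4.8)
The plan is to reduce the claim to a bound on the finite ranks occurring in the least fixed point~$\ranking^*$. First I would record that the value of the game from a vertex~$v$ of $\arena$ coincides with the rank of the corresponding product vertex, i.e.\ $\val_\game^R(v) = \ranking^*(v,\init(v))$; this is exactly what the proof of Lemma~\ref{lemma-reach-sigmaoptimal} establishes, combining the upper bound realized by $\sigma$ with the matching lower bound witnessed by the dual Player~$1$ strategy~$\tau$. Hence it suffices to show that every finite rank appearing in $\ranking^*$ is at most $\size{\arena}\cdot\size{\aut}\cdot W$.

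The heart of the argument is the invariant that $\ranking_j(u) < \infty$ implies $\ranking_j(u) \le (j-1)\cdot W$, for every product vertex~$u$ and every $j$, which I would prove by induction on $j$. The case $j=0$ is vacuous, as $\ranking_0$ maps everything to $\infty$. For the step I would unfold $\ranking_{j+1} = \lift(\ranking_j)$ along the three cases in the definition of $\lift$. Vertices in $F$ are assigned rank~$0 \le j\cdot W$. For $u \in V_0\setminus F$ with $\ranking_{j+1}(u)$ finite, the value is the minimum of the previous rank~$\ranking_j(u)$ (covered directly by the induction hypothesis, which gives $\ranking_j(u)\le (j-1)W \le jW$) and of $\weight(u,u')+\ranking_j(u')$ for some successor~$u' \in uE$; in the latter case $\ranking_j(u')$ is finite, so the hypothesis gives $\ranking_j(u') \le (j-1)W$, and adding $\weight(u,u') \le W$ yields the bound~$jW$.

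I expect the only delicate point to be the case $u \in V_1\setminus F$, where the relevant quantity is a \emph{maximum} over the successors of $u$ rather than a minimum. Here one must observe that a finite maximum forces every successor~$u'$ to have finite rank~$\ranking_j(u')$, so that the induction hypothesis can be applied uniformly to all of them, bounding each summand~$\weight(u,u')+\ranking_j(u')$ by $jW$ and hence the maximum by $jW$ as well. This simultaneity over all successors is the step a careless argument is most likely to miss. Finally, I would combine the invariant with Lemma~\ref{lemma-reachability-termination}: since $\ranking^* = \ranking_{\size{\arena}\cdot\size{\aut}+1}$, any finite value~$\ranking^*(u)$ satisfies $\ranking^*(u) \le \size{\arena}\cdot\size{\aut}\cdot W$, and instantiating $u = (v,\init(v))$ together with $\val_\game^R(v)=\ranking^*(v,\init(v))$ gives the corollary.
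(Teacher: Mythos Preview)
Your proposal is correct and follows essentially the same approach as the paper: the paper merely sketches that each application of $\lift$ changes the finite ranks by at most~$W$, and you have made this precise via the invariant~$\ranking_j(u)<\infty \Rightarrow \ranking_j(u)\le (j-1)W$, then combined it with Lemma~\ref{lemma-reachability-termination} exactly as intended. Your explicit identification~$\val_\game^R(v)=\ranking^*(v,\init(v))$ via Lemma~\ref{lemma-reach-sigmaoptimal} is also what the paper tacitly assumes, so there is no substantive difference.
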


One can show that the upper bound on the value is tight, e.g., using a game similar to the one presented in Figure~\ref{fig-value-lowerbounds2} on Page~\pageref{fig-value-lowerbounds2}.


\subsection{Computing Optimal Strategies in Weighted Limit Games}
\label{subsection-limitalgo}
Now, we use the fixed point algorithm of the previous subsection to achieve the main goal of this work: solving regular weighted limit games optimally. 
Thus, fix a weighted arena~$\arena$ and a DFA~$\aut$ over $C$ inducing the winning condition~$\lim(K)$ and let $\arena \times \mem_{\aut} = (V, V_0, V_1, E, \weight, c )$ be the product of $\arena$ and the memory structure induced by $\aut$. Furthermore, let $F$ be the set of vertices of the form~$(v,q)$ where $q$ is an accepting state of $\aut$, i.e., $F$ is again a set of vertices of the product arena, \emph{not} the set of accepting states of $\aut$.
	
Recall that $\rankings$ is  the set of rankings~$\ranking \colon V \rightarrow \natsclosed$, which is ordered by $\sqsubseteq$ with $\ranking \sqsubseteq \ranking '$ if and only if $\ranking(v) \ge \ranking'(v)$ for all $v \in V$. Hence, the largest (i.e., best) element of $\rankings$ is the ranking mapping every vertex to $0$. 
We use the operator~$\lift$ defined in Subsection~\ref{subsection-reachalgo} to solve limit games. Recall that $\lift$ allows to compute, for a given set of goal vertices, an optimal strategy that ensures a visit to a goal vertex. However, here we have to treat the set of goal vertices as a parameter because we need to compute optimal strategies for subsets of $F$. Hence, we write $\lift_{F'}$ for $F' \subseteq V$ for the operator
\[
\lift_{F'}(\ranking)(v) = \begin{cases}
	0 &\text{ if $v \in F'$,}\\
	\min\set{ \ranking(v), \min_{v' \in vE} \weight(v,v') +\ranking(v') } &\text{ if $v \in V_0 \setminus F'$,}\\
	\min\set{ \ranking(v), \max_{v' \in vE} \weight(v,v') +\ranking(v') } &\text{ if $v \in V_1 \setminus F'$.}
\end{cases}
\]

All results proven about $\lift$ in Subsection~\ref{subsection-reachalgo} also hold true for $\lift_{F'}$. In particular, we can compute an optimal strategy for Player~$0$ to reach $F'$ and for Player~$1$ to avoid~$F'$ whenever possible, and to maximize the weight, if it is not possible.

The fixed point of $\lift_{F'}$ induces an optimal strategy for Player~$0$ to reach $F'$. However, on vertices in $F'$, from which she reaches $F'$ trivially (i.e., in zero steps), the fixed point does not yield any information on how to reach $F'$ \emph{again}. However, this information can easily be generated from the fixed point. Given an  arbitrary ranking~$\ranking$ and a set $F' \subseteq V$ of vertices, define the completion~$\complete_{F'}(\ranking)$ of $\ranking$ (with respect to $F'$) via
\[
\complete_{F'}(\ranking)(v) = \begin{cases}
\ranking(v) & \text{ if $ v \notin F'$,}\\
\min_{v' \in vE} \weight(v,v') + \ranking(v') & \text{ if $v \in F' \cap V_0$,}\\
\max_{v' \in vE} \weight(v,v') + \ranking(v') & \text{ if $v \in F' \cap V_1$.}
\end{cases}
\]
If $\ranking$ is the least fixed point of $\lift_{F'}$, then $\complete_{F'}(\ranking)$ is obtained from $\ranking$ by assigning to each vertex in $F'$ the minimal weright it takes Player~$0$ to reach $F'$ once more. This is necessary, as we need to reach $F$ infinitely often to win a limit game. The values for all $v \notin F'$ coincide in $\ranking$ and $\complete_{F'}(\ranking)$. 

Recall that the definition of optimal successors in Subsection~\ref{subsection-reachalgo} with respect to the least fixed point $\ranking$ of $\lift_{F'}$ is only defined for vertices in $V \setminus F'$. For $\ranking' = \complete_{F'}(\ranking)$, we can extend this notion to $F'$ as well as follows: a successor~$\overline{v}$ of $v$ in $F'$ is optimal, if $\ranking'(v) = \weight(v, \overline{v}) +\ranking(\overline{v})$.

Now, we again define an operator~$\limitlift$ updating rankings and show that determining a fixed point of the operator induces optimal strategies for both players. Intuitively, the operator tries to reach $F$ with minimal weight, but also has to account for the fact that $F$ has to be reached repeatedly, i.e., the ranks of the vertices reached in $F$ should be as small as possible.
 
Formally, given a ranking~$\ranking$, let $\ranking(F) = \set{ \mathfrak{r}_1 < \mathfrak{r}_2 < \cdots < \mathfrak{r}_k }$, i.e., the $\mathfrak{r}_h$ are the different ranks assigned by $\ranking$ to vertices in $F$. Now, define $F_h = \set{v \in F \mid \ranking (v) \le \mathfrak{r}_h}$ for $1 \le h \le k$, i.e., we order the vertices in $F$ into a hierarchy~$F_1 \subseteq  F_2 \subseteq  \cdots \subseteq F_k$ according to their rank with the intuition that smaller ranks are preferable for Player~$0$. Let $\ranking_h'$ be the least fixed point of $\lift_{F_h}$ for $1 \le h \le k$ and $\ranking_h'' = \complete(\ranking_h')$. Then, we define the ranking~$\limitlift(\ranking)$ via
\[
\limitlift(\ranking)(v) = \min_{1 \le h \le k}(\max\set{\ranking(v), \ranking_{h}''(v), \mathfrak{r}_h }) ,
\] 
i.e., to compute the new rank of $v$ we take into account the old rank and then minimize over the maximum of the weight to reach some $F_h$ and the maximal old rank of the vertices in $F_h$, which indicates (in the fixed point) how costly it is to reach $F$ repeatedly from this vertex.

\begin{remark}
\label{remark-limit-rankingmonotonicity}
We have $\ranking \sqsupseteq\limitlift(\ranking) $ for every ranking~$\ranking$. 
\end{remark}

Now, let $\ranking_0$ be the ranking mapping every vertex to $0$, i.e., the $\sqsubseteq$-largest ranking, and define $\ranking_{j+1} = \limitlift(\ranking_j)$ for every $j > 0$.

\begin{example}
 \label{example-ranking-limit}
Consider the game in Figure~\ref{figure-rankingexample-limit} and focus on vertex~$v_1$. Its rank is updated from its initial value of $0$ to $2$ (because the vertex~$v_2$ in $F$ can be reached with weight~$2$) and then $3$ (because reaching $F$ once more from $v_2$ incurs weight~$3 = \max\set{2,3}$) and then to $7$ (as $F$ is no longer reachable from $v_3$, but from $v_0$ which incurs weight~$\max\set{4,7}$).

 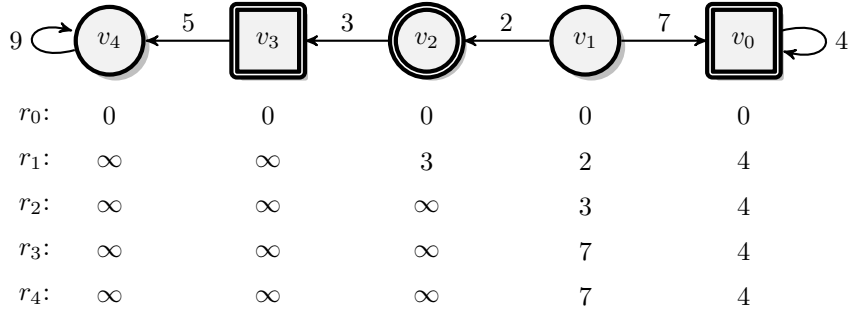
\begin{figure}
\centering
\begin{tikzpicture}

\node[pl1,accepting] (v0) {$v_0 $};
\node[pl0,left of = v0] (v1) {$v_1$};
\node[pl0,accepting,left of = v1] (v2) {$v_2$};
\node[pl1,accepting,left of = v2] (v3) {$v_3$};
\node[pl0,left of = v3] (v4) {$v_4$};

\path[->]
(v0) edge[loop right] node[right] {$4$} (v0)
(v1) edge node[above] {$7$} (v0)
(v1) edge node[above] {$2$} (v2)
(v2) edge node[above] {$3$} (v3)
(v3) edge node[above] {$5$} (v4)
(v4) edge[loop left] node[left] {$9$} ()
;

\tikzset{node distance = 1cm};

\node[below of = v0] (0-0) {$0$};
\node[below of = v1] (1-0) {$0$};
\node[below of = v2] (2-0) {$0$};
\node[below of = v3] (3-0) {$0$};
\node[below of = v4] (4-0) {$0$};

\tikzset{node distance = .6cm};

\node[below of = 0-0] (0-1) {$4$};
\node[below of = 0-1] (0-2) {$4$};
\node[below of = 0-2] (0-3) {$4$};
\node[below of = 0-3] (0-4) {$4$};

\node[below of = 1-0] (1-1) {$2$};
\node[below of = 1-1] (1-2) {$3$};
\node[below of = 1-2] (1-3) {$7$};
\node[below of = 1-3] (1-4) {$7$};

\node[below of = 2-0] (2-1) {$3$};
\node[below of = 2-1] (2-2) {$\infty$};
\node[below of = 2-2] (2-3) {$\infty$};
\node[below of = 2-3] (2-4) {$\infty$};

\node[below of = 3-0] (3-1) {$\infty$};
\node[below of = 3-1] (3-2) {$\infty$};
\node[below of = 3-2] (3-3) {$\infty$};
\node[below of = 3-3] (3-4) {$\infty$};

\node[below of = 4-0] (4-1) {$\infty$};
\node[below of = 4-1] (4-2) {$\infty$};
\node[below of = 4-2] (4-3) {$\infty$};
\node[below of = 4-3] (4-4) {$\infty$};

\tikzset{node distance = 1cm};

\node[left of = 4-0] {$\ranking_0$:};
\node[left of = 4-1] {$\ranking_1$:};
\node[left of = 4-2] {$\ranking_2$:};
\node[left of = 4-3] {$\ranking_3$:};
\node[left of = 4-4] {$\ranking_4$:};

\end{tikzpicture}
\caption{The arena for Example~\ref{example-ranking-limit} and the evolution of the corresponding rankings.}
\label{figure-rankingexample-limit}	
\end{figure}

\end{example}

To begin our proof of correctness, we show that the ranks assigned by the $\ranking_j$ are bounded by some polynomial that only depends on $\arena$ and $\aut$ (but is exponential if weights are encoded in binary). In particular, this implies that there is some  $n$ such that $\ranking_n = \ranking_{n+1}$. Again, we denote $\ranking_n$ for the smallest such $n$ as $\ranking^*$ (which is the greatest fixed point of $\limitlift$). 

\begin{lemma}
\label{lemma-limit-termination-naive}
Let $v \in V$ and $j \ge 0$. 
If $\ranking_j(v)< \infty$ then $\ranking_j(v) \le (\size{\arena} \cdot \size{\aut} +1) \cdot W$, where $W$ is the largest weight in $\arena$. 
\end{lemma}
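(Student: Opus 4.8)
The plan is to prove the statement by induction on $j$, bounding separately each of the three quantities that enter the definition of $\limitlift$. The base case $j = 0$ is immediate: since $\ranking_0(v) = 0$ for every $v$, the bound holds trivially (as $W \ge 0$).

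For the inductive step I assume the claim for $\ranking_j$, namely that every finite value $\ranking_j(v)$ satisfies $\ranking_j(v) \le (\size{\arena} \cdot \size{\aut} + 1) \cdot W$, and I consider a vertex $v$ with $\ranking_{j+1}(v) = \limitlift(\ranking_j)(v) < \infty$. By definition of $\limitlift$ there is an index $h$ realizing the outer minimum with $\max\set{\ranking_j(v), \ranking_h''(v), \mathfrak{r}_h} < \infty$, and it suffices to bound this maximum since $\ranking_{j+1}(v)$ is no larger. Each of its three arguments is then finite, so I would bound them in turn. The first argument $\ranking_j(v)$ is bounded directly by the induction hypothesis. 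The third argument $\mathfrak{r}_h$ is, by definition of $\ranking_j(F)$, a rank that $\ranking_j$ assigns to some vertex of $F$; being finite, it is again bounded by the induction hypothesis.

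The crux is the middle argument $\ranking_h''(v) = \complete_{F_h}(\ranking_h')(v)$, where $\ranking_h'$ is the least fixed point of $\lift_{F_h}$. Here I would invoke the reachability analysis: since all results about $\lift$ transfer to $\lift_{F_h}$, Corollary~\ref{coro-reachability-upperboundvalues} gives $\ranking_h'(u) \le \size{\arena} \cdot \size{\aut} \cdot W$ whenever $\ranking_h'(u) < \infty$. The completion $\complete_{F_h}$ leaves vertices outside $F_h$ unchanged and, for vertices in $F_h$, adds the weight of a single edge (a minimizing edge at a Player~$0$ vertex, a maximizing edge at a Player~$1$ vertex). Whenever $\ranking_h''(v)$ is finite, the relevant successor ranks $\ranking_h'(v')$ are finite and hence bounded by $\size{\arena} \cdot \size{\aut} \cdot W$, and adding at most $W$ for the extra edge yields $\ranking_h''(v) \le (\size{\arena} \cdot \size{\aut} + 1) \cdot W$. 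This is exactly where the additive $+1$ in the stated bound originates. Combining the three bounds, the maximum, and therefore $\ranking_{j+1}(v)$, is at most $(\size{\arena} \cdot \size{\aut} + 1) \cdot W$, completing the induction.

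The step I expect to require the most care is the treatment of $\ranking_h''$ at Player~$1$ vertices: there the completion takes a maximum over \emph{all} successors, so I must argue that finiteness of $\ranking_h''(v)$ forces every successor rank $\ranking_h'(v')$ to be finite (and thus bounded), rather than just one of them; the Player~$0$ case, which uses a single minimizing successor, is easier. A minor point to keep in mind is that $\ranking_j(F)$ may contain $\infty$, but such indices $h$ contribute $\infty$ to the outer minimum and can be ignored whenever $\ranking_{j+1}(v)$ is finite.
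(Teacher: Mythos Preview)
Your proposal is correct and follows essentially the same approach as the paper's proof: induction on $j$, with the inductive step bounding each of $\ranking_j(v)$, $\mathfrak{r}_h$, and $\ranking_h''(v)$ separately, invoking Corollary~\ref{coro-reachability-upperboundvalues} for $\ranking_h'$ and observing that the completion adds at most one edge weight~$W$. Your discussion of the Player~$1$ case for $\complete_{F_h}$ is more explicit than the paper's, but the argument is the same.
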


In the following, consider the  application of $\limitlift$ to $\ranking^*$: let the $\mathfrak{r}_h$, $F_h$, $\ranking_h'$, and $\ranking_h''$ be computed with respect to $\ranking^*$ as described above. For every $v \in V$, let $h(v)$ be such that
\[\ranking^*(v) = \min_{1 \le h \le k}(\max\set{\ranking^*(v), \ranking_{h}''(v), \mathfrak{r}_h }) = \max\set{\ranking^*(v), \ranking_{h(v)}''(v), \mathfrak{r}_{h(v)} }.\] If there are several possible values for $h(v)$, we pick the smallest one with this property (although this is inconsequential). 

Next, we define a finite-state strategy~$\sigma'$ for Player~$0$ in $\arena \times \mem_\aut$ implemented by a memory structure~$\mem' = (M', \init', \update')$ with $M' = \set{1, \cdots, k}$, $\init'(v) = h(v) $, and $\update'(h,v) = h$, if $v \notin F_h$, and $\update'(h,v) = \init'(v)$, if $v \in F_h$. Thus, the memory is initalized to $h(v)$ when starting at $v$ and stays constant until a vertex $v' \in F_{h(v)}$ is visited. While moving to $v'$, the memory is again initialized to $h(v')$ and stays constant until $F_{h(v')}$ is visited. This procedure is repeated ad infinitum.  It remains to define the next-move function: $\nxt'(v,h)$ is an optimal successor of $v$ with respect to $\ranking_h'$, if $v \notin F_h$, and an optimal successor of $v$ with respect to $\ranking_h''$, if $v \in F_h$. 
 Let $\sigma'$ be the strategy implemented by $\mem'$ and $\nxt'$ in $\arena \times \mem$ and let $\sigma$ be the strategy induced by $\mem$ and $\sigma'$ in $\arena$.

\begin{lemma}
\label{lemma-limit-valsigma}
We have $\val_{\game}(\sigma, v) \le \ranking^*(v, \init(v)) $ for every $v$ in $\arena$.	
\end{lemma}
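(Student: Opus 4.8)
The plan is to move everything into the product arena and then decompose a $\sigma'$-play into the phases dictated by the memory $\mem'$. First I would use Remark~\ref{remark-extendingplayspreservesconsistency-onsteroids}, together with the fact (Page~\pageref{page-autmem}) that a prefix has its color sequence in $K$ exactly when the reached product vertex lies in $F$, and the fact that extension preserves weights and colors. This reduces the claim to showing $\val(\pi) \le \ranking^*(u_0)$ for every play $\pi = u_0 u_1 u_2 \cdots$ in $\arena \times \mem_\aut$ with $u_0 = (v,\init(v))$ that is consistent with $\sigma'$, where $\val(\pi) = \sup_{j} \min_{j' > j,\, u_{j'} \in F} \weight(u_j \cdots u_{j'})$. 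If $\ranking^*(u_0) = \infty$ there is nothing to prove, so I assume $\ranking^*(u_0) < \infty$. Fixing such a $\pi$, I record its memory trace $m_0 m_1 \cdots$ under $\mem'$ and the associated reset positions $0 = p_0 < p_1 < \cdots$, i.e., the positions where the memory is (re)initialized. By the definition of $\mem'$, we have $m_{p_i} = h(u_{p_i})$, and the memory stays equal to $h := h(u_{p_i})$ until the first position $p_{i+1} > p_i$ with $u_{p_{i+1}} \in F_h$, which is the next reset.

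The second step is a per-phase weight bound. Fix a phase $[p_i, p_{i+1}]$ and set $h = h(u_{p_i})$. By the definition of $\nxt'$, inside this phase Player~$0$ follows $\ranking_h'$-optimal successors at vertices outside $F_h$ (and a $\ranking_h''$-optimal first move if $u_{p_i} \in F_h$). Applying Lemma~\ref{lemma-reachability-rankingproperties} to $\lift_{F_h}$, along every edge leaving a Player-$0$ vertex outside $F_h$ the rank $\ranking_h'$ drops by exactly the edge weight, and along every edge from a Player-$1$ vertex outside $F_h$ it drops by at least the edge weight. Telescoping from $u_{p_i}$ (using the $\ranking_h''$-optimal first step in the case $u_{p_i}\in F_h$, where $\ranking_h'' = \complete_{F_h}(\ranking_h')$) down to $u_{p_{i+1}} \in F_h$, at which $\ranking_h' = 0$, yields $\weight(u_{p_i} \cdots u_{p_{i+1}}) \le \ranking_h''(u_{p_i})$. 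Since $h = h(u_{p_i})$ realizes the minimum defining $\limitlift(\ranking^*)(u_{p_i}) = \ranking^*(u_{p_i})$, all three terms $\ranking^*(u_{p_i})$, $\ranking_h''(u_{p_i})$, $\mathfrak{r}_h$ are at most $\ranking^*(u_{p_i})$; in particular $\weight(u_{p_i}\cdots u_{p_{i+1}}) \le \ranking_h''(u_{p_i}) \le \ranking^*(u_{p_i})$. As $\ranking^*(u_{p_i}) < \infty$ (established next) forces $\ranking_h'$ to be finite at $u_{p_i}$ and its relevant successor, the reachability analysis behind Lemma~\ref{lemma-reach-sigmaoptimal} guarantees $F_h$ is genuinely reached, so every phase is finite and there are infinitely many reset positions.

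The third step anchors every phase to the initial rank via the monotonicity $\ranking^*(u_{p_{i+1}}) \le \ranking^*(u_{p_i})$. This is where the hierarchy $F_1 \subseteq \cdots \subseteq F_k$ is essential: since $u_{p_{i+1}} \in F_h = \set{u \in F \mid \ranking^*(u) \le \mathfrak{r}_h}$, we get $\ranking^*(u_{p_{i+1}}) \le \mathfrak{r}_h = \mathfrak{r}_{h(u_{p_i})} \le \ranking^*(u_{p_i})$, the last inequality again because $h(u_{p_i})$ realizes the defining minimum. Inducting gives $\ranking^*(u_{p_i}) \le \ranking^*(u_0)$ for all $i$, and combining with the per-phase bound yields $\weight(u_{p_i}\cdots u_{p_{i+1}}) \le \ranking^*(u_0)$. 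Finally, to conclude, I fix any position $j$ and take the first reset $p_{i+1}$ after $j$ (with $p_i \le j < p_{i+1}$), which exists by the infinitude of resets. Since $u_{p_{i+1}} \in F$ and weights are nonnegative, the first $F$-visit after $j$ is no later than $p_{i+1}$, so $\min_{j' > j,\, u_{j'} \in F} \weight(u_j\cdots u_{j'}) \le \weight(u_j \cdots u_{p_{i+1}}) \le \weight(u_{p_i}\cdots u_{p_{i+1}}) \le \ranking^*(u_0)$; taking the supremum over $j$ gives $\val(\pi) \le \ranking^*(u_0) = \ranking^*(v,\init(v))$.

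I expect the main obstacle to be the clean separation between the inner reachability fixed point and the outer hierarchy. Concretely, the delicate points are (i) verifying that the bookkeeping in $\mem'$ really enforces the intended "descend through the hierarchy $F_1 \subseteq \cdots \subseteq F_k$" behaviour, so that each newly reached goal vertex has $\ranking^*$-rank no larger than the current one; and (ii) correctly handling the boundary case $u_{p_i} \in F_h$, where the first edge is governed by the completion $\ranking_h''$ rather than by $\ranking_h'$, and where a phase may collapse to a single step. Everything else is a routine telescoping of the reachability rank inequalities from Lemma~\ref{lemma-reachability-rankingproperties} together with nonnegativity of weights.
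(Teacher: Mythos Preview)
Your proposal is correct and follows essentially the same approach as the paper: lift to the product arena, decompose a $\sigma'$-consistent play into phases delimited by change-points (your reset positions), bound the weight of each phase by $\ranking_{h}''(u_{p_i}) \le \ranking^*(u_{p_i})$ via the reachability analysis, and propagate the bound back to $\ranking^*(u_0)$ using the hierarchy monotonicity $\ranking^*(u_{p_{i+1}}) \le \mathfrak{r}_h \le \ranking^*(u_{p_i})$. The paper organizes the case distinction around whether the current change-point already lies in $F_{h_j}$, but the underlying telescoping argument and the use of the completion $\ranking_h''$ for the first step are identical to yours.
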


Recall that we have a sequence~$\ranking_0 \sqsupseteq \ranking_1 \sqsupseteq \cdots  \sqsupseteq \ranking_n = \ranking_{n+1} = \ranking^*$ of rankings with $\ranking_{j+1} = \limitlift(\ranking_j)$ for every $j \leq n$. 
Here, we define the settling time~$t_s(v)$ of a vertex~$v \in V$ as the minimal~$j$ with $\ranking_j(v) = \ranking^*(v)$. 

\begin{remark}
\label{remark-limit-settlingtimes}
$\ranking^*(v) > 0$ implies $t_s(v) > 0$ and $\ranking_{t_s(v)-1}(v) < \ranking_{t_s(v)}(v)$.
\end{remark}

Next, we define a finite-state strategy~$\tau'$ for Player~$1$ in $\arena \times \mem_\aut$ implemented by a memory structure~$\mem' = (M', \init'\, \update')$ with $M' = V$, $\init'(v) = v$, and $\update'(v,v') = v $, if $v' \notin F$, and $\update'(v,v') = \init'(v') = v' $, if $v' \in F$ (recall that the first argument of an update function is the current memory state and the second one a vertex).
To define the next-move function, we distinguish three types of vertices~$v \in V$.

 We say $v$ is of type zero, if $\ranking^*(v) = 0$. If this is not the case, i.e., if $\ranking^*(v) > 0$, then we have
 \begin{equation}
 \label{eq-pl1limit}	
 \ranking^*(v) = \ranking_{t_s(v)}(v) = \min_h (\max\set{ \ranking_h''(v), \mathfrak{r}_h})
 \end{equation}
 due to Remark~\ref{remark-limit-settlingtimes},
where the $\ranking_h''$ and $\mathfrak{r}_h$ are computed with respect to $\ranking_{t_s(v)-1}$. 
Now, we say $v$ is of type one, if there is an $h$ such that $\ranking^*(v) = \ranking_h''(v)$. Then, we define $h(v)$ to be the maximal $h$ with this property.

Finally, if there is no $h$ with $\ranking^*(v) = \ranking_h''(v)$, then we must have $\ranking^*(v) = \mathfrak{r}_h$ for some $h$. Due to the $\mathfrak{r}_h$ being strictly increasing, there is a unique $h = h(v)$ with this property. In this case, we say $v$ is of type two. 

Now, if $v$ is of type zero, then we define $\nxt'(v', v)$ to be an arbitrary successor of $v'$ (recall that the first argument of a next-move function is the current vertex and the second one the current memory state). If $v$ is of type one, then we define $\nxt'(v' v)$ to be an optimal successor of $v'$ with respect to $\ranking_{h(v)}''$. Finally, if $v$ is of type two and we have $h(v) = 1$, then let $\nxt'(v' v)$ be an arbitrary successor of $v'$. On the other hand, if $v$ is of type two and we have $h(v) > 1$, then let $\nxt'(v' v)$ be an optimal successor of $v'$ with respect to $\ranking_{h(v)-1}''$.
 Let $\tau'$ be the strategy implemented by $\mem'$ and $\nxt'$ in $\arena \times \mem$ and let $\tau$ be the strategy induced by $\mem$ and $\tau'$ in $\arena$. 

\begin{lemma}
\label{lemma-limit-valtau}
We have $\val_{\game}(\tau, v) \ge \ranking^*(v, \init(v)) $ for every $v$ in $\arena$.	
\end{lemma}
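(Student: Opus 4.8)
The plan is to prove the stronger statement that \emph{every} play consistent with $\tau$ from $v$ has value at least $\ranking^*(v,\init(v))$; the claimed inequality for $\val_\game(\tau,v)$ (read, as the intended use in proving optimality of $\sigma$ demands, as an infimum over consistent plays) then follows. First I would reduce to the product arena exactly as for $\sigma$: by Remark~\ref{remark-extendingplayspreservesconsistency-onsteroids} a play $\rho$ is consistent with $\tau$ in $\arena$ iff $\ext(\rho)$ is consistent with $\tau'$ in $\arena\times\mem_\aut$, and the two have the same weight and color sequences, while $\col(v_0\cdots v_{j'})\in K$ holds exactly when the product vertex at position $j'$ lies in $F$. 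Hence I fix an arbitrary play $\pi=u_0u_1u_2\cdots$ consistent with $\tau'$ with $u_0=(v,\init(v))$ and show its value (computed in the product) is at least $\ranking^*(u_0)$; the case $\ranking^*(u_0)=0$ is trivial. Since the value is a supremum over starting positions $j$ of the weight to the \emph{next} $F$-visit after $j$, it suffices to exhibit one position $j$ (the start of a suitable phase) whose infix up to the following $F$-visit has weight at least $\ranking^*(u_0)$, or to show that from some point on $F$ is never reached again (giving value $\infty$).

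Next I would record the Player~$1$ analogue of the reachability machinery: applying the Player~$1$ construction from the proof of Lemma~\ref{lemma-reach-sigmaoptimal} to $\lift_{F'}$ (and its completion at vertices of $F'$), Player~$1$ forces from any $w$ an accumulated weight of at least $\ranking'(w)$, respectively $\ranking''(w)$, before $F'$ is reached, and weight $\infty$ if $F'$ is never reached. I then analyse a single \emph{phase}, i.e.\ the infix from a memory-setting vertex $u$ (the last $F$-vertex, or $u_0$) up to the next $F$-visit $u'$, where $\tau'$ targets the layer $F_{h(u)}$ (type one) or $F_{h(u)-1}$ (type two). Either (a) $u'$ lies in the target layer, in which case the dual reachability bound makes the phase weight at least $\ranking_{h(u)}''(u)=\ranking^*(u)$ (type one) or at least $\ranking_{h(u)-1}''(u)\ge\mathfrak{r}_{h(u)}=\ranking^*(u)$ (type two), using \eqref{eq-pl1limit}; or (b) $u'$ is an $F$-vertex \emph{outside} the target layer, so its reference rank exceeds $\mathfrak{r}_{h(u)}$. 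Here the \emph{maximality} of $h(u)$ is the load-bearing design choice: it forces $\ranking_{h(u)+1}''(u)<\ranking^*(u)$, whence $\mathfrak{r}_{h(u)+1}\ge\ranking^*(u)$, and since $u'$'s reference rank is one of the thresholds and exceeds $\mathfrak{r}_{h(u)}$, it is at least $\mathfrak{r}_{h(u)+1}$; as reference ranks bound $\ranking^*$ from below (the limit iteration is ascending), this yields $\ranking^*(u')\ge\ranking^*(u)$.

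Consequently the values $\ranking^*$ of the successive memory vertices $u_0,u^{(1)},u^{(2)},\dots$ form a non-decreasing sequence, bounded by $(\size{\arena}\cdot\size{\aut}+1)\cdot W$ via Lemma~\ref{lemma-limit-termination-naive}, hence eventually constant. I would then argue that alternative~(a) must occur (or the play stops visiting $F$): as long as (b) keeps happening the memory value strictly climbs the finite hierarchy of distinct ranks, and in the residual regime where $\ranking^*$ is already constant I intend to use the settling times $t_s$ (together with Remark~\ref{remark-limit-settlingtimes}) as a secondary, well-founded measure that strictly decreases along the abandoned-phase transitions, exactly as settling time rules out infinite weight-$0$ descent in the reachability analysis. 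Whichever way it resolves, we obtain either a phase realising alternative~(a), whose weight is at least the (constant) limit value $\ge\ranking^*(u_0)$, or a play that eventually avoids $F$ and therefore has value $\infty$. The three vertex types (including the trivial type zero) and the completion at $F$-vertices — needed because a phase begins \emph{at} an $F$-vertex, so its first edge already counts — are handled by the same case analysis.

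The main obstacle is precisely this termination-and-bookkeeping step: the hierarchy $F_h$, the thresholds $\mathfrak{r}_h$, and the completed rankings $\ranking_h''$ that drive $\tau'$ are computed relative to the \emph{settling-time-dependent} reference ranking $\ranking_{t_s(u)-1}$, which varies from one memory vertex to the next, so before the per-phase guarantees can be certified to be at least $\ranking^*(u_0)$ I must reconcile these local data with the fixed-point data of $\ranking^*$. Compounding this, type-two vertices defer the actual weight accumulation to a later, lower-layer $F$-vertex, producing a recursion through the hierarchy; showing that this recursion, combined with the constant-$\ranking^*$ case, genuinely bottoms out at a type-one phase realising alternative~(a) is the delicate heart of the argument, and it is where the well-foundedness of the combined rank/settling-time measure must be made airtight.
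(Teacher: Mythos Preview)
Your proposal is correct and follows essentially the same approach as the paper's proof: a phase-by-phase (change-point) analysis in the product arena, distinguishing the vertex types, and using the lexicographic order on $(\ranking^*,t_s)$ for well-foundedness so that a phase realising alternative~(a) must eventually occur. The only cosmetic difference is that the paper separates your alternative~(b) into two explicit subcases (rank strictly increases, versus rank constant and settling time strictly decreases) and in particular shows that for type-one vertices case~(b) always yields a \emph{strict} rank increase; your uniform treatment via the settling-time tiebreaker is equivalent and works for the same reason, namely that $\ranking_{t_s(u)-1}(u')=\ranking^*(u')$ whenever the rank stays constant, forcing $t_s(u')<t_s(u)$.
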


Lemmata~\ref{lemma-limit-valsigma} and \ref{lemma-limit-valtau} imply that $\sigma$ and $\tau$ are optimal strategies (where optimality of Player~$1$ strategies is defined as expected), i.e., the first part of our main theorem is proven.

The construction of $\tau$ also yields an upper bound on the number of iterations of $\limitlift$ that are necessary to reach the fixed point.

\begin{lemma}
\label{lemma-limit-termination}
We have $\ranking^* = \ranking_{\size{F}+1}$.
\end{lemma}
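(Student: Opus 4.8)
The plan is to separate the two effects that drive the iteration: the stabilization of the ranks \emph{on}~$F$, and the subsequent propagation of those ranks to all other vertices. Writing $S_j=\set{v\in V\mid \ranking_j(v)=\ranking^*(v)}$ for the set of vertices settled by step~$j$, the fact that each $\ranking_j(v)$ is nondecreasing in~$j$ and bounded above by $\ranking^*(v)$ (Remark~\ref{remark-limit-rankingmonotonicity}) gives $S_j\subseteq S_{j+1}$ at once, so a vertex never un-settles. I would reduce the claim to bounding $m_F:=\max_{v\in F}t_s(v)$, the step by which every vertex of~$F$ has reached its final rank, and prove the two statements $\ranking^*=\ranking_{m_F+1}$ and $m_F\le\size{F}$ separately.

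For the propagation step I would exploit that $\limitlift$ reads its argument only through its restriction to~$F$ together with a diagonal term. Indeed $\mathfrak{r}_h$, $F_h$, the least fixed point $\ranking_h'$ of $\lift_{F_h}$, and its completion $\ranking_h''$ all depend on $\ranking$ solely via $\ranking|_F$; combining this with the elementary identity $\min_h\max\set{x,a_h}=\max\set{x,\min_h a_h}$ lets one rewrite $\limitlift(\ranking)(v)=\max\set{\ranking(v),\,a_v(\ranking|_F)}$, where $a_v(\ranking|_F)=\min_{1\le h\le k}\max\set{\ranking_h''(v),\mathfrak{r}_h}$ depends on $\ranking|_F$ alone. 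Hence, as soon as $\ranking_m|_F=\ranking^*|_F$, the value $a_v(\ranking_m|_F)$ equals the constant $a_v(\ranking^*|_F)$ at every later step, so $\ranking_{m+1}(v)=\max\set{\ranking_m(v),a_v(\ranking^*|_F)}$ satisfies $\ranking_{m+1}=\limitlift(\ranking_{m+1})$; being a fixed point of $\limitlift$, it equals $\ranking^*$. Taking $m=m_F$ yields $\ranking^*=\ranking_{m_F+1}$.

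It remains to bound $m_F$ by $\size{F}$, which I would obtain from the key combinatorial fact that a \emph{new} vertex of~$F$ settles at every iteration before~$m_F$, i.e.\ $S_j\cap F\subsetneq S_{j+1}\cap F$ whenever $S_j\cap F\neq F$. Granting this, the chain $S_0\cap F\subsetneq S_1\cap F\subsetneq\cdots\subsetneq S_{m_F}\cap F=F$ has at most $\size{F}$ strict inclusions, so $m_F\le\size{F}$, and with the previous paragraph $\ranking^*=\ranking_{m_F+1}=\ranking_{\size{F}+1}$ (the iteration being constant from the first fixed point onward). To see why a fresh vertex settles, I would feed Remark~\ref{remark-limit-settlingtimes} into the representation~\eqref{eq-pl1limit} and the type analysis built for~$\tau$: for a vertex $v\in F$ with $t_s(v)\ge 1$, the term realizing the outer minimum in~\eqref{eq-pl1limit}—a target set $F_{h(v)}$ in the type-one case, a rank value $\mathfrak{r}_{h(v)}$ attained on~$F$ in the type-two case—points to a vertex of~$F$ whose rank is already final one step earlier, which is the witness that allows $v$'s new rank to be justified.

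The main obstacle is exactly this last point. In contrast to the reachability setting, a vertex of~$F$ may have its rank raised by $\limitlift$ several times before stabilizing—as does vertex~$v_1$ in Example~\ref{example-ranking-limit}, whose rank moves $0\to2\to3\to7$—so a change of $\ranking_j|_F$ does not by itself certify that any vertex has reached its final value. I therefore expect the delicate step to be ruling out a ``stall'', namely a step at which $\ranking_j|_F$ changes yet no vertex of~$F$ settles, which is precisely the assertion that the witness supplied by~\eqref{eq-pl1limit} is genuinely already settled. The cleanest way I see to secure it is to read $a_v(\ranking|_F)$ as the value of the auxiliary reachability game that reaches~$F$ paying the maximum of the accumulated weight and the terminal rank of the first vertex of~$F$ reached—which is what Lemmata~\ref{lemma-limit-valsigma} and~\ref{lemma-limit-valtau} establish—to note that this value is monotone in the terminal ranks $\ranking|_F$, and then to select at each step an unsettled vertex whose optimal such play relies only on already-settled vertices of~$F$; the existence of such a ``ready'' vertex is what forces one new element of~$F$ to settle per iteration.
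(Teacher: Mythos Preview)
Your decomposition into (a) ``once $\ranking|_F$ has stabilized, one further application of $\limitlift$ reaches the fixed point'' and (b) ``every vertex of $F$ settles within $\size{F}$ steps'' is a genuinely different route from the paper's, and part~(a) is correct and clean: the identity $\min_h\max\set{x,a_h}=\max\set{x,\min_h a_h}$ together with the fact that the $\mathfrak{r}_h$, $F_h$, $\ranking_h'$, $\ranking_h''$ depend only on $\ranking|_F$ indeed yields $\limitlift(\ranking)(v)=\max\set{\ranking(v),a_v(\ranking|_F)}$, and then idempotence of the outer $\max$ shows $\ranking_{m_F+1}$ is a fixed point, hence equal to $\ranking^*$.

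The gap is in part~(b), and you have correctly located but not closed it. Your appeal to the type analysis does not establish that the ``witness'' is settled. In the type-two case you only know that $\mathfrak{r}_{h(v)}=\ranking_{t_s(v)-1}(w)$ for some $w\in F$; nothing prevents $\ranking^*(w)>\ranking_{t_s(v)-1}(w)$, so $w$ need not lie in $S_{t_s(v)-1}$. The same issue arises in type one: the target set $F_{h(v)}$ is a level set of $\ranking_{t_s(v)-1}$, not of $\ranking^*$. Your fallback---finding a ``ready'' unsettled vertex whose optimal play relies only on already-settled terminals---is also incomplete: to force $a_v(\ranking_j|_F)\ge\ranking^*(v)$ you need a \emph{Player~$1$} strategy that, against every Player~$0$ response, either accumulates weight $\ge\ranking^*(v)$ or lands in some $w\in F$ with $\ranking_j(w)\ge\ranking^*(v)$, and monotonicity of $a_v$ gives the inequality in the wrong direction. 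Even taking $v$ of minimal $\ranking^*$-value among the unsettled $F$-vertices does not suffice, since Player~$1$'s optimal strategy (for terminal ranks $\ranking^*|_F$) may drive the play to an unsettled $w$ with $\ranking^*(w)\ge\ranking^*(v)$ but $\ranking_j(w)<\ranking^*(v)$.

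The paper avoids this counting argument altogether. It bounds $t_s(v)$ for \emph{every} vertex directly by a distance $d(v)$: along any play consistent with $\tau$, the three-case analysis from the proof of Lemma~\ref{lemma-limit-valtau} forces the pair $(\ranking^*,t_s)$ to move strictly in a lexicographic order at each $F$-visit until an infix of weight at least $\ranking^*(v)$ appears, so a pigeonhole on vertices of $F$ gives $d(v)\le\size{F}$; then an induction on $d(v)$, using $\tau$ to identify the first-reached $F$-vertices and to certify $\ranking_{t+1}(v)\ge\ranking^*(v)$, yields $t_s(v)\le d(v)+1$. This gives $\ranking^*=\ranking_{\size{F}+1}$ in one stroke without ever arguing that a fresh $F$-vertex settles at each iteration.
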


It remains to determine the overall running time of our algorithm.
Recall that we have defined $F$ to be the product of the set of  vertices of the arena~$\arena$ and the accepting states of $\aut$.
Untangling the construction above shows that the fixed point of $\limitlift$ can be computed in time~$\bigo(n^3 e s^2 f^2 ) $, where $n$ and $e$ are the number of vertices and edges of $\arena$ and  $s$ and $f$ are the number of  states and accepting states of $\aut$: Due to Lemma~\ref{lemma-limit-termination}, it takes at most $\size{F}+1 = n \cdot f+1$ applications of $\limitlift$ to reach the fixed point, each taking at most $\size{F}$ computations of a fixed point of $\lift_{F'}$. Each of these takes at most $n\cdot s +1$ applications of $\lift$, which each takes time $e\cdot s$ in the unit-cost model. 
 
Note that optimal strategies for Player~$0$ in $\arena$ are implemented by memory structures that do not need to keep track of weights of play prefixes, only pairs of vertices and states. The following corollary gives an upper bound on the size and quality of optimal strategies. 

\begin{lemma}
\label{lemma-limit-upperbounds}
Let $\game = (\arena, \lim(L(\aut)))$ be a weighted reachability game with $n$ vertices and largest weight~$W$, and let $s$ and $f$ be the number of states and accepting states of $\aut$. Then, Player~$0$ has an optimal strategy for $\game$ of size~$nsf$ with $\val_{\game}(v) \le (ns +1)\cdot W$ for all vertices~$v$ with $\val_{\game}(v) < \infty$. 
\end{lemma}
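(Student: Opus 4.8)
The plan is to derive both bounds directly from the construction of the optimal strategy~$\sigma$ and from the rank estimates already available, so essentially no new work is required beyond careful bookkeeping. Throughout I use that Lemmata~\ref{lemma-limit-valsigma} and~\ref{lemma-limit-valtau} have already established that $\sigma$ and $\tau$ are optimal, and hence that the optimal value of $\game$ at a vertex~$v$ of $\arena$ is exactly $\val_\game(v) = \ranking^*(v,\init(v))$: the upper bound $\val_\game(\sigma,v) \le \ranking^*(v,\init(v))$ and the matching lower bound witnessed by $\tau$ pin it down.

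For the value bound I would argue as follows. Fix a vertex~$v$ with $\val_\game(v) < \infty$. By the identification above, $\ranking^*(v,\init(v)) = \val_\game(v) < \infty$. Now $\ranking^*$ is one of the iterates of the fixed-point sequence, namely $\ranking^* = \ranking_{\size{F}+1}$ by Lemma~\ref{lemma-limit-termination} (any index at or beyond the settling time would do). Since $\ranking^*(v,\init(v))$ is finite, Lemma~\ref{lemma-limit-termination-naive}, applied to this iterate, yields $\ranking^*(v,\init(v)) \le (\size{\arena}\cdot\size{\aut}+1)\cdot W = (ns+1)\cdot W$. Substituting back gives $\val_\game(v) \le (ns+1)\cdot W$, as claimed.

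For the size bound I would count the memory states of the structure implementing~$\sigma$. Recall that $\sigma$ is the strategy induced by $\mem_\aut$ together with the finite-state strategy~$\sigma'$ on $\arena\times\mem_\aut$, where $\sigma'$ is implemented by $\mem' = (M',\init',\update')$ with $M' = \set{1,\dots,k}$. By the product-of-memory construction, $\sigma$ is implemented by $\mem_\aut\times\mem'$, whose size is $\size{\mem_\aut}\cdot\size{\mem'} = s\cdot k$. It remains to bound~$k$, the number of distinct ranks that $\ranking^*$ assigns to the product vertices in~$F$; clearly $k \le \size{F}$. Since $F$ consists of product vertices $(v,q)$ with $q$ an accepting state of~$\aut$, we have $\size{F} \le n\cdot f$, and therefore the total size is at most $s\cdot n f = nsf$. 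The only step I expect to require genuine care is precisely this accounting: one must invoke the product-of-memory construction cleanly to confirm that the implementing structure is exactly $\mem_\aut\times\mem'$ and that its size multiplies as $s\cdot k$ (not $s+k$), and that $k$ is correctly bounded by $\size{F}$ rather than by the full vertex count of the product arena. Everything else is a direct appeal to Lemmata~\ref{lemma-limit-valsigma}, \ref{lemma-limit-valtau}, \ref{lemma-limit-termination-naive}, and~\ref{lemma-limit-termination}.
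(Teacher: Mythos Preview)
Your proposal is correct and matches the paper's own treatment: the paper presents Lemma~\ref{lemma-limit-upperbounds} as a direct corollary of the construction of~$\sigma$ and of Lemmata~\ref{lemma-limit-termination-naive}, \ref{lemma-limit-valsigma}, and~\ref{lemma-limit-valtau}, without a separate proof, and your bookkeeping (memory size $\size{\mem_\aut}\cdot\size{\mem'}=s\cdot k\le s\cdot\size{F}\le nsf$; value bound via $\ranking^*$ being an iterate and Lemma~\ref{lemma-limit-termination-naive}) is exactly the intended argument. One trivial remark: you do not need Lemma~\ref{lemma-limit-termination} to know that $\ranking^*$ is one of the $\ranking_j$, since $\ranking^*$ is by definition $\ranking_n$ for the least $n$ with $\ranking_n=\ranking_{n+1}$.
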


The following example shows that both the upper bound on the memory size and the upper bound on the value of an optimal strategy are (almost) tight.

\begin{example}
\label{example-limit-lowerbounds} 
\mbox{}\hfill
\begin{enumerate}
	\item \label{example-limit-lowerbounds-memory}
We begin with the lower bound on the memory.
Consider the arena~$\arena_n$ and the automaton~$\aut_s$ (for $n>0$ and $s>1$) depicted in Figure~\ref{fig-limit-lowerbounds} inducing the game~$\game_{n,s}$. The automaton accepts the language~$a(a^{s-1}b)^*c$. Note that we can identify (winning) strategies for Player~$0$ with (winning) plays, as all vertices are controlled by Player~$0$. Also, from every vertex~$v_j$ there is a unique play (strategy)~$\rho_j = v_jv^{s-1}v' (v_j')^\omega$ with $\val_{\game_{n,s}}(\rho_j) = n+1+j$. Every other play starting in $v_j$ has a larger value. Hence, there is a unique optimal strategy for Player~$0$, which, for every $j$, yields the play~$\rho_j$ when starting in $v_j$.

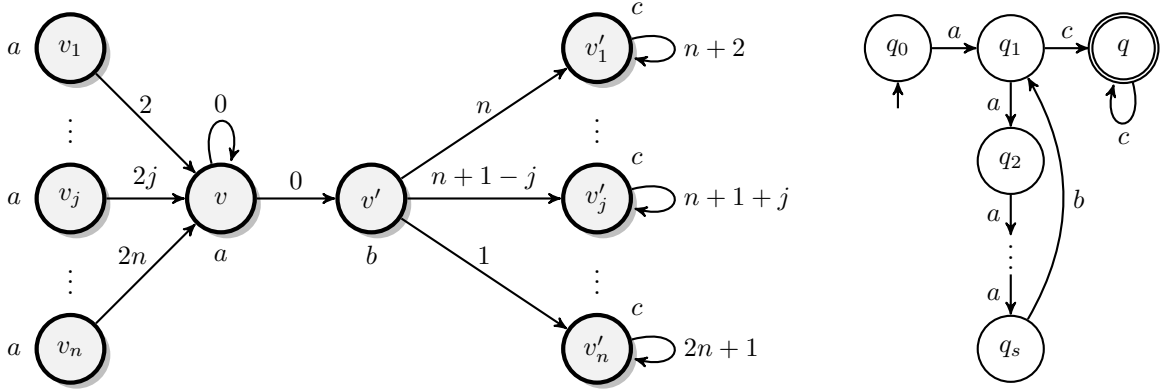
\begin{figure}
\centering
\begin{tikzpicture}[node distance = .75cm]

\node[pl0] (s1) at (-4,2) {$v_1$};
\node (sdotsa) at (-4,1) {$\vdots$};
\node[pl0] (sj) at (-4,0) {$v_j$};
\node (sdotsb) at (-4,-1) {$\vdots$};
\node[pl0] (sm) at (-4,-2) {$v_n$};

\node[pl0] (v1) at (-2,0) {$v$};
\node[pl0] (v1p) at (0,0) {$v'$};

\node[pl0] (t1) at (3,2) {$v_1'$};
\node (tdotsa) at (3,1) {$\vdots$};
\node[pl0] (tj) at (3,0) {$v_j'$};
\node (tdotsb) at (3,-1) {$\vdots$};
\node[pl0] (tm) at (3,-2) {$v_n'$};

\node[below of = v1] {$a$};
\node[below of = v1p] {$b$};

\node[left of = s1] {$a$};
\node[left of = sj] {$a$};
\node[left of = sm] {$a$};

\node[above right of = t1] {$c$};
\node[above right of = tj] {$c$};
\node[above right of = tm] {$c$};

\path[->]
(s1) edge node[above] {$2$} (v1)
(sj) edge node[above] {$2j$} (v1)
(sm) edge node[xshift=-5pt,above] {$2n$} (v1)

(v1) edge[loop above] node[above] {$0$} ()
(v1) edge node[above] {$0$} (v1p)

(v1p) edge node[above] {$n$} (t1)
(v1p) edge node[above] {$n+1-j$} (tj)
(v1p) edge node[above] {$1$} (tm)

(t1) edge[loop right] node[right] {$n+2$} ()
(tj) edge[loop right] node[right] {$n+1+j$} ()
(tm) edge[loop right] node[right] {$2n+1$} ()
;

\node[state] (q0) at (7,2) {$q_0$};
\node[state] (q1) at (8.5,2) {$q_1$};
\node[state] (q2) at (8.5,0.5) {$q_2$};

\node[inner sep = 7pt,label={[yshift=2pt]center:$\vdots$}] (qdots) at (8.5,-.75) {};

\node[state] (qn) at (8.5,-2) {$q_{s}$};
\node[state,accepting] (q) at (10,2) {$q$};

\path[->]
(q1) edge node[left] {$a$} (q2)
(q2) edge node[left] {$a$} (qdots)
(qdots) edge node[left] {$a$} (qn)
(qn) edge[bend right] node[right] {$b$} (q1)
(q1) edge node[above] {$c$} (q) 
(q0) edge node[above] {$a$} (q1)
(7,1.2) edge (q0)
(q) edge[loop below] node[below] {$c$} ()
;

\end{tikzpicture}	
\caption{The arena~$\arena_n$ (left) and the automaton~$\aut_s$ (right) for the lower bounds in Example~\ref{example-limit-lowerbounds}.\ref{example-limit-lowerbounds-memory}. Here, $a$, $b$, and $c$ are the colors of the vertices. Furthermore, all missing transitions of the automaton lead to a rejecting sink state that is not drawn for the sake of readability.}
\label{fig-limit-lowerbounds}
\end{figure}

Furthermore, standard pumping arguments show that every strategy for Player~$0$ yielding, for every $j$, the play~$\rho_j$ when starting at $v_j$ has at least $n(s-1)$ states, which are required to reach $v_j'$ when starting at $v_j$ and to be able to traverse the self-loop at the vertex~$v$ exactly $n-2$ times, as required by the winning condition. Note that this lower bound does not take the number of accepting states into account, i.e., it is not completely tight.

\item \label{example-limit-lowerbounds-value} Next, we consider the lower bound on the value of an optimal strategy for Player~$0$. 
Figure~\ref{fig-value-lowerbounds2} depicts an arena~$\arena_m$ and a DFA~$\aut_n$ (for $m>1$ and $n>1$), which accepts the language~$((a^{n-1}b)^*c)^*$. 
Note that we can identify (winning) strategies for Player~$0$ with (winning) plays, as all vertices are controlled by Player~$0$.
Actually, there is a unique winning play (i.e., winning strategy) for Player~$0$ starting in $v_1$, i.e., the play 
\[((v_1)^{s-1}v_1'  (v_2)^{s-1}v_2' \cdots (v_n)^{s-1}v_n' v)^\omega\] with value~$mnW$. Hence, the value of an optimal strategy from $v_1$ is $mnW$.

\begin{figure}
\centering
\begin{tikzpicture}[node distance = .75cm]
\node[pl0] (v1) at (0,0) {$v_1$};
\node[pl0] (v1p) at (2,0) {$v_1'$};

\node[pl0] (v2) at (4,0) {$v_2$};
\node[pl0] (v2p) at (6,0) {$v_2'$};

\node (dots) at (8,0) {$\cdots$};

\node[pl0] (vm) at (10,0) {$v_n$};
\node[pl0] (vmp) at (12,0) {$v_n'$};
\node[pl0] (v) at (14,0) {$v$};

\node[below of = v1] {$a$};
\node[below of = v1p] {$b$};
\node[below of = v2] {$a$};
\node[below of = v2p] {$b$};
\node[below of = vm] {$a$};
\node[below of = vmp] {$b$};
\node[below of = v] {$c$};

\path[->]
(v1) edge[loop above] node[above] {$W$} ()
(v1) edge node[above] {$W$} (v1p)
(v1p) edge node[above] {$W$} (v2)
(v2) edge[loop above] node[above] {$W$} ()
(v2) edge node[above] {$W$} (v2p)
(v2p) edge node[above] {$W$} (dots)

(dots) edge node[above] {$W$} (vm)
(vm) edge[loop above] node[above] {$W$} ()
(vm) edge node[above] {$W$} (vmp)
(vmp) edge node[above] {$W$} (v)
(v) edge[bend right] node[below] {$W$} (v1);

\node[state,accepting] (q0) at (2,-2) {$q_0$};

\node[state] (q1) at (4,-2) {$q_1$};
\node[state] (q2) at (6,-2) {$q_2$};

\node[] (qdots) at (8,-2) {$\cdots$};

\node[state] (qn) at (10,-2) {$q_{s}$};

\path[->]
(q1) edge node[above] {$a$} (q2)
(q2) edge node[above] {$a$} (qdots)
(qdots) edge node[above] {$a$} (qn)
(qn) edge[bend right] node[below] {$b$} (q1)
(q1) edge[] node[above] {$c$} (q0) 
(q0) edge[bend right] node[below] {$a$} (q2)
(4,-1.1) edge (q1)
;

\end{tikzpicture}	
\caption{The arena~$\arena_n$ (top) and the automaton~$\aut_s$ (bottom) for the lower bounds in Example~\ref{example-limit-lowerbounds}.\ref{example-limit-lowerbounds-value}. Here, $W$ is an arbitrary nonnegative integer and $a$, $b$, and $c$ are the colors of the vertices. Furthermore, all missing transitions of the automaton lead to a rejecting sink state that is not drawn for the sake of readability.}
\label{fig-value-lowerbounds2}
\end{figure}
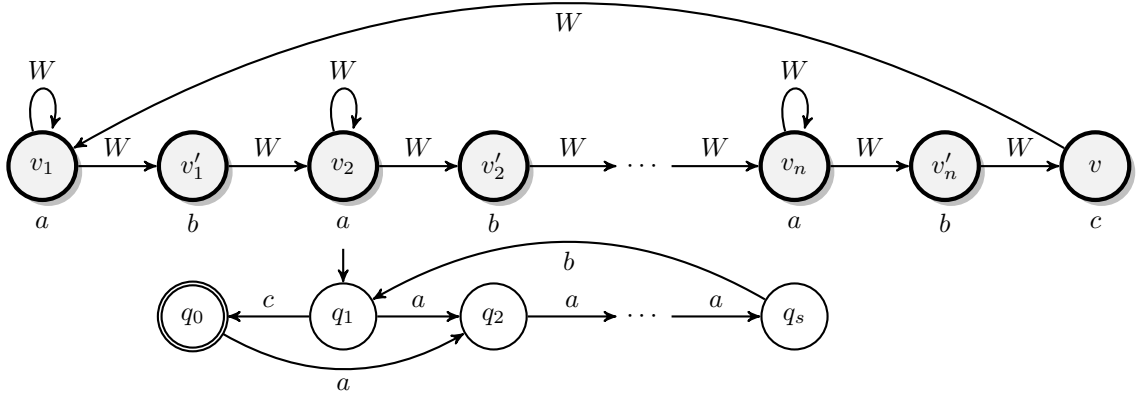
\end{enumerate}
\end{example}

The lower bound on the value presented above is tight while the lower bound on the memory is off by a factor of $f$, where $f$ is the number of accepting states of the automaton. We expect that the upper bound can be improved by removing the factor~$f$ by exploiting some monotonicity properties. In particular, this should be true in the case where we are not constructing a uniform optimal strategy, i.e., one that is optimal from every vertex. Recall the game presented in Example~\ref{example-limit-lowerbounds}.\ref{example-limit-lowerbounds-memory}: here, the factor~$n$ in the memory requirement is due to the fact that the strategy intuitively has to memorize the vertex~$v_j$ the play starts in in order to move to the corresponding $v_j'$ to achieve the optimal value. On the other hand,  a strategy that is only optimal from some fixed~$v_j$ does not have to store the initial vertex but can instead always move to $v_j'$ and thus only needs $v-1$ memory states. Whether the upper bound can be improved in this setting is left open for further work.

\section{Limit Games in Infinite Arenas}
\label{section-pushdown}
The (qualitative) winning region~$W_i(\game)$ of Player~$i$ in a regular weighted limit game~$\game$ contains all vertices~$v$ from which Player~$i$ has a winning strategy. 
In the previous section, we have considered a quantitative notion of winning by measuring the quality of strategies. For finite arenas, it turns out that our quantitative notion is a refinement of the qualitative one.

\begin{lemma}
\label{lemma-refinement}
Let $\game = (\arena, \lim(L(\aut)))$ be a regular weighted limit game and let $\sigma$ be an optimal strategy for Player~$0$ in $\game$. Then, $W_0(\game) = \set{v \mid \val_\game(\sigma,v) < \infty}$ and $W_1(\game) = \set{v \mid \val_\game(\sigma,v) = \infty}$.
\end{lemma}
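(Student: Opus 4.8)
The plan is to prove the two set equalities by showing $W_0(\game) = \set{v \mid \val_\game(\sigma,v) < \infty}$, from which the second equality follows immediately because the two winning regions partition $V$ (as regular limit games are Borel and hence determined, or more concretely because $\lim(L(\aut))$ is $\omega$-regular and such games are determined with respect to the qualitative winning condition). So the real work is the first equality, and I would establish it by two inclusions.

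For the inclusion $\set{v \mid \val_\game(\sigma,v) < \infty} \subseteq W_0(\game)$, I would invoke the already-proven Remark stating that $\val_\game(\sigma, v) < \infty$ implies $\sigma$ is a winning strategy for Player~$0$ from $v$. Hence every vertex of finite value under $\sigma$ lies in $W_0(\game)$. This direction is essentially immediate given the earlier remark. For the converse inclusion $W_0(\game) \subseteq \set{v \mid \val_\game(\sigma,v) < \infty}$, I would argue by contraposition: suppose $\val_\game(\sigma, v) = \infty$. Since $\sigma$ is \emph{optimal} (by Theorem~\ref{thm-limit-main}, such a strategy exists, and the hypothesis fixes $\sigma$ to be optimal), its value is pointwise minimal over all Player~$0$ strategies, so $\val_\game(\sigma', v) = \infty$ for every strategy~$\sigma'$ of Player~$0$. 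I then want to conclude that $v \in W_1(\game)$, i.e., that no strategy of Player~$0$ is winning from $v$. This is where I would use the strategy~$\tau$ for Player~$1$ constructed via the fixed point $\ranking^*$: by Lemma~\ref{lemma-limit-valtau} we have $\val_\game(\tau, v) \ge \ranking^*(v, \init(v))$, and the optimality argument shows $\ranking^*(v, \init(v)) = \val_\game(\sigma, v) = \infty$, so $\tau$ forces value $\infty$ against every Player~$0$ strategy from $v$.

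The bridge I then need is that forcing value $\infty$ from $v$ means forcing plays \emph{outside} $\lim(L(\aut))$, not merely plays with infinite value that still lie in $\lim(L(\aut))$ (recall Example~\ref{example-limit-valueunboundedness} shows a single play can have value $\infty$ while lying in $\lim(L(\aut))$). This is the main obstacle: infinite value of a strategy does not per se preclude winning the qualitative game. I would resolve it by exploiting finiteness of the arena~$\arena \times \mem_\aut$: if $\ranking^*(v,\init(v)) = \infty$ then $v$ lies in the region from which Player~$1$ can, in the reachability sense, avoid $F$ entirely or force an unbounded-weight infix. Concretely, since the rankings are bounded by $(\size{\arena}\cdot\size{\aut}+1)\cdot W$ whenever finite (Lemma~\ref{lemma-limit-termination-naive}), a rank of $\infty$ signals that from $(v,\init(v))$ Player~$1$ has a \emph{positional} strategy reaching a region where $F$ is unreachable within the appropriate sub-game; following $\tau$ from that point, Player~$1$ forces the play to visit $F$ only finitely often, i.e., $\col(\rho) \notin \lim(L(\aut))$, so $\rho$ is winning for Player~$1$. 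Thus $\tau$ is genuinely winning (qualitatively) for Player~$1$ from $v$, giving $v \in W_1(\game)$ and hence $v \notin W_0(\game)$, as required.

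Finally, once $W_0(\game) = \set{v \mid \val_\game(\sigma,v) < \infty}$ is established, the complementary equality $W_1(\game) = \set{v \mid \val_\game(\sigma,v) = \infty}$ follows by taking complements within $V$, using qualitative determinacy of $\omega$-regular games to guarantee $W_0(\game)$ and $W_1(\game)$ partition~$V$. The subtle point to handle carefully is the direction $v \in W_1(\game) \Rightarrow \val_\game(\sigma,v)=\infty$: this again uses that $\sigma$ winning from $v$ would contradict $v \in W_1(\game)$, combined with the contrapositive of the earlier remark, so no play consistent with $\sigma$ leaving $v$ can have all finite values unless $v \in W_0(\game)$. I expect the delicate bookkeeping to be in extracting, from $\ranking^*(v,\init(v))=\infty$, a \emph{qualitatively} winning Player~$1$ strategy rather than merely an infinite-value one, which is the crux of the whole argument.
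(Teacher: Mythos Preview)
Your proposal and the paper diverge on the hard direction, and the paper's route is considerably simpler. For $\val_\game(\sigma,v) = \infty \Rightarrow v \in W_1(\game)$, the paper does not touch $\tau$ or the ranking~$\ranking^*$ at all. It argues by contradiction: if $v \notin W_1(\game)$, then by determinacy $v \in W_0(\game)$, and by Büchi--Landweber Player~$0$ has a \emph{finite-state} winning strategy~$\sigma'$ from $v$. A straightforward pumping argument on the product of the arena with the memory of~$\sigma'$ shows that every play consistent with~$\sigma'$ from $v$ has value at most $n\cdot\size{V}\cdot W$ (where $n$ is the memory size), hence $\val_\game(\sigma',v) < \infty$. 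Optimality of $\sigma$ then gives $\val_\game(\sigma,v) \le \val_\game(\sigma',v) < \infty$, contradicting the assumption. That is the entire argument.

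Your route instead tries to promote $\tau$ from an infinite-\emph{value} strategy to a qualitatively winning strategy when $\ranking^*(v,\init(v)) = \infty$. You correctly flag this as the crux, but the justification you offer---that rank~$\infty$ ``signals that Player~$1$ has a positional strategy reaching a region where $F$ is unreachable'' and hence $\tau$ forces only finitely many visits to $F$---is asserted, not proven. Nothing in Lemma~\ref{lemma-limit-valtau} or the construction of $\tau$ directly yields that $\tau$ visits $F$ only finitely often from infinite-rank vertices; the proof of that lemma dispatches the infinite-rank case vacuously (any play has value~$\ge \infty$ is trivially fine) without analyzing the qualitative behaviour of $\tau$. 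Establishing your claim would amount to showing that the $\infty$-rank region coincides with Player~$1$'s Büchi winning region in $\arena \times \mem_\aut$, which is precisely the content of the lemma you are trying to prove. So as written there is a genuine gap. The paper sidesteps it entirely by invoking Büchi--Landweber and pumping, which you should use instead.
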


The previous refinement result relies on the finiteness of the arena. In fact, it is no longer valid in infinite arenas, even in very simple ones with unit weights. 

\begin{example}
\label{example-infinite}
Consider the infinite arena presented in Figure~\ref{fig-infinite} and $K = (ab^+c^+)^*ab^*$, i.e., Player~$0$ wins every play starting in the vertex colored by $a$. Furthermore, the value of a play is equal to the length of the longest infix with label sequence in $c^*a$. 

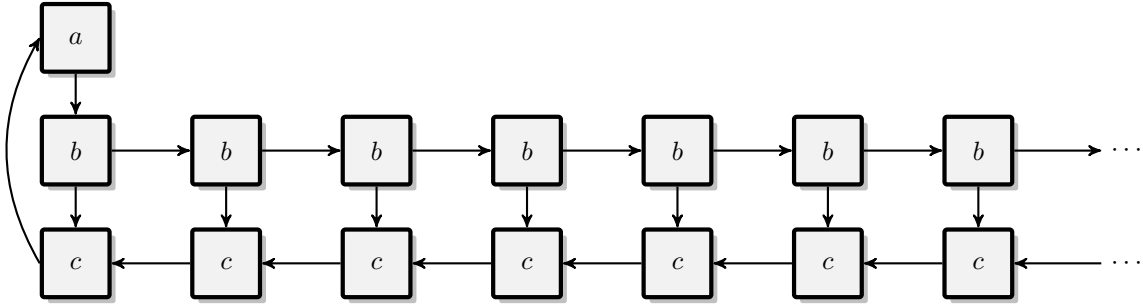
\begin{figure}
	\centering
	\begin{tikzpicture}
		\node[pl1] (v) at (0,0) {$a$};
		\node[pl1] (u0) at (0,-1.5) {$b$};
		\node[pl1] (u1) at (2,-1.5) {$b$};
		\node[pl1] (u2) at (4,-1.5) {$b$};
		\node[pl1] (u3) at (6,-1.5) {$b$};
		\node[pl1] (u4) at (8,-1.5) {$b$};
		\node[pl1] (u5) at (10,-1.5) {$b$};
		\node[pl1] (u6) at (12,-1.5) {$b$};
		\node[] (u7) at (14,-1.5) {$\cdots$};

		\node[pl1] (d0) at (0,-3) {$c$};
		\node[pl1] (d1) at (2,-3) {$c$};
		\node[pl1] (d2) at (4,-3) {$c$};
		\node[pl1] (d3) at (6,-3) {$c$};
		\node[pl1] (d4) at (8,-3) {$c$};
		\node[pl1] (d5) at (10,-3) {$c$};
		\node[pl1] (d6) at (12,-3) {$c$};
		\node[] (d7) at (14,-3) {$\cdots$};
				
		\path[->]
		(v) edge (u0)
		(u0) edge (u1)
		(u1) edge (u2)
		(u2) edge (u3)
		(u3) edge (u4)
		(u4) edge (u5)
		(u5) edge (u6)
		(u6) edge (u7)

		(u0) edge (d0)
		(u1) edge (d1)
		(u2) edge (d2)
		(u3) edge (d3)
		(u4) edge (d4)
		(u5) edge (d5)
		(u6) edge (d6)

		(d7) edge (d6)		
		(d6) edge (d5)		
		(d5) edge (d4)		
		(d4) edge (d3)		
		(d3) edge (d2)		
		(d2) edge (d1)		
		(d1) edge (d0)		
		(d0.west) edge[bend left] (v.west)		
		;
		
	\end{tikzpicture}
\caption{The arena for Example~\ref{example-infinite}. Vertices are labeled by their colors and every edge has weight~$1$.}
\label{fig-infinite}
\end{figure}

Now consider the play~$\rho$ with coloring
\[ abc\, abbcc \, abbbccc \, abbbbcccc \, abbbbbccccc \cdots .  \]
It is winning for Player~$0$, has value~$\infty$ (as the length of $c$-blocks is unbounded), and consistent with every strategy for Player~$0$, as Player~$1$ moves at every vertex. 

Hence, although Player~$0$ wins from the vertex with label~$a$, she does not have a strategy with finite value from this vertex. 
\end{example}

Note that the graph underlying the arena in Example~\ref{example-infinite} is a configuration graph of a one-counter machine, a particularly simple class of infinite graphs with many desirable decidability properties (see, e.g.,~\cite{Serre06} for games on such graphs). 
Nevertheless, quantitative winning no longer refines qualitative winning. 

As mentioned above, the proof of the refinement lemma relies crucially on the finiteness of the arena, which yields the upper bound on the values of an optimal strategy. Hence, on infinite arenas, there are three classes of vertices: those from which Player~$0$ can win with a bounded value, those from which she can win, but not with a bounded value, and those from which she cannot win at all. Thus, the landscape for infinite arenas is, in a sense, much more interesting than for finite arenas and being able to win even with a finite value is more useful than just being able to win.

\section{Related Work}
\label{section-relatedwork}
Quantitative infinite-duration games have received considerable attention, e.g., in the form of games with mean-payoff conditions~\cite{DBLP:journals/dam/BjorklundV07,EhrenfeuchtMycielski79,Puri/95/simprove, DBLP:journals/tcs/ZwickP96} and other payoff conditions~\cite{DBLP:conf/concur/BrihayeGHM15,DBLP:conf/mfcs/GimbertZ04,DBLP:journals/tcs/ZwickP96}, energy conditions~\cite{DBLP:journals/acta/BouyerMRLL18,DBLP:journals/acta/ChatterjeeRR14,DBLP:conf/birthday/JuhlLR13,TV87}, quantitative logics for specifying winning conditions~\cite{DBLP:journals/tocl/AlurETP01,DBLP:journals/iandc/FaymonvilleZ17,DBLP:journals/fmsd/KupfermanPV09,DBLP:journals/tcs/Zimmermann13,DBLP:journals/acta/Zimmermann18}, variations of the classical parity condition~\cite{DBLP:journals/tcs/ChatterjeeD12,DBLP:journals/tocl/ChatterjeeHH09,DBLP:conf/lics/ChatterjeeHJ05,DBLP:journals/corr/abs-1207-0663,DBLP:journals/lmcs/ScheweWZ19}, and other models~\cite{DBLP:conf/cav/BloemCHJ09,DBLP:conf/cav/BrazdilCKN12,DBLP:journals/iandc/BruyereFRR17}. Weighted limit games are related to some of these models.

In particular, the problem of determining the value of an optimal strategy in a weighted limit game is related to the optimal cover problem for one-dimensional consumption games~\cite{DBLP:conf/cav/BrazdilCKN12}. Such a game is also played in a weighted arena and while an edge with weight~$w$ is traversed, a battery is discharged by $w$ units. Furthermore, there are special edges that allow to recharge the battery to an arbitrary amount. Now, the optimal cover problem asks to compute the minimal battery capacity that allows Player~$0$ to play indefinitely without ever completely depleting the battery. 

As long as the arena does not contain any cycles consisting only of edges with weight $0$, one can turn a weighted limit game into a consumption game: After every visit to a vertex in $F$, the battery is recharged and then drained by the weight along the edges until $F$ is visited again. Now, one can show that the minimal sufficient capacity for the battery corresponds to the value of an optimal strategy. However, in the presence of cycles of weight~$0$, this correspondence no longer holds, as such a cycle is sufficient for Player~$0$ to not drain the battery, while this is not sufficient in a weighted limit game if the cycle does not contain a vertex from $F$. Formulated differently: consumption games have a safety winning condition while a limit game has a liveness condition.\footnote{Note that a visit to $F$ could be enforced by having a second dimension that implements a countdown timer that is decremented along each edge.}

On the other hand, synthesis of optimal strategies in weighted limit games can be seen as a special case of the optimization problem for Prompt-LTL with costs~\cite{DBLP:journals/acta/Zimmermann18}.\footnote{Prompt-LTL with costs is the fragment of Parametric LTL with costs allowing only one parameter, which is introduced in~\cite{DBLP:journals/acta/Zimmermann18} without a name.} This is an extension of classical LTL~\cite{DBLP:conf/focs/Pnueli77} by the prompt-eventually~$\mathbf{F}_P$~\cite{DBLP:journals/fmsd/KupfermanPV09}: The formula~$\mathbf{F}_P \varphi$ holds with respect to a bound~$k$ on some weighted trace~$\pi$, if $\pi$ can be decomposed into $\pi = \pi_0 \pi_1$ such that the weight of $\pi_0$ is at most $k$ and $\pi_1$ satisfies $\varphi$ with respect to $k$. Intuitively, $\varphi$ has to be satisfied within a prefix of weight at most $k$.
Now, the formula~$\mathbf{G}\mathbf{F}_P a$ with respect to a bound~$k$ expresses that the atomic proposition~$a$ holds infinitely often and that the weight between consecutive occurrences is bounded by $k$. So, computing the minimal $k$ for which Player~$0$ has a winning strategy for the game with winning condition~$\mathbf{G}\mathbf{F}_P a$, where $a$ holds exactly at the vertices in $F$, yields the value of an optimal strategy. Furthermore, a witnessing winning strategy can be computed~\cite{DBLP:journals/acta/Zimmermann18}. 

Finally, weighted limit games can be seen as a special case of two-color parity games with costs~\cite{DBLP:journals/corr/abs-1207-0663} (with binary encoding~\cite{DBLP:journals/lmcs/Weinert017}), a variant of parity games where Player~$0$ aims to minimize the weight between the occurrences of odd colors and the next larger even color. An optimal strategy for the parity game with costs~\cite{DBLP:journals/lmcs/Weinert017} is also optimal for the weighted limit game.

 However, all three approaches do not yield the fine-grained complexity analysis presented here, e.g., tight upper and lower bounds on the memory requirements and values of optimal strategies.

\section{Conclusion}
\label{section-conc}
In this work, we have considered the problem of computing optimal strategies in regular weighted limit games. Such strategies always exist in finite arenas, and are efficiently computable by a fixed point algorithm. Furthermore, we have shown that allowing negative weights leads to games without optimal strategies and how the relation between qualitative and quantitative winning is affected by considering infinite arenas.

The case of infinite arenas is also a promising direction for further work. We conjecture that our fixed point characterization can be lifted to limit games in infinite arenas as well, with some minor adaptions to account for infinite branching and using transfinite induction to obtain the fixed points. However, these are no longer effective, due to the infiniteness of the arena. Instead, it seems promising to consider saturation-based methods~\cite{Cachat02,CarayolH18}. 

Another direction for further work is concerned with more general definitions for the value of a play. Here, we have accumulated the weight of certain infixes. Instead, one could, e.g.,  consider the average weight of these infixes. 

Finally, another promising direction for further work concerns quantitative winning conditions, e.g., limit conditions, in games with imperfect information~\cite{DoyenRaskin11}.


\bibliographystyle{plain}
\bibliography{content/biblio}


\clearpage
\appendix
\section{Appendix}
In this appendix, we present the proof omitted in the main part.

\subsection{Proofs Omitted in Subsection~\ref{subsection-reachalgo}}

\subsubsection{Proof of Lemma~\ref{lemma-reachability-rankingproperties}}

Recall that we need to prove the following statements.

\begin{enumerate}

	\item\label{lemma-reachability-rankingproperties-infty}
	$\ranking^*(v) = \infty$ if and only if $t_s(v) = 0$.
	
	\item\label{lemma-reachability-rankingproperties-F} $v \in F$ implies $\ranking^*(v) = 0$ and $t_s(v) = 1$.
		
	\item\label{lemma-reachability-rankingproperties-playerzero}
	If $v \in V_0 \setminus F$ then $\ranking^*(v) \le \weight(v,v') + \ranking^*(v')$ for all successors~$v' \in vE$. Furthermore, there is some successor~$\overline{v} \in vE$ with $\ranking^*(v) = \weight(v,\overline{v}) + \ranking^*(\overline{v})$. Finally, if $ \ranking^*(v) < \infty $, then $\overline{v}$ can be chosen such that it additionally satisfies $t_s(v) = t_s(\overline{v}) + 1 $. 

	\item\label{lemma-reachability-rankingproperties-playerone}
	If $v \in V_1 \setminus F$ then $\ranking^*(v) \ge \weight(v,v') + \ranking^*(v')$ for all successors~$v' \in vE$.
	 Furthermore, there is some successor~$\overline{v} \in vE$ with $\ranking^*(v) = \weight(v,\overline{v}) + \ranking^*(\overline{v})$. 

	\item\label{lemma-reachability-rankingproperties-playeroneequality}
	If $v \in V_1 \setminus F$ and $\overline{v} \in vE$ with $\ranking^*(v) = \ranking^*(\overline{v}) < \infty$, then $t_s(v) > t_s(\overline{v})$.
	
\end{enumerate}

Before we prove these, we state some basic facts about settling times, which follow immediately from their definition.

\begin{remark}
\label{remark-reachability-settlingtimes}
 Let $v \in V$.
\begin{enumerate}
	\item\label{remark-reachability-settlingtimes-finalstep} If $t_s(v) > 0 $ then $\ranking_{t_s(v)}(v) < \ranking_{t_s(v)-1}(v)$.
	\item\label{remark-reachability-settlingtimes-successors} $t_s(v) \le \max_{v' \in vE} t_s(v') +1 $.
\end{enumerate}	
\end{remark}

Now, we are ready to prove Lemma~\ref{lemma-reachability-rankingproperties}.

\begin{proof}
Items~\ref{lemma-reachability-rankingproperties-infty} and \ref{lemma-reachability-rankingproperties-F} are trivial by definition of $\lift$.

\ref{lemma-reachability-rankingproperties-playerzero}) We have
\begin{equation}
\ranking^*(v) = \min \set{ \ranking^*(v), \min_{v' \in vE} \weight(v,v') +\ranking^*(v') } \label{eq1}
\end{equation}
by $\lift(\ranking^*) = \ranking^*$. Hence, $\ranking^*(v) \le \min_{v' \in vE} \weight(v,v') +\ranking^*(v')$, which implies $\ranking^*(v) \le \weight(v,v') +\ranking^*(v')$ for every $v' \in vE$, i.e., we have proven the first claim.

Now, towards a contradiction, assume there is no successor~$\overline{v} \in vE$ with $\ranking^*(v) = \weight(v,\overline{v}) + \ranking^*(\overline{v})$. Then, $\ranking^*(v) < \weight(v,v') + \ranking^*(v')$  for all successors~$v' \in vE$, which,  due to monotonicity, implies
\begin{equation}
\ranking^*(v) < \weight(v,v') + \ranking_j(v')
\label{eq1.5}
\end{equation}
for all successors~$v' \in vE$ and all $j$.
Furthermore, due to the strict inequality in Equation~\ref{eq1.5}, we have $\ranking^*(v) < \infty$ and therefore $t_s(v) > 0$ due to Item~\ref{lemma-reachability-rankingproperties-infty}.
Now, we have
\[
\ranking^*(v) = \ranking_{t_s(v)}(v) = \min \set{ \ranking_{t_s(v)-1}(v), \min_{v' \in vE} \weight(v,v') + \ranking_{t_s(v)-1}(v') } = \min_{v' \in vE} \weight(v,v') + \ranking_{t_s(v)-1}(v') > \ranking^*(v),
\]
which yields the desired contradiction. Here, the third equality is due to  Remark~\ref{remark-reachability-settlingtimes}.\ref{remark-reachability-settlingtimes-finalstep} and the final inequality due to Equation~\ref{eq1.5} 
for $j = t_s(v)-1$.

Finally, consider the case $\ranking^*(v) < \infty$. Then, $\ranking_{t_s(v)}(v) < \ranking_{t_s(v)-1}(v)$ due to Item~\ref{lemma-reachability-rankingproperties-infty} and Remark~\ref{remark-reachability-settlingtimes}.\ref{remark-reachability-settlingtimes-finalstep} This implies 
\begin{equation}
\ranking^*(v) = \ranking_{t_s(v)}(v) =  \min \set{ \ranking_{t_s(v)-1}(v), \min_{v' \in vE} \weight(v,v') + \ranking_{t_s(v)-1}(v') } = \weight(v,\overline{v}) + \ranking_{t_s(v)-1}(\overline{v})
\label{eq2.5}
\end{equation}
for some $\overline{v} \in vE$. It remains to prove $t_s(v) = t_s(\overline{v}) + 1 $, as this allows us then to rewrite Equation~(\ref{eq2.5}) to $  \ranking^*(v) = 
 \weight(v,\overline{v}) + \ranking^*(\overline{v})
$. Then, $\overline{v}$ has the desired properties. 

First, towards a contradiction, assume we have $t_s(v) \le t_s(\overline{v}) $. Then,
\[
\ranking_{t_s(\overline{v})}(\overline{v}) < \ranking_{t_s(\overline{v})-1}(\overline{v}) \le \ranking_{t_s(v)-1}(\overline{v})
\]
where the first inequality is due to Remark~\ref{remark-reachability-settlingtimes}.\ref{remark-reachability-settlingtimes-finalstep} and the second one due to monotonicity of the $\ranking_j$ and our assumption~$t_s(v) \le t_s(\overline{v})$. Hence,
\[
\ranking_{t_s(\overline{v})+1}(v) = \min \set{ \ranking_{t_s(\overline{v})}(v), \min_{v' \in vE} \weight(v,v') + \ranking_{t_s(\overline{v})}(v') } \le 
\weight(v,\overline{v}) + \ranking_{t_s(\overline{v})}(\overline{v}) < \weight(v, \overline{v}) + \ranking_{t_s(v)-1}(\overline{v}) = \ranking^*(v),
\]
which yields the desired contradiction to the definition of $\ranking^*$. Here, the last equality is due to Equation~(\ref{eq2.5}). Hence, we have $t_s(v) > t_s(\overline{v}) $. 

Finally, towards a contradiction, assume we have $t_s(v) > t_s(\overline{v}) + 1$, i.e., $t_s(\overline{v}) \le t_s(v)-2$. Then,
\begin{multline*}
	\ranking_{t_s(v)-1}(v) = 
\min \set{ \ranking_{t_s(v)-2}(v), \min_{v' \in vE} \weight(v,v') + \ranking_{t_s(v)-2}(v') }  \le \weight(v,\overline{v}) + \ranking_{t_s(v)-2}(\overline{v}) =\\ \weight(v,\overline{v}) + \ranking_{t_s(v)-1}(\overline{v}) = \ranking^*(v),
\end{multline*}
where the first equality is due to $\overline{v}$ being settled at step~$t_s(v)-2$ and the last equality is due to Equation~(\ref{eq2.5}). However, $\ranking_{t_s(v)-1}(v)$ being at most $\ranking^*(v)$ contradicts the definition of $t_s(v)$. Hence, we must have $t_s(v) = t_s(\overline{v}) + 1 $ as required.

\ref{lemma-reachability-rankingproperties-playerone}) Let $t = t_s(v)$ be  the settling time of $v$. If $t_s(v) = 0$, then $\ranking^*(v) = \infty$ as claimed in Item~\ref{lemma-reachability-rankingproperties-infty}. Hence, $\ranking^*(v) \ge \weight(v,v') + \ranking^*(v')$ for every successor~$v' \in vE$, as $\infty$ is the maximal rank. 

Now, assume we have $t_s(v) > 0$, which implies $\ranking^*(v) = \ranking_{t_s(v)}(v) < \ranking_{t_s(v)-1}(v)$ due to Remark~\ref{remark-reachability-settlingtimes}.\ref{remark-reachability-settlingtimes-finalstep}. Then, 
\[
\ranking^*(v) = \ranking_{t_s(v)}(v) = \min \set{ \ranking_{t_s(v)-1}(v), \max_{v' \in vE} \weight(v,v') +\ranking_{t_s(v)-1}(v') } = \max_{v' \in vE} \weight(v,v') +\ranking_{t_s(v)-1}(v').\]
Hence, 
\begin{equation}
\label{eq2}
\ranking^*(v) \ge \weight(v,v') +\ranking_{t_s(v)-1}(v') \ge \weight(v,v') +\ranking^*(v')
\end{equation}
for every $v' \in vE$, 
where the last inequality is due to monotonicity of the $r_j$. Thus, we have proven the first claim.

Furthermore, we have $\ranking^*(v) = \min\set{ \ranking^*(v), \max_{v'\in vE} \weight(v,v') + \ranking^*(v') } $ due to $\lift(\ranking^*) = \ranking^*$, which implies $\ranking^*(v) \le \max_{v'\in vE} \weight(v,v') + \ranking^*(v') $. Let $\overline{v}$ realize the maximum, i.e., 
\begin{equation}\ranking^*(v) \le  \weight(v,\overline{v}) + \ranking^*(\overline{v}).\label{eq3}
\end{equation}
Combining Inequalities~(\ref{eq2}) and (\ref{eq3}) yields the desired equality~$\ranking^*(v) =  \weight(v,\overline{v}) + \ranking^*(\overline{v})$.

\ref{lemma-reachability-rankingproperties-playeroneequality}) 
Towards a contradiction, assume $t_s(v) \le t_s(\overline{v})$. 
Then, we have $
\ranking_{t_s(\overline{v})}(\overline{v}) < \ranking_{t_s(\overline{v})-1}(\overline{v})   	\le  \ranking_{t_s(v)-1}(\overline{v})
$
due Remark~\ref{remark-reachability-settlingtimes}.\ref{remark-reachability-settlingtimes-finalstep} and monotonicity. Thus, 
\begin{align*}
\ranking^*(v) = \ranking_{t_s(v)}(v) = {}& \min \set{ \ranking_{t_s(v)-1}(v) , \max_{v' \in vE} \weight(v,v') +\ranking_{t_s(v)-1}(v') } \\
= {}& \max_{v' \in vE} \weight(v,v') +\ranking_{t_s(v)-1}(v')\\
\ge {}& \weight(v,\overline{v}) +\ranking_{t_s(v)-1}(\overline{v})\\
\ge {} & \ranking_{t_s(v)-1}(\overline{v})\\
>{} & \ranking_{t_s(\overline{v})}(\overline{v})\\
={}& \ranking^*(\overline{v}),
\end{align*}
which yields the desired contradiction to $\ranking^*(v) = \ranking^*(\overline{v})$.
\end{proof}

\subsubsection{Proof of Lemma~\ref{lemma-reach-sigmaoptimal}}

As already mentioned in the main part, we prove that $\sigma$ is optimal in two steps. In Lemma~\ref{lemma-reachability-valsigma}, we show that $\ranking^*(v, \init(v))$ is an upper bound on the value of all plays starting in $v$ that are consistent with $\sigma$. 
Then, in Lemma~\ref{lemma-reachability-valtau}, we show that Player~$1$ has a strategy to enforce a play with value greater or equal to $\ranking^*(v, \init(v))$ from $v$. Thus, there is no better strategy for Player~$0$ than $\sigma$, i.e., $\sigma$ is optimal.

\begin{lemma}
\label{lemma-reachability-valsigma}
We have $\val_\game^R(\sigma, v) \le \ranking^*(v, \init(v))$ for every vertex~$v$ of $\arena$.
\end{lemma}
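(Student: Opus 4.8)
The plan is to transfer the statement to the product arena $\arena \times \mem_\aut$, where $\ranking^*$ and the positional strategy $\sigma'$ live, and then track how the rank evolves along a play consistent with $\sigma'$. By Remark~\ref{remark-extendingplayspreservesconsistency}, a play $\rho$ starting in $v$ is consistent with $\sigma$ in $\arena$ if and only if its extension $\pi = \ext(\rho)$, which starts in $(v,\init(v))$, is consistent with $\sigma'$; moreover $\pi$ has the same weight and the same color sequence as $\rho$, and a prefix of $\rho$ has an accepted color sequence exactly when the corresponding prefix of $\pi$ ends in a vertex of $F$. Hence it suffices to show that every play $\pi = u_0 u_1 u_2 \cdots$ consistent with $\sigma'$ that starts in a vertex $u_0$ with $\ranking^*(u_0) < \infty$ reaches $F$, with accumulated weight at most $\ranking^*(u_0)$ up to the first such visit. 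The case $\ranking^*(u_0) = \infty$ is then vacuous, and taking the supremum over all consistent $\rho$ yields $\val_\game^R(\sigma,v) \le \ranking^*(v,\init(v))$.

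First I would establish the weight bound \emph{assuming} $F$ is reached. Let $m$ be the first index with $u_m \in F$. For each $j < m$ the vertex $u_j$ lies outside $F$, so Lemma~\ref{lemma-reachability-rankingproperties} applies: if $u_j \in V_0$ then $\sigma'$ moves to an optimal successor, giving $\ranking^*(u_j) = \weight(u_j,u_{j+1}) + \ranking^*(u_{j+1})$ by Item~\ref{lemma-reachability-rankingproperties-playerzero}; if $u_j \in V_1$ then every successor, in particular $u_{j+1}$, satisfies $\ranking^*(u_j) \ge \weight(u_j,u_{j+1}) + \ranking^*(u_{j+1})$ by Item~\ref{lemma-reachability-rankingproperties-playerone}. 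Telescoping these (in)equalities and using $\ranking^*(u_m) = 0$ from Item~\ref{lemma-reachability-rankingproperties-F} yields $\weight(u_0 \cdots u_m) \le \ranking^*(u_0)$. As $u_m \in F$, the prefix $u_0 \cdots u_m$ is accepting, so $\val_\game^R(\pi) \le \weight(u_0 \cdots u_m) \le \ranking^*(u_0)$. Note this step only uses rank (in)equalities, which hold for \emph{any} optimal successor choice; the boundary case $m=0$ is harmless since $\epsilon \notin L(\aut)$ forces the shortest accepting prefix to be nonempty.

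The main obstacle, and the reason the settling-time refinement of Lemma~\ref{lemma-reachability-rankingproperties} is needed, is to guarantee that $F$ is actually reached rather than avoided forever along zero-weight edges (a zero-weight cycle of constant finite rank outside $F$ would otherwise break the lemma). I would argue that, as long as $u_j \notin F$, the pair $(\ranking^*(u_j), t_s(u_j))$ strictly decreases in the lexicographic order. The rank is non-increasing by the (in)equalities above; if it stays constant across a step, that step has weight $0$, and in that case the settling time must strictly drop: for Player~$0$ this is because $\sigma'$ selects an optimal successor additionally satisfying $t_s(u_j) = t_s(u_{j+1})+1$ (such a successor exists by Item~\ref{lemma-reachability-rankingproperties-playerzero}, as $\ranking^*(u_j)<\infty$), and for Player~$1$ this is exactly Item~\ref{lemma-reachability-rankingproperties-playeroneequality}, since $\ranking^*(u_j) = \ranking^*(u_{j+1}) < \infty$ forces $t_s(u_j) > t_s(u_{j+1})$. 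Because $\ranking^*(u_0) < \infty$ and the rank never increases, all these pairs are pairs of nonnegative integers, and there is no infinite lexicographically decreasing chain; hence some $u_m \in F$ is reached, which closes the argument.
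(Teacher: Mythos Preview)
Your proof is correct and follows essentially the same approach as the paper: transfer to the product arena via Remark~\ref{remark-extendingplayspreservesconsistency}, telescope the rank (in)equalities from Lemma~\ref{lemma-reachability-rankingproperties} along the play to bound the accumulated weight up to the first visit to $F$, and use the lexicographic descent in $(\ranking^*,t_s)$ (via Items~\ref{lemma-reachability-rankingproperties-playerzero} and~\ref{lemma-reachability-rankingproperties-playeroneequality}) to guarantee that $F$ is actually reached. The paper presents the telescoping as an explicit induction and only afterwards invokes the descending-chain argument, whereas you reverse the order of exposition, but the content is identical.
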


\begin{proof}
We can assume w.l.o.g.\ $\ranking^*(v, \init(v)) < \infty$, as the statement is vacuously true otherwise.
Let $\rho_0 \rho_{1} \rho_{2} \cdots$ be a play in $\arena$ starting in $v$ and consistent with $\sigma$. Furthermore, let $q_0q_{1}q_{2} \cdots$ be the unique run of $\aut$ on $\col(\rho)$, i.e., $q_{j} = \delta^*(\col(\rho_0 \cdots \rho_{j-1}))$. In particular, $q_1 = \init(v)$. By construction of $\arena \times \mem_{\aut} $, $\ext(\rho) = (\rho_0, q_1) (\rho_1, q_2) (\rho_2, q_3) \cdots$ is a play in $\arena \times \mem_{\aut} $ (note the shift between the indexes).
   
   First, we show by induction over $j $, that there either is a $ j' < j$ with $(\rho_{j'}, q_{j' + 1}) \in F$ or that we have 
\begin{equation}
\label{eq-5}
\ranking^*(\rho_0, q_1) \ge \weight( \rho_0 \cdots \rho_{j} ) + \ranking^*(\rho_{j}, q_{j+1}) .	
\end{equation}
The induction start~$j = 0$ is trivial, as the statement simplifies to $\ranking^*(\rho_0, q_1) \ge \ranking^*(\rho_0, q_1)$.
Thus, assume we have $j > 0$ and $(\rho_{j'}, q_{j'+1}) \notin F$ for every $j' < j$ (otherwise, we are done). Then, the induction hypothesis yields $\ranking^*(\rho_0, q_1) \ge \weight( \rho_0 \cdots \rho_{j-1} ) + \ranking^*(\rho_{j-1}, q_{j})$.

If it is Player~$0$'s turn at $\rho_{j-1}$, then $(\rho_{j}, q_{j+1})$ is an optimal successor for $(\rho_{j-1}, q_{j})$ by the definition of $\sigma$. Hence, $\ranking^*(\rho_{j-1}, q_{j}) = \weight(\rho_{j-1}, \rho_{j}) + \ranking^*(\rho_{j}, q_{j+1}) $. Similarly, if it is Player~$1$'s turn at $\rho_{j-1}$, then we have $\ranking^*(\rho_{j-1}, q_{j}) \geq \weight(\rho_{j-1}, \rho_{j}) + \ranking^*(\rho_{j}, q_{j+1}) $ due to Lemma~\ref{lemma-reachability-rankingproperties}.\ref{lemma-reachability-rankingproperties-playerone}. Hence, applying this to the induction hypothesis yields the desired result in both cases: We have
\begin{align*}
	\ranking^*(\rho_0, q_1) \ge{} &{} \weight( \rho_0 \cdots \rho_{j-1} ) + \ranking^*(\rho_{j-1}, q_{j}) \\
	\ge{} & {}\weight( \rho_0 \cdots \rho_{j-1} ) + \weight(\rho_{j-1}, \rho_{j}) + \ranking^*(\rho_{j}, q_{j+1})\\
	= {}&{}  \weight( \rho_0 \cdots  \rho_{j}) + \ranking^*(\rho_{j}, q_{j+1}).
\end{align*}

Next, we prove by reductio ad absurdum that there is a $j$ with $(\rho_{j}, q_{j+1}) \in F$. 
If there is no such $j$, then Items~\ref{lemma-reachability-rankingproperties-playerzero} and \ref{lemma-reachability-rankingproperties-playeroneequality} of  Lemma~\ref{lemma-reachability-rankingproperties} yield that for every $j$, either $\ranking^*(\rho_{j}, q_{j+1}) > \ranking^*(\rho_{j+1}, q_{j+2}) $ or both  $\ranking^*(\rho_{j}, q_{j+1}) = \ranking^*(\rho_{j+1}, q_{j+2}) $ and $t_s(\rho_{j}, q_{j+1}) > t_s(\rho_{j+1}, q_{j+2}) $. However, this yields an infinite decreasing chain in a lexicographic order which has no such chains, i.e., we have obtained the desired contradiction.

Hence, let $j$ be minimal with $(\rho_{j}, q_{j+1}) \in F$. By construction, $q_{j+1} = \delta^*(\col(\rho_0 \cdots \rho_{j}))$ is an accepting state of $\aut$. This in turn implies $\col(\rho_0 \cdots \rho_{j}) \in L(\aut)$ by construction. 

Applying the definition of $\val_\game^R$, minimality of $j$, $\ranking^*(\rho_{j}, q_{j+1}) = 0$ due to Lemma~\ref{lemma-reachability-rankingproperties}.\ref{lemma-reachability-rankingproperties-F}, and Equation~\ref{eq-5} yields
\[\val_\game^R(\rho_0 \rho_{1} \rho_{2} \cdots ) = \weight(\rho_0 \cdots \rho_{j} ) =  \weight( \rho_0 \cdots \rho_{j } ) + \ranking^*(\rho_{j}, q_{j+1}) \le \ranking^*(\rho_0, q_1)  .
\]
As this inequality holds for every play that is consistent with $\sigma$ and starts in $v = \rho_0$ and due to $q_1 = \init(v)$, we conclude $\val_\game^R(\sigma, v) \le \ranking^*(v, \init(v))$.
\end{proof}

After having proved the upper bound~$\val_\game^R(\sigma, v) \le \ranking^*(v, \init(v))$ we now show that it is optimal by constructing a strategy for Player~$1$ that enforces a value of at least~$\ranking^*(v, \init(v))$ when starting in $v$, against any strategy of Player~$1$. Hence, there cannot be a better strategy that $\sigma$. 

We again define the strategy for Player~$1$ in $\arena$ by first defining a positional strategy~$\tau'$ in $\arena \times \mem_{\aut}$. From $v \in V_1 \setminus F$, $\tau'$ moves to an optimal successor of $v$. For vertices in $V_1 \cap F$, $\tau'$ moves to an arbitrary successor of $v$. 
Finally, let $\tau$ be the unique finite-state strategy implemented by $\mem_{\aut}$ and $\nxt_{\tau'}$, the next-move function induced by $\tau'$.

\begin{lemma}
\label{lemma-reachability-valtau}
We have $\val_\game^R(\rho) \ge \ranking^*(v, \init(v))$ for every vertex~$v$ of $\arena$ and every play~$\rho$ starting in $v$ and being consistent with $\tau$.
\end{lemma}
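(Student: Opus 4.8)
The plan is to establish this lower bound by an inductive argument dual to the one used for Lemma~\ref{lemma-reachability-valsigma}, exploiting that the roles of the two players---and hence the directions of the inequalities supplied by Lemma~\ref{lemma-reachability-rankingproperties}---are now swapped. Fix a vertex~$v$ and a play~$\rho = \rho_0 \rho_1 \rho_2 \cdots$ starting in $v$ and consistent with $\tau$, and let $q_0 q_1 q_2 \cdots$ be the run of $\aut$ on $\col(\rho)$, so that $\ext(\rho) = (\rho_0, q_1)(\rho_1, q_2)(\rho_2, q_3)\cdots$ is the induced play in $\arena \times \mem_\aut$ (with $q_1 = \init(v)$); by Remark~\ref{remark-extendingplayspreservesconsistency}, $\ext(\rho)$ is consistent with $\tau'$. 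I would prove by induction on $j$ that, as long as no accepting vertex has been seen yet, i.e.\ $(\rho_{j'}, q_{j'+1}) \notin F$ for all $j' < j$, the invariant
\[
\ranking^*(\rho_0, q_1) \le \weight(\rho_0 \cdots \rho_j) + \ranking^*(\rho_j, q_{j+1})
\]
holds. The base case $j = 0$ is the trivial identity $\ranking^*(\rho_0, q_1) \le \ranking^*(\rho_0, q_1)$.

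For the inductive step I would compare $\ranking^*(\rho_{j-1}, q_j)$ with $\weight(\rho_{j-1}, \rho_j) + \ranking^*(\rho_j, q_{j+1})$, noting that $(\rho_{j-1}, q_j) \notin F$ by assumption. If $(\rho_{j-1}, q_j) \in V_1 \setminus F$, then $\tau'$ moves to an optimal successor, so by Lemma~\ref{lemma-reachability-rankingproperties}.\ref{lemma-reachability-rankingproperties-playerone} this choice yields the equality $\ranking^*(\rho_{j-1}, q_j) = \weight(\rho_{j-1}, \rho_j) + \ranking^*(\rho_j, q_{j+1})$. If instead $(\rho_{j-1}, q_j) \in V_0 \setminus F$, then Lemma~\ref{lemma-reachability-rankingproperties}.\ref{lemma-reachability-rankingproperties-playerzero} gives $\ranking^*(\rho_{j-1}, q_j) \le \weight(\rho_{j-1}, \rho_j) + \ranking^*(\rho_j, q_{j+1})$ for \emph{every} successor, in particular the one Player~$0$ actually plays---this is precisely the direction we need, so Player~$0$ cannot decrease the bound by deviating. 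In both cases $\ranking^*(\rho_{j-1}, q_j) \le \weight(\rho_{j-1}, \rho_j) + \ranking^*(\rho_j, q_{j+1})$, and combining this with the induction hypothesis and additivity of $\weight$ propagates the invariant to $j$.

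To conclude I would distinguish two cases. If $F$ is never reached along $\ext(\rho)$, then $\val_\game^R(\rho) = \infty \ge \ranking^*(v, \init(v))$ trivially. Otherwise, let $j$ be minimal with $(\rho_j, q_{j+1}) \in F$; the invariant applies at this $j$, and by Lemma~\ref{lemma-reachability-rankingproperties}.\ref{lemma-reachability-rankingproperties-F} we have $\ranking^*(\rho_j, q_{j+1}) = 0$, so $\ranking^*(\rho_0, q_1) \le \weight(\rho_0 \cdots \rho_j)$. Since weights are nonnegative, the first accepting prefix is also the one of least accumulated weight, whence $\val_\game^R(\rho) = \weight(\rho_0 \cdots \rho_j) \ge \ranking^*(\rho_0, q_1) = \ranking^*(v, \init(v))$. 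I expect the main point requiring care to be purely organisational: unlike the proof of Lemma~\ref{lemma-reachability-valsigma}, no lexicographic descent on settling times is needed here, because we are content to fall back on value~$\infty$ whenever $F$ is avoided; the only delicate part is keeping the two inequality directions straight---optimal successors give an \emph{equality} for Player~$1$, whereas Lemma~\ref{lemma-reachability-rankingproperties}.\ref{lemma-reachability-rankingproperties-playerzero} must supply the correctly oriented inequality uniformly over all of Player~$0$'s moves.
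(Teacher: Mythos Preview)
Your proposal is correct and follows essentially the same approach as the paper's proof: the same inductive invariant~$\ranking^*(\rho_0,q_1) \le \weight(\rho_0\cdots\rho_j) + \ranking^*(\rho_j,q_{j+1})$, the same use of Lemma~\ref{lemma-reachability-rankingproperties}.\ref{lemma-reachability-rankingproperties-playerzero} and~\ref{lemma-reachability-rankingproperties}.\ref{lemma-reachability-rankingproperties-playerone} with swapped roles, and the same case split on whether $F$ is ever reached. If anything, your organisation is slightly cleaner than the paper's, which additionally splits on whether $\ranking^*(\rho_0,q_1)=\infty$; you correctly observe that this is unnecessary since the fallback to $\val_\game^R(\rho)=\infty$ already handles that case.
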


\begin{proof}
Consider a play~$\rho = \rho_0 \rho_1 \rho_2 \cdots $ in $\arena$ starting in $v$ and consistent with $\tau$. Furthermore, let $\ext(\rho) = (\rho_0, q_1) (\rho_1,q_2) (\rho_2, q_3) \cdots$ be its extended play, i.e., $q_0 q_1 q_2 q_3 \cdots$ is the run of $\aut$ on $\col(\rho)$ and $q_1 = \init(v)$.

Here, we show by induction over $j $ that there either is a $ j' < j$ with $(\rho_{j'}, q_{j' + 1}) \in F$ or that we have 
\begin{equation}
\label{eq-6}
\ranking^*(\rho_0, q_1) \le \weight( \rho_0 \cdots \rho_{j} ) + \ranking^*(\rho_{j}, q_{j+1}) .	
\end{equation}
The induction start~$j = 0$ is trivial, as the statement simplifies to $\ranking^*(\rho_0, q_1) \le \ranking^*(\rho_0, q_1)$.
Thus, assume we have $j > 0$ and $(\rho_{j'}, q_{j'+1}) \notin F$ for every $j' < j$ (otherwise we are done). Then, the induction hypothesis yields $\ranking^*(\rho_0, q_1) \le \weight( \rho_0 \cdots \rho_{j-1} ) + \ranking^*(\rho_{j-1}, q_{j})$.

If it is Player~$1$'s turn at $\rho_{j-1}$, then $(\rho_{j}, q_{j+1})$ satisfies $\ranking^*(\rho_{j-1}, q_{j}) = \weight(\rho_{j-1}, \rho_{j}) + \ranking^*(\rho_{j}, q_{j+1}) $ due to being an optimal successor. Similarly, if it is Player~$0$'s turn at $\rho_{j-1}$, then we have $\ranking^*(\rho_{j-1}, q_{j}) \leq \weight(\rho_{j-1}, \rho_{j}) + \ranking^*(\rho_{j}, q_{j+1}) $ due to Lemma~\ref{lemma-reachability-rankingproperties}.\ref{lemma-reachability-rankingproperties-playerzero}. Hence, applying this inequality to the induction hypothesis yields the desired result:
\begin{align*}
	\ranking^*(\rho_0, q_1) {}\le & {}\weight( \rho_0 \cdots \rho_{j-1} ) + \ranking^*(\rho_{j-1}, q_{j}) \\
	\le {}&{} \weight( \rho_0 \cdots \rho_{j-1} ) + \weight(\rho_{j-1}, \rho_{j}) + \ranking^*(\rho_{j}, q_{j+1})\\
	= {}& {} \weight( \rho_0 \cdots \rho_{j}) + \ranking^*(\rho_{j}, q_{j+1}).
\end{align*}

Now, we consider two cases: if $\ranking^*(\rho_0, q_1) = \infty$, we show that there is no $j$ with $(\rho_j, q_{j+1}) \in F$. Then, by definition of $\aut$, $\rho$ has no prefix in $L(\aut)$ (here, we use $\epsilon \notin L(\aut)$). Hence, $\val_\game^R(\rho) = \infty \ge \ranking^*(\rho_0, q_1) $. 

Thus, towards a contradiction, assume there is a $j$ (which we assume w.l.o.g.\ to be minimal) with $(\rho_j, q_{j+1}) \in F$. Then, Equation~(\ref{eq-6}) and Lemma~\ref{lemma-reachability-rankingproperties}.\ref{lemma-reachability-rankingproperties-F} yield 
\[
\infty = \ranking^*(\rho_0, q_1) \le \weight( \rho_0 \cdots \rho_{j} ) + \ranking^*(\rho_{j}, q_{j+1}) = \weight( \rho_0 \cdots \rho_{j} ) < \infty,	
\] 
which yields the desired contradiction. 

Finally, consider the case where $\ranking^*(\rho_0, q_1) < \infty$. If there is no $j$ with $(\rho_j, q_{j+1}) \in F$, then we have, as in the previous case, $\val_\game^R(\rho) = \infty > \ranking^*(\rho_0, q_1) $.  On the other hand, if there is a $j$ with $(\rho_j, q_{j+1}) \in F$ (which we again pick to be minimal with this property), then  $\rho_0 \cdots \rho_j \in L(\aut)$ and we have 
\[
\val_\game^R(\rho) = \weight(\rho_0 \cdots \rho_j) = \weight(\rho_0 \cdots \rho_j) + \ranking^*(\rho_j, q_{j+1}) \ge \ranking^*(\rho_0, q_1),
\]
again due to Equation~(\ref{eq-6}), minimality of $j$, and Lemma~\ref{lemma-reachability-rankingproperties}.\ref{lemma-reachability-rankingproperties-F}.
\end{proof}

\subsubsection{Proof of Lemma~\ref{lemma-reachability-termination}}

We need to prove $\ranking^* = \ranking_{\size{\arena} \cdot \size{\aut}+1}$.

\begin{proof}
It suffices to prove that the settling time of every vertex~$v$ in $\arena \times \mem_{\aut}$ is smaller than $\size{\arena} \cdot \size{\aut}+1$. This is in particular true for every $v$ with $\ranking^*(v) =\infty$, as those have settling time $0$.

Recall that we defined a positional strategy~$\sigma'$ for Player~$0$ in $\arena \times \mem_{\aut}$ that always moves from a vertex~$v \in V_0 \setminus F$ to an optimal successor. 
Now, given a vertex~$v$ of $\arena \times \mem_{\aut}$ with $ \ranking^*(v) < \infty $, let $d(v)$ be the length of the longest play prefix starting in $v$, consistent with $\sigma'$, and never visiting $F$. Using the descending-chain argument from the proof of Lemma~\ref{lemma-reachability-valsigma} shows that $d(v)$ is well-defined and bounded by $\size{\arena} \cdot \size{\aut}$: if there was a longer path, then it has a cycle and therefore we obtain an infinite play that is consistent with $\sigma'$, but never visits $F$. From this, one can again construct an infinite descending chain in the lexicographic order on the ranks and on the settling times. 

Hence, it suffices to prove by induction over $d(v)$ that we have $t_s(v) \le d(v)+1$ for every $v$ with $\ranking^*(v) < \infty$. The induction start is simple, as $d(v) = 0$ implies $v \in F$. Hence, $v$ has settling time~$1$ due to Lemma~\ref{lemma-reachability-rankingproperties}.\ref{lemma-reachability-rankingproperties-F}.

Now, consider a vertex~$v$ with $d(v) > 0$. First, we assume $v \in V_0$. Let $\sigma'(v) = v'$, which is an optimal successor of $v$ by  construction of $\sigma'$. As every play prefix that contributes to $d(v)$ visits $v'$ as second vertex, we conclude $d(v') = d(v)-1$, i.e., 
the induction hypothesis is applicable to $v'$. Hence, we have~$t_s(v) = t_s(v') + 1 \le d(v') + 1 +1 = d(v) +1$ due to Lemma~\ref{lemma-reachability-rankingproperties}.\ref{lemma-reachability-rankingproperties-playerzero} and the induction hypothesis.

Finally, consider the case $v \in V_1$. Let $v' \in vE$ be a successor of $v$. We have $d(v') \le d(v) - 1 $, i.e., the induction hypothesis is applicable. Hence, we have $t_s(v') \le d(v') + 1 \le d(v) $. Thus, the settling time of the successors of $v$ is at most $d(v)$. Hence, the settling time of $v$ is at most $d(v)+1$ due to Remark~\ref{remark-reachability-settlingtimes}.\ref{remark-reachability-settlingtimes-successors}. Thus, we have $t_s(v) \le d(v)+1$ as required. 
\end{proof}


\subsection{Proofs Omitted in Subsection~\ref{subsection-limitalgo}}

\subsubsection{Proof of Lemma~\ref{lemma-limit-termination-naive}}

We need to show that $\ranking_j(v)< \infty$ implies $\ranking_j(v) \le (\size{\arena} \cdot \size{\aut} +1) \cdot W$.

\begin{proof}
By induction over $j$. The induction start is trivial, as $\ranking_0$ maps every vertex to $0$. Now, assume the bound holds for $j \ge 0$. Consider the computation of $\ranking_{j+1}$: by induction hypothesis, every $\mathfrak{r}_h$, which is equal to $\ranking_{j}(v)$ for some $v \in V$, is bounded by $(\size{\arena} \cdot \size{\aut} +1) \cdot W$ or infinite. Furthermore, the ranks assigned by each $\ranking_{h}'$ are bounded by $(\size{\arena} \cdot \size{\aut}) \cdot W$ (Corollary~\ref{coro-reachability-upperboundvalues}) or infinite. Thus, the ranks assigned by each $\ranking_{h}''$ are bounded by $(\size{\arena} \cdot \size{\aut} +1) \cdot W$ or infinite. Hence, all finite values that can contribute to the ranks of $\ranking_{j+1}$ are bounded by $(\size{\arena} \cdot \size{\aut} +1) \cdot W$.
\end{proof}

\subsubsection{Proof of Lemma~\ref{lemma-limit-valsigma}}

We need to prove $\val_{\game}(\sigma, v) \le \ranking^*(v, \init(v)) $ for every $v$ in $\arena$.	

\begin{proof}
We assume $\ranking^*(v, \init(v)) < \infty$, as the result is trivial otherwise. 

Let $\rho= \rho_0 \rho_1 \rho_2 \cdots$ be a play consistent with $\sigma$ starting in $v$ and let $q_0 q_1 q_2 \cdots$ be the run of $\aut$ on $\rho$. By construction, $\rho' = (\rho_0, q_1)(\rho_1, q_2)(\rho_2, q_3) \cdots$ is a play that is consistent with $\sigma'$ and starts in $(v, \init(v))$. Furthermore, let $h_0 h_1 h_2 \cdots $ be the sequence of memory elements assumed by $\mem'$ during $\rho'$. In particular, $h_0 = \init'(v, \init(v))$. 

We say that a position~$j$ is a change-point, if $j =0$ or if $j > 0$ and $(\rho_j,q_{j+1}) \in F_{h_{j-1}} \subseteq F$. By construction of $\mem'$, we have $h_{j} = \init'(\rho_j, q_{j+1}) = h(\rho_j, q_{j+1})$ for every change-point~$j$. 
 
Let $j$ be a change-point. We show that there is a next change-point~$j' > j$ with $\weight((\rho_j, q_{j+1} )\cdots (\rho_{j'}, q_{j'+1})) \le \ranking^*(v, \init(v))$, i.e., the accumulated weight between change-points is bounded by $\ranking^*(v, \init(v))$. Recall that the weight of an infix of $\rho$ coincides with the weight of the corresponding infix of $\rho'$. Furthermore, $(\rho_j, q_{j+1}) \in F$ for every non-zero change-point implies that $q_{j+1} = \delta^*(\col(\rho_0 \cdots \rho_j))$ is an accepting state of $\aut$. Altogether, this implies $\val_\game(\rho) \le \ranking^*(v, \init(v))$. 

We prove this claim while showing inductively that the sequence~$(\ranking^*(\rho_j, q_{j+1}))_j$, where $j$ ranges over the change-points, is weakly decreasing. As the induction start and the induction step are similar, we deal with both at the same time, i.e., let $j$ be an arbitrary change-point. It suffices to show that there is a next change-point~$j' > j$ with $\weight((\rho_j, q_{j+1} )\cdots (\rho_{j'}, q_{j'+1})) \le \ranking^*(v, \init(v))$ and $\ranking^*(\rho_j, q_{j+1})\ge  \ranking^*(\rho_{j'}, q_{j'+1})$.  To this end, we need to distinguish two cases:

First, assume we have $(\rho_j, q_{j+1}) \in F_{h_{j}}$. Then, by definition of $\sigma'$ and by the definition of $\ranking_{h_{j}} = \complete_{F_{h_{j}}}(\ranking_{h_{j}}')$ we have 
\begin{equation}
\label{eqlimit1}
\ranking_{h_{j}} (\rho_j, q_{j+1}) \ge \weight((\rho_j, q_{j+1}), (\rho_{j+1}, q_{j+2})) + \ranking'_{h_{j}} ((\rho_{j+1}, q_{j+2})). 
\end{equation}
Now, we need to consider two subcases. First, if $(\rho_{j+1}, q_{j+2}) \in F_{h_j}$, which implies that $j+1$ is a change-point, then we have 
\[\weight( (\rho_{j}, q_{j+1}) (\rho_{j+1}, q_{j+2}) ) \le \ranking_{h_{j}} (\rho_j, q_{j+1}) \le \ranking^*( \rho_j, q_{j+1} ) \le \ranking^*( \rho_0, q_1 ) .\]
Here, the first inequality is Equation~(\ref{eqlimit1}) and $\ranking'_{h_{j}} (\rho_{j+1}, q_{j+2})$ being non-negative. The second inequality is due to the induction hypothesis. Thus, we have proven that a next change-point is reached with the required accumulated weight. Furthermore,
\begin{equation}
\label{eqFlimit}
\ranking^*(\rho_{j+1}, q_{j+2}) \le \mathfrak{r}_{h_j} \le \ranking^*(\rho_{j}, q_{j+1}),
\end{equation}
 where the first inequality is due to $(\rho_{j+1}, q_{j+2}) \in F_{h_j}$ and the second one is due to $ \ranking^*(\rho_j, q_{j+1}) \le \mathfrak{r}_{h(\rho_j, q_{j+1})} = \mathfrak{r}_{h_j} $. Here, we use the fact that $j$ is a change-point, which means that $h_j$ is updated to $h(\rho_j, q_{j+1})$.

In the second subcase, we have $(\rho_{j+1}, q_{j+2}) \notin F_{h_j}$. Then, $\sigma'$ behaves by construction like an optimal strategy to reach $F_{h_j}$ from $(\rho_{j+1}, q_{j+2})$ with accumulated weight at most $\ranking_{h_j}'(\rho_{j+1}, q_{j+2})$. Say $F_{h_j}$ is reached at position~$j' > j+1$. Then, $j' > j$ is a change-point and we have 
\begin{align*}
\weight( (\rho_{j}, q_{j+1}) \cdots (\rho_{j'}, q_{j'+1}) ) = & \weight((\rho_{j}, q_{j+1})(\rho_{j+1}, q_{j+2})) + \weight( (\rho_{j+1}, q_{j+2}) \cdots (\rho_{j'}, q_{j'+1}) ) \le \\
&\weight((\rho_{j}, q_{j+1})) + \ranking_{h_j}'(\rho_{j+1}, q_{j+2}))  \le \ranking_{h_{j}} (\rho_j, q_{j+1})
\end{align*}
 due to Equation~(\ref{eqlimit1}). Finally, $\ranking^*((\rho_{j+1}, q_{j+2})) \ge \ranking^*((\rho_{j'}, q_{j'}))$ is again implied by $(\rho_{j'}, q_{j'+1}) \in F_{h_j}$ (cf.\ Equation~(\ref{eqFlimit})).
 Thus, the change-point~$j'$ has all the desired properties.
 
 To conclude consider the case $(\rho_j, q_{j+1}) \notin F_{h_{j}}$. Here, we proceed analogously to the second subcase above: Now, $\sigma'$ behaves like an optimal strategy to reach $F_{h_j}$ from $(\rho_{j}, q_{j+1})$ with accumulated weight at most $\ranking_{h_j}'(\rho_{j}, q_{j+1})$. Hence, a changepoint~$j$ is reached with accumulated weight at most  $\ranking_{h_j}'(\rho_{j}, q_{j+1}) \le \ranking^*(\rho_{j}, q_{j+1}) \le \ranking^*(\rho_{0}, q_{1})$ as desired. Furthermore, $\ranking^*((\rho_{j+1}, q_{j+2})) \ge \ranking^*((\rho_{j'}, q_{j'}))$ holds due to the same arguments as above.
\end{proof}

\subsubsection{Proof of Lemma~\ref{lemma-limit-valtau}}

We need to prove $\val_{\game}(\tau, v) \ge \ranking^*(v, \init(v)) $ for every $v$ in $\arena$.	

Before we present the proof, let us note that adding goal vertices can only decrease the weight of reaching a goal vertex, an observation which is formalized in the next remark. It follows from a straightforward induction over the inflationary computation of the fixed point as given in Subsection~\ref{subsection-reachalgo}.

\begin{remark}
\label{remark-reachability-rankingmonotonicity-changingF}
Fix $F' \subseteq F''$ and let $\ranking'$ and $\ranking''$ be the least fixed points of $\lift_{F'}$ and $\lift_{F''}$, respectively. Then, $\ranking' \sqsubseteq \ranking''$, i.e., $\ranking'(v) \ge \ranking''(v)$ for every $ v \in V$. 
\end{remark}

Now, we are ready to prove Lemma~\ref{lemma-limit-valtau}.

\begin{proof} 
Let $\rho= \rho_0 \rho_1 \rho_2 \cdots$ be a play consistent with $\tau$ starting in $v$ and let $q_0 q_1 q_2 \cdots$ be the run of $\aut$ on $\rho$. By construction, $\rho' = (\rho_0, q_1)(\rho_1, q_2)(\rho_2, q_3) \cdots$ is a play that is consistent with $\tau'$ and starts in $(v, \init(v))$. Furthermore, let $m_0 m_1 m_2 \cdots $ be the sequence of memory elements assumed by $\mem'$ during $\rho'$. In particular, $m_0 = \init'(v, \init(v))$. 

We say that a position~$j$ is a change-point, if $j =0$ or if $j > 0$ and $(\rho_j,q_{j+1}) \in F $. By construction of $\mem'$, we have $m_{j} = \init'(\rho_j, q_{j+1}) = (\rho_j, q_{j+1}) $ for every change-point~$j$. Note that if $\rho$ has only finitely many change-points, then only finitely many prefixes of $\rho$ are accepted by $\aut$. Hence, we have $\val_\game(\rho) = \infty \ge \ranking^*(v, \init(v))$. For this reason, we only consider the case were $\rho$ has infinitely many change-points.

Now, let $j$ be a change-point and $j' > j$ be the next change-point. We show that either one of the following three possibilities holds:
\begin{enumerate}
	\item $\ranking^*(\rho_j, q_{j+1}) < \ranking^*(\rho_{j'}, q_{j'+1})$, i.e., the rank strictly increases.
	\item $\ranking^*(\rho_j, q_{j+1}) = \ranking^*(\rho_{j'}, q_{j'+1})$ and $t_s(\rho_j, q_{j+1}) > t_s(\rho_{j'}, q_{j'+1})$, i.e., the rank is constant, but the settling time strictly decreases.
	\item $\weight( (\rho_j, q_{j+1}) \cdots (\rho_{j'}, q_{j'+1}) ) \ge \ranking^*(\rho_j, q_{j+1})$, i.e., we have an infix of weight at least $\ranking^*(\rho_j, q_{j+1})$.
\end{enumerate}
Applying the other two possibilities inductively, we obtain that the first infix as in the third possibility has at least weight~$\ranking^*(\rho_0, q_{1})$, which implies $\val_\game(\rho) \ge \ranking^*(\rho_0, q_{1})$ as required. Thus, it remains to show that the third possibility eventually holds. But this is straightforward, since in the other two cases, either the rank strictly increases or the rank is constant and the settling time decreases. This cannot happen infinitely often, as there are only finitely many possible ranks and the settling times are non-negative.

To complete the proof, we show that either one of the three possibilities above holds for any pair of adjacent change-points~$j$ and $j'$. To this end, we consider the different types $(\rho_j, q_{j+1})$ can have.

If $(\rho_j, q_{j+1})$ has type zero, i.e., if $\ranking^*(\rho_j, q_{j+1}) = 0$ , then we have 
	\[
	\weight( (\rho_j, q_{j+1}) \cdots (\rho_{j'}, q_{j'+1}) ) \ge \ranking^*(\rho_j, q_{j+1})
	\]
	as the accumulated weight is always non-negative. Hence, the third possibility holds.
	
	 If $(\rho_j, q_{j+1})$ has type one, then $(\rho_j, q_{j+1}) \cdots (\rho_{j'}, q_{j'+1})$ is consistent with an optimal strategy with respect to $\ranking_{h(\rho_j, q_{j+1})}''$ and we consider two subcases.
	
If $(\rho_{j'}, q_{j'+1}) \in F_{h(\rho_j, q_{j+1})}$, then we have 
	\[
	\weight( (\rho_j, q_{j+1}) \cdots (\rho_{j'}, q_{j'+1}) ) \ge \ranking_{ h(\rho_j, q_{j+1}) }''(\rho_j, q_{j+1}) = \ranking^*(\rho_j, q_{j+1})
	\]
by optimality of the strategy and by definition of $h(\rho_j, q_{j+1})$.
Hence, the third possibility holds.

On the other hand, assume we have $(\rho_{j'}, q_{j'+1}) \in F \setminus F_{h(\rho_j, q_{j+1})}$. Then, let $h$ be minimal with $(\rho_{j'}, q_{j'+1}) \in F_h$, which implies 
\[	\mathfrak{r}_h = \ranking_{t_s(\rho_j, q_{j+1})-1}(\rho_{j'}, q_{j'+1}) \le \ranking^*(\rho_{j'}, q_{j'+1}). 
\]
Furthermore, due to Remark~\ref{remark-reachability-rankingmonotonicity-changingF} and $h > h(\rho_{j}, q_{j+1})$, we have $\ranking_h''(\rho_{j}, q_{j+1}) \le \ranking_{h(\rho_{j}, q_{j+1})}''(\rho_{j}, q_{j+1})$. Note that then
\[
\mathfrak{r}_h \leq \ranking_{h(\rho_{j}, q_{j+1})}(\rho_{j}, q_{j+1})
\]
contradicts the definition of $h(\rho_{j}, q_{j+1})$: then both $\ranking_h''(\rho_{j}, q_{j+1})$ and $\mathfrak{r}_h$ are at most $\ranking_{h(\rho_{j}, q_{j+1})}''(\rho_{j}, q_{j+1})$, i.e., $h(\rho_{j}, q_{j+1})$ is not the maximal index in the minimization.
Hence, altogether we have
\[
\ranking^*(\rho_{j}, q_{j+1}) = \ranking_{h(\rho_{j}, q_{j+1})}(\rho_{j}, q_{j+1}) < \mathfrak{r}_h \le \ranking^*(\rho_{j'}, q_{j'+1}),
\]
i.e., the first possibility holds.
	
	 If $(\rho_j, q_{j+1})$ has type two and we have $h(\rho_j, q_{j+1}) = 1$ then
	\[
	\ranking^*(\rho_{j'}, q_{j'+1}) \ge \ranking_{t_s(\rho_j, q_{j+1})-1}(\rho_{j'}, q_{j'+1}) = \mathfrak{r}_h 
	\]  
	for some $h$. As the $\mathfrak{r}_h$ are strictly increasing, we obtain $\ranking^*(\rho_j, q_{j+1}) \le \ranking^*(\rho_{j'}, q_{j'+1})$.
	
	Furthermore, $\ranking^*(\rho_{j'}, q_{j'+1}) = \mathfrak{r}_h$ implies that the settling time of $(\rho_{j'}, q_{j'+1})$ is strictly smaller than that of $(\rho_{j}, q_{j+1})$. Altogether, either the first or the second possibility holds.
	
	Finally, assume $(\rho_j, q_{j+1})$ has type two and we have $h(\rho_j, q_{j+1}) > 1$. Then, $(\rho_j, q_{j+1}) \cdots (\rho_{j'}, q_{j'+1})$ is consistent with an optimal strategy with respect to $\ranking_{h(\rho_j, q_{j+1})-1}''$. 
	If $(\rho_{j'}, q_{j'+1})$ is in $F_{h(\rho_j, q_{j+1})-1}$, then we have 
	\[
	\weight( (\rho_j, q_{j+1}) \cdots (\rho_{j'}, q_{j'+1}) ) \ge \ranking_{ h(\rho_j, q_{j+1})-1 }(\rho_j, q_{j+1}) > \mathfrak{r}_{h(\rho_j, q_{j+1})} = \ranking^*(\rho_j, q_{j+1}),
	\]
i.e., the third possibility holds. Here, the inequality~$\ranking_{ h(\rho_j, q_{j+1})-1 }(\rho_j, q_{j+1}) > \mathfrak{r}_{h(\rho_j, q_{j+1})} $ is by construction: we have $\mathfrak{r}_{h(\rho_j, q_{j+1})-1} < \mathfrak{r}_{h(\rho_j, q_{j+1})}$, but 
 \[
 \max\set{ \ranking_{h(\rho_j, q_{j+1})-1}'', \mathfrak{r}_{h(\rho_j, q_{j+1})-1} } > \max\set{ \ranking_{h(\rho_j, q_{j+1})}'', \mathfrak{r}_{h(\rho_j, q_{j+1})} } = \mathfrak{r}_{h(\rho_j, q_{j+1})},
  \]
 which yields the desired inequality.
 
On the other hand, assume $(\rho_{j'}, q_{j'+1})$ is in $F \setminus F_{h(\rho_j, q_{j+1})-1}$. Then,
\[
\ranking^*(\rho_{j'}, q_{j'+1}) \ge \mathfrak{r}_{h(\rho_j, q_{j+1})} = \ranking^*(\rho_{j}, q_{j+1}).
\]
Now, if we have $\ranking^*(\rho_{j'}, q_{j'+1}) = \ranking^*(\rho_{j}, q_{j+1})$ then also   $\ranking^*(\rho_{j'}, q_{j'+1}) = \mathfrak{r}_{h(\rho_j, q_{j+1})}$, which implies that the settling time of $(\rho_{j'}, q_{j'+1})$ is strictly smaller than the one of $\rho_{j}, q_{j+1}$, which is used to compute $\mathfrak{r}_{h(\rho_j, q_{j+1})}$. Hence, either the first or second possibility holds.
 \end{proof}

\subsubsection{Proof of Lemma~\ref{lemma-limit-termination}}

We need to show $\ranking^* = \ranking_{\size{F}+1}$.

\begin{proof}
We define for every play~$\rho$ a distance~$d(\rho)$ as follows: if $\rho$ visits $F$ finitely often, then $d(\rho)$ is defined to be the number of such visits. On the other hand, if $\rho$ visits $F$ infinitely often, then we define $d(v)$ to be the number of visits to $F$ before the first infix~$\rho_j \cdots \rho_{j'}$ of $\rho$ satisfying $ \rho_{j''} \notin F $ for all $j < j'' <j'$ and $\weight(\rho_j \cdots \rho_{j'}) \ge \ranking^*(\rho_0)$, where $\rho_0$ is the first vertex of $\rho$. We call such an infix a witness for $\rho$.
Now, we define $d(v) = \max_{\rho} d(\rho)$ where $\rho$ ranges over all plays that start in $v$ and are consistent with the strategy~$\tau$ for Player~$1$ defined above.

First, we show that $d(v)$ is at most $\size{F}$ for every $v$. Towards a contradiction, assume there is a play~$\rho$ contributing to $d(v)$ with $d(\rho) > \size{F}$. Recall that we termed visits to $F$ change-points in the proof of Lemma~\ref{lemma-limit-valtau} and showed that between any two adjacent changepoints (the first position is a change-point, too) either $\ranking^*$ strictly decreases, $\ranking^*$ stays constant and the settling time strictly decreases, or the infix between the change-points is a witness for $\rho$. The distance~$d(\rho)$ being greater than $F$ yields a repetition of a vertex in $F$ before a witness for $\rho$ appears (if it does at all). Using the lexicographic order induced by the ordering on the ranks and the settling times, we obtain the desired contradiction. 

To conclude the proof, we show $t_s(v) \le d(v)+1$ for all vertices~$v$ by induction over~$d(v)$. Here, we often use the trivial observation that $\ranking_j(v) \ge \ranking^*(v)$ implies $t_s(v) \le j$. 

First, let $d(v) = 0$ and consider first the case where $\ranking^*(v) = \infty$. Then, as there are no witnesses for a play~$\rho$ of value~$\infty$, every play starting in $v$ that is consistent with $\tau$ never visits $F$. This strategy witnesses that we have $\ranking_1(v) = \infty$, as it prevents Player~$0$ from reaching $F$. Now, consider the case $d(v) = 0$ and $\ranking^*(v) < \infty$. Then, every play starting in $v$ that is consistent with $\tau$ either does not visit $F$ at all or starts with a witnessing infix. Again, such a strategy witnesses $\ranking_1(v) \ge \ranking^*(v)$. In both cases, we have $t_s(v) = 1$.

Now, consider the case~$d(v) > 0$ and let $F' \subseteq F$ be the set of vertices $v' \in F$ such that there is a play starting in $v$, consistent with $\tau$, and where $v'$ is the first vertex in $F$ that is visited by $\rho$. By definition of $d$, we obtain $d(v') < d(v)$ for every $v' \in F'$. Hence, the induction hypothesis is applicable and we obtain $t_s(v') \le d(v') +1 $ for all $v' \in F'$. 

Let $t = \max_{v' \in F'} t_s(v')$. We prove below that we have $\ranking_{t+1}(v) \ge \ranking^*(v)$, which yields the desired result due to $t+1 \le d(v)+1$.

Thus, consider the computation of $\ranking_{t+1}(v) = \min_h \max\set{ \ranking_t(v) , \ranking_h''(v), \mathfrak{r}_h }$. If $\ranking_t(v) \ge \ranking^*(v)$, then we are done. Hence, it remains to consider the case with  $\ranking_t(v) < \ranking^*(v)$, which yields $\ranking_{t+1}(v) = \min_h \max\set{  \ranking_h''(v), \mathfrak{r}_h }$. Thus, we fix some $h$ and show that $\ranking_h''(v) < \ranking^*(v)$ implies $\mathfrak{r}_h \ge \ranking^*(v)$, which yields the desired result.

If $\ranking_h''(v) < \ranking^*(v)$ then Player~$0$ has a strategy~$\sigma_0$ to reach $F_h$ from $v$ with accumulated weight~$\ranking_h''(v)$ or smaller. Let $v' \in F_h$ be the vertex that is reached when Player~$1$ uses the strategy~$\tau$ against $\sigma_0$ starting in $v$. To conclude the proof, we assume towards a contradiction that we have $\mathfrak{r}_h < \ranking^*(v)$. Now, $v' \in F$ implies $\ranking_t(v) \le \mathfrak{r}_h < \ranking^*(v)$. Furthermore, as the settling time of $v'$ is at most $t$, we also have $\ranking^*(v') = \ranking_t(v') < \ranking^*(v)$. Hence, by Lemma~\ref{lemma-limit-valsigma}, Player~$0$ has a strategy~$\sigma$ that guarantees that the value of a play starting in $v'$ is at most $\ranking^*(v')$, and therefore strictly smaller than $\ranking^*(v)$.

Now, consider the following play~$\rho$ starting in $v$: Player~$1$ uses $\tau$ and Player~$0$ uses $\sigma_0$ until $v'$ is reached. Then, she uses $\sigma$ (discarding the history before the first visit of $v'$). As the play starts in $v$ and is consistent with $\tau$, we conclude $\val_\game(\rho) \ge \ranking^*(v)$. On the other hand, we claim $\val_\game(\rho) < \ranking^*(v)$, which yields the desired contradiction. The weight of the prefix of $\rho$ up to the first occurrence of $v'$ is strictly smaller than $\ranking^*(v)$ (due to the prefix being consistent with $\sigma_0$) and the value of the remaining suffix is also strictly smaller than $\ranking^*(v)$ (due to the suffix being consistent with $\sigma$). Hence, the value of the complete play is also strictly smaller than $\ranking^*(v)$.
\end{proof}

\subsection{Proofs Omitted in Section~\ref{section-pushdown}}

\subsubsection{Proof of Lemma~\ref{lemma-refinement}}

We need to show $W_0(\game) = \set{v \mid \val_\game(\sigma,v) < \infty}$ and $W_1(\game) = \set{v \mid \val_\game(\sigma,v) = \infty}$.

\begin{proof}
Assume $\val_\game(\sigma,v) < \infty$. Then, by definition of $\val_\game$ and Remark~\ref{remark-limit-valuegeneralizeswinning-play}, every play~$\rho$ that starts in $v$ and is consistent with $\sigma$ satisfies $c(\rho) \in \lim(K)$.  
Hence, $\sigma$ is a winning strategy from $v$, i.e., $v \in W_0(\game)$.

Now, assume $\val_\game(\sigma,v) = \infty$. Towards a contradiction, assume we have $v \notin W_1(\game)$. As $\lim(K)$ is Borel, $\game$ is determined~\cite{Martin75}, i.e., $v \in W_0(\game)$. Furthermore, as $\lim(K)$ is $\omega$-regular, Player~$0$ has a finite-state winning strategy for $\game$ from $v$~\cite{BuechiLandweber69}. 

Standard pumping arguments show that every play that starts in $v$ and is consistent with a finite-state winning strategy with $n$ memory elements has a value of at most $n\cdot\size{V}\cdot W$, where $V$ is the set of vertices of $\game$ and $W$ is the largest weight of $\game$. This contradicts $\val_\game(\sigma,v) = \infty$, i.e., we have derived the desired contradiction to $v \notin W_1(\game)$.

Thus, we have shown $\set{v \mid \val_\game(\sigma,v) < \infty} \subseteq W_0(\game)$ and $\set{v \mid \val_\game(\sigma,v) = \infty} \subseteq W_1(\game)$. As $\set{v \mid \val_\game(\sigma,v) < \infty} $ and $\set{v \mid \val_\game(\sigma,v) = \infty} $ partition the set of vertices, we obtain the desired result. 
\end{proof}

\end{document}